\newtheorem{new-claim}{Claim}
\newcommand{\vote}{\mathsf{vote}}
\newcommand{\capac}{\mathsf{cap}}
\newcommand{\Nbr}{\mathsf{nbr}}
\begin{document}
\pagestyle{plain}
\title{Popular matchings with two-sided preferences and one-sided ties\thanks{A preliminary version of this paper has appeared at ICALP~2015.}}
\author{\'{A}gnes Cseh\inst{1}\thanks{Work done while visiting TIFR, supported by the Deutsche Telekom Stiftung.}
\and Chien-Chung Huang\inst{2}\and Telikepalli Kavitha\inst{3}}
\institute{Reykjavik University, Iceland. \email{cseh@ru.is} \and Chalmers University, Sweden. \email{huangch@chalmers.se} \and Tata Institute of Fundamental Research, India. \email{kavitha@tcs.tifr.res.in}}
\maketitle

\begin{abstract}
We are given a bipartite graph $G = (A \cup B, E)$ where each vertex has a preference list
ranking its neighbors: in particular, every $a \in A$ ranks its neighbors in a strict order 
of preference, whereas the preference lists of $b \in B$ may contain ties. A matching $M$ is 
\emph{popular} if there is no matching $M'$ such that the number of vertices that prefer $M'$ 
to $M$ exceeds the number of vertices that prefer $M$ to~$M'$. We show that the problem of deciding whether 
$G$ admits a popular matching or not is $\mathsf{NP}$-hard. This is the case even when every 
$b \in B$ either has a strict preference list or puts all its neighbors into a single tie. 
In contrast, we show that the problem becomes polynomially solvable in the case when each 
$b \in B$ puts all its neighbors into a single tie. That is, all neighbors of $b$ are tied 
in $b$'s list 
and $b$ desires to be matched to any of them. Our main result is an $O(n^2)$ algorithm 
(where $n = |A \cup B|$) for the popular matching problem in this model. Note that this model 
is quite different from the model where vertices in $B$ have no preferences and do {\em not} 
care whether they are matched or not.
\end{abstract}

\section{Introduction}
\label{sec:intro}
We are given a bipartite graph $G = (A \cup B, E)$ where the vertices in $A$ are called 
applicants and the vertices in $B$ are called posts, and each vertex has a preference list 
ranking its neighbors in an order of preference. Here we assume that vertices in $A$ have 
strict preferences while vertices in $B$ are allowed to have ties in their preference lists. 
Thus each applicant ranks all posts that she finds interesting in a strict order of preference,
while each post need not come up with a total order on all interested applicants -- here 
applicants may get grouped together in terms of their suitability, thus equally competent 
applicants are tied together at the same rank. 

Our goal is to compute a {\em popular} matching in~$G$. The definition of popularity uses the 
notion of each vertex casting a ``vote'' for one matching versus another. A vertex $v$ 
{\em prefers} matching $M$ to matching $M'$ if either $v$ is unmatched in $M'$ and matched in 
$M$ or $v$ is matched in both matchings and $M(v)$ ($v$'s partner in $M$) is ranked better 
than $M'(v)$ in $v$'s preference list. In an election between matchings $M$ and $M'$, each 
vertex $v$ votes for the matching that it prefers or it abstains from voting if $M$ and $M'$ 
are equally preferable to~$v$. Let $\phi(M,M')$ be the number of vertices that vote for $M$ in 
an election between $M$ and~$M'$.

\begin{definition}
A matching $M$ is \emph{popular} if $\phi(M,M') \ge \phi(M',M)$ for every matching~$M'$.
\end{definition}

If $\phi(M',M) > \phi(M,M')$, then we say $M'$ is {\em more popular} than $M$ and denote it by 
$M' \succ M$; else~$M \succeq M'$. Observe that popular matchings need not always exist. 
Consider an instance where $A = \{a_1,a_2,a_3\}$ and $B = \{b_1,b_2,b_3\}$ and for $i = 1,2,3$, 
each $a_i$ has the same preference list which is $b_1$ followed by $b_2$ followed by $b_3$ 
while each $b_i$ ranks $a_1,a_2,a_3$ the same, i.e.\ $a_1,a_2,a_3$ are tied together in $b_i$'s 
preference list (see bottom left instance in Fig.~\ref{fig:AppA}). It is easy to see that for any matching $M$ 
here, there is another matching $M'$ such that $M' \succ M$, thus this instance admits no 
popular matching.

The popular matching problem is to determine if a given instance $G = (A\cup B,E)$ admits a 
popular matching or not, and if so, to compute one. This problem has been studied in the 
following two models.
\begin{itemize}
\item {\em 1-sided model:} here it is only vertices in $A$ that have preferences and cast 
votes; vertices in $B$ are objects with no preferences or votes.
\item {\em 2-sided model:} vertices on both sides have preferences and cast votes.
\end{itemize}

Popular matchings need not always exist in the 1-sided model and the problem of whether a 
given instance admits one or not can be solved efficiently using the characterization and 
algorithm from~\cite{AIKM05}. In the 2-sided model when all preference lists are strict, it 
can be shown that any stable matching is popular; thus a popular matching can be 
found in linear time using the Gale-Shapley algorithm. However when ties are allowed in 
preference lists on both sides, Bir\'o, Irving, and Manlove~\cite{BIM09} showed that the 
popular matching problem is $\mathsf{NP}$-complete. In this paper we focus on the following 
variant:
\begin{enumerate} [$\ast$]
\item it is only vertices in $A$ that have preference lists ranking their neighbors, 
however vertices on {\em both} sides cast votes.
\end{enumerate}

That is, vertices in $B$ have no preference lists ranking their neighbors -- however each 
$b \in B$ desires to be matched to any of its neighbors. Thus in an election between two 
matchings, $b$ abstains from voting if it is matched in both or unmatched in both, else it 
votes for the matching where it is matched. An intuitive understanding of such an instance 
is that $A$ is a set of applicants and $B$ is a set of tasks -- while each applicant has a 
preference list over the tasks that she is interested in, each task just cares to be assigned 
to anyone who is interested in performing it. 

For instance, each task is a building that seeks to have some guard assigned to it and it has no 
preferences over the identity of the guard. Another application is in the {\em many-to-one} 
popular matching problem  where each $b \in B$ also has a capacity $\capac(b)$ associated with 
it. Here we seek a popular matching that can match every $b \in B$ to up to $\capac(b)$-many 
neighbors and we need to devise natural and succinct rules to decide when $b$ prefers one subset 
over another. A possible model is to say that $b$ just cares to have enough partners in the 
matching and it does not care about the identities of these partners. That is, we say $b$ 
prefers $M_1$ to $M_2$ if $\capac(b) \ge |M_1(b)| > |M_2(b)|$, where $|M(b)|$ is the number of 
partners of $b$ in~$M$. 

In this paper we deal with the one-to-one setting of the above problem, i.e.\ $\capac(b) = 1$
for all $b \in B$. We will see in 
Section~\ref{sec:prelims} that this problem is significantly different 
from the popular matching problem in the 1-sided model where vertices in $B$ do not cast votes.  
We show the following results here, complementing our polynomial time algorithm in 
Theorem~\ref{thm:popular} with our hardness result in Theorem~\ref{thm:nphard}.

\begin{theorem}
\label{thm:popular}
Let $G = (A\cup B,E)$ be a bipartite graph where each $a \in A$ has a strict preference list 
while each $b \in B$ puts all its neighbors into a single tie. The popular 
matching problem in $G$ can be solved in $O(n^2)$ time, where $|A\cup B| = n$.
\end{theorem}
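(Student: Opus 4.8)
The plan is to characterise popular matchings through linear-programming duality and then to convert the characterisation into a direct combinatorial algorithm. As a preprocessing step I would append to every applicant $a$ a private degree-one last-resort post $\ell(a)$ at the bottom of $a$'s preference list, so that in the modified instance every popular matching is $A$-perfect and ``$a$ is unmatched in $G$'' simply means ``$M(a)=\ell(a)$''. A routine calculation then gives, for any two matchings $M$ and $M'$, the identity $\phi(M',M)-\phi(M,M')=\sum_{(a,p)\in M'}\mathrm{wt}_M(a,p)$, where $\mathrm{wt}_M(a,p)$ is the $\{+1,0,-1\}$ vote that $a$ casts for $p$ against $M(a)$, augmented by $+1$ whenever $p$ is a post left unmatched by $M$ (this $+1$ is exactly the vote $p$ would cast for becoming matched); in particular $\mathrm{wt}_M(a,M(a))=0$. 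Hence $M$ is popular if and only if $M$ is a maximum-weight $A$-perfect matching for the weights $\mathrm{wt}_M$, and, since the constraint matrix is totally unimodular, LP duality yields the certificate: $M$ is popular iff there is a function $\alpha\colon A\cup B\to\mathbb{Z}$ with $\alpha_p\ge 0$ for every post, $\alpha_p=0$ for every post unmatched by $M$, $\alpha_a+\alpha_p\ge\mathrm{wt}_M(a,p)$ for every edge $(a,p)$, and $\alpha_a+\alpha_{M(a)}=0$ for every $a\in A$.

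The next step is to put this certificate into a canonical two-layer form. Since $\alpha_a=-\alpha_{M(a)}\le 0$ for every $a$, a short normalisation --- cap all applicant values at $-1$ and recompute the post values --- forces $\alpha_a\in\{-1,0\}$ and $\alpha_b\in\{0,1\}$. Writing $A_{-1},A_0$ and $B_1,B_0$ for the resulting classes, feasibility of $\alpha$ unwinds into purely local conditions on $M$: $M$ restricted to $A_{-1}$ is a bijection onto $B_1$ and every $a\in A_{-1}$ is matched to her single most preferred neighbour; consequently every post of $B_1$ is saturated; $M(A_0)$ lies in $B_0$ together with the last-resort posts, and every $a\in A_0$ is matched to her most preferred neighbour \emph{among $B_0$ and her own last resort}; and, finally, no post left unmatched by $M$ is, for any of its neighbours, at least as good as that neighbour's partner. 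Conversely, from a partition of $A\cup B$ and a matching meeting these conditions one can write down a valid $\alpha$, so this is an exact characterisation of popularity.

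The algorithm then discovers the partition and the matching simultaneously by a Gale--Shapley-type proposal process, in the spirit of the procedure of~\cite{AIKM05} for the one-sided model. Applicants propose down their lists; a post moves into the upper class $B_1$ only when it can be matched to a proposer who ranks it \emph{first}, in which case it retains such a proposer and permanently rejects all other present and future suitors, who then continue down their now-shortened lists. If some post is ever the top remaining choice of two or more applicants none of whom ranks it first --- so that it can neither be safely placed in $B_1$ nor be handed to all of its suitors without creating envy --- the algorithm halts and reports that $G$ has no popular matching. Otherwise the cascade terminates, each surviving applicant is assigned her top remaining post (or her last resort), and the partition $(A_{-1},A_0,B_1,B_0)$ so produced yields the certificate $\alpha$ that proves the output popular. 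The total number of proposals is bounded by the sum of the preference-list lengths, which is $O(n^2)$, and this dominates the running time.

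The main obstacle I expect is the correctness of this cascade. One has to prove a Hall-/exchange-type lemma showing that the ``allowed partner'' restrictions produced by the promotions are globally satisfiable \emph{precisely} when $G$ admits a popular matching --- that is, that promoting an over-demanded post, committing it to one first-ranking suitor, and pushing the others down their lists never destroys an otherwise available popular matching --- and, conversely, that any matching the algorithm outputs genuinely admits a valid two-layer $\alpha$-certificate. Establishing that the proposal process is simultaneously sound and complete, while keeping the bookkeeping within $O(n^2)$, is where the real work lies; the two-layer normalisation of the certificate and the reading-off of the local conditions from LP duality are, by contrast, the routine part.
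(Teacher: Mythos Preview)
Your LP-duality framing is a reasonable starting point, but the two-layer normalisation is false, and this is not a cosmetic slip: it is exactly the place where the one-sided-tie model diverges from the strict two-sided model. Take the paper's very first example: $A=\{a_1,a_2,a_3\}$, $B=\{b_1,b_2,b_3\}$, with $a_1,a_2: b_1\succ b_2$ and $a_3: b_1\succ b_2\succ b_3$. The matching $M=\{(a_1,b_1),(a_2,b_2),(a_3,b_3)\}$ is popular. The covering constraint on $(a_3,b_2)$ forces $\alpha_{a_3}+\alpha_{b_2}\ge 1$; since $\alpha_{a_3}=-\alpha_{b_3}\le 0$ we need $\alpha_{b_2}\ge 1$, hence $\alpha_{a_2}\le -1$, and then $(a_2,b_1)$ forces $\alpha_{b_1}\ge 2$ and $\alpha_{a_1}\le -2$. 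No valid certificate fits into $\{-1,0\}\times\{0,1\}$; three levels are genuinely required. Consequently your local conditions (``every $a\in A_0$ is matched to her most preferred neighbour in $B_0$'') are not a characterisation of popularity, and a two-level Gale--Shapley cascade cannot be complete. Indeed, in this same example $a_3$ is popularly matched to her \emph{third} choice even though her second choice sits in the lower layer.

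A second, independent problem is the identity $\phi(M',M)-\phi(M,M')=\sum_{(a,p)\in M'}\mathrm{wt}_M(a,p)$. Because posts in this model vote $-1$ when they are matched in $M$ but unmatched in $M'$, and no edge of $M'$ is incident to such a post, your sum omits those $-1$'s. In the example above, with $M'=\{(a_2,b_1),(a_3,b_2),(a_1,\ell(a_1))\}$ your sum is $+1$ while $\phi(M',M)-\phi(M,M')=0$; so $M$ is popular yet not maximum-weight under $\mathrm{wt}_M$, and the LP certificate you rely on need not exist. This can be repaired (e.g.\ by symmetrising with dummy applicants for posts so that both sides are saturated), but it is not ``routine'' as you describe it.

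What the paper actually does is build a \emph{three}-way partition $\langle X,Y,Z\rangle$ of $B$ --- top, mid-level, and unwanted posts --- refined over $O(n)$ rounds: in each round it uses the Dulmage--Mendelsohn decomposition of a degree-$\le 2$ helper graph $H$ to detect ``even'' posts in $Y$ and demote them to $Z$, which in turn may strip $f$-posts of all their non-$\Nbr(Z)$ supporters and demote them from $X$ to $Y$. The structural work is a sequence of exchange lemmas (Claims~2--4 in the paper) showing that the algorithm's $\langle X,Y,Z\rangle$ dominates the partition $\langle L_1,L_2,L_3\rangle$ induced by \emph{any} popular matching, in the sense $X\supseteq L_1$ and $Z\subseteq L_3$; this is what gives completeness. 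Your proposal would need to be reworked to accommodate three levels and to replace the proposal cascade by this iterated even/odd analysis before it could succeed.
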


\begin{theorem}
\label{thm:nphard}
Let $G = (A\cup B,E)$ be a bipartite graph where each $a \in A$ has a strict preference list 
while each $b \in B$ either has a strict preference list or puts all its neighbors into a single 
tie. The popular matching problem in $G$ is $\mathsf{NP}$-complete. 
\end{theorem}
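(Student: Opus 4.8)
The plan is to first observe that the problem lies in $\mathsf{NP}$: given a matching $M$, one can decide in polynomial time whether $M$ is popular. Orient each edge $e = (a,b) \notin M$ and assign it a pair of votes $(\mathrm{vote}_a(e),\mathrm{vote}_b(e))$, where a vertex votes $+1$ if it strictly prefers $e$ to the edge of $M$ incident to it (an unmatched vertex preferring every incident edge), $0$ if indifferent, and $-1$ otherwise, with every edge of $M$ getting $(0,0)$. Then $M$ is popular iff $G$ contains no alternating path or alternating cycle with respect to $M$ whose total vote is strictly positive, a condition testable by a maximum-weight alternating-path/cycle computation. Hence a popular matching is a short certificate, and it remains to prove $\mathsf{NP}$-hardness.

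For hardness I would give a polynomial reduction from a suitably restricted version of SAT (for instance, one in which each variable occurs a bounded number of times), building an instance $G = (A \cup B, E)$ that admits a popular matching iff the input formula $\varphi$ is satisfiable. The construction uses two families of gadgets. For each variable $x_i$ there is a \emph{variable gadget} consisting of a handful of applicants together with a handful of posts carrying \emph{strict} preference lists, arranged so that the restriction of any popular matching to the gadget takes one of exactly two forms, call them $\mathsf{true}$ and $\mathsf{false}$. For each clause $C_j$ there is a \emph{clause gadget} containing a distinguished post $b_{C_j}$ that puts all of its neighbors into a \emph{single tie}; this post plays the role of a ``detector''. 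Literal occurrences are encoded by edges linking the variable gadgets to the clause gadgets in which they appear, set up so that when $x_i$ is in state $\mathsf{true}$ precisely the positive-occurrence edges at $x_i$ are available to be used, and symmetrically for $\mathsf{false}$. Note that this design places the hardness exactly in the coexistence of strict posts and full-tie posts, consistent with the polynomial cases of Theorem~\ref{thm:popular}.

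The core of the correctness argument consists of two structural facts. First, any popular matching $M$ of $G$ must put every variable gadget into one of its two prescribed configurations; I would prove this by showing that any other local configuration admits a short alternating cycle (or path) that, using the strict post preferences inside the gadget, strictly increases the vote and so contradicts popularity. Second, for each clause $C_j$ the detector $b_{C_j}$ is matched by $M$ iff the configurations of the incident variable gadgets set at least one literal of $C_j$ to true; moreover, if $b_{C_j}$ is left unmatched, one can start an alternating path at $b_{C_j}$ (which gains $+1$ the moment $b_{C_j}$ becomes matched) along which every other vote cancels, producing an $M' \succ M$. Putting these together, a popular matching of $G$ encodes a truth assignment that satisfies every clause; for the converse, from a satisfying assignment one builds $M$ by placing each gadget in the corresponding state and matching every detector through a true literal, and then argues this $M$ is popular.

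The hard part will be this converse, and more broadly the control of popularity as a \emph{global} property: an adversarial $M'$ may differ from $M$ inside several gadgets at once, so I must rule out a more popular matching that accumulates tiny gains across many gadgets and across the literal edges joining them. The standard remedy is to decompose $M \triangle M'$ into alternating paths and cycles and to bound the total vote of each component separately, showing that each component is essentially confined to a single gadget (plus the literal edges it touches) and that any ``leakage'' of a component through a satisfied clause costs it at least as much as it gains, so that no component is positive. Tuning the preference orders within the gadgets and the connecting edges so that this per-component accounting is exactly tight --- a satisfied clause never produces a positive component, an unsatisfied one always does --- is the crux; the remainder is routine bookkeeping of the instance size and of the vote counts.
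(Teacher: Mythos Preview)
Your plan is correct at the strategic level and matches the paper's: membership in $\mathsf{NP}$ via a polynomial popularity test, hardness via a gadget reduction from a bounded-occurrence SAT variant (the paper uses \textsc{(2,2)-e3-sat}), with variable gadgets admitting exactly two local popular states, clause gadgets that force at least one true literal, and a per-component analysis of $M\oplus M'$ for the converse.

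Where you diverge is in the gadget architecture, and here the paper's choices are almost the mirror image of yours. In the paper the \emph{variable} gadget is a $4$-cycle $a_{j_1}\,b_{j_1}\,a_{j_2}\,b_{j_2}$ whose posts $b_{j_1},b_{j_2}$ carry \emph{ties}; the only strict-preference posts are the $y_{i_k}$ sitting between the clause center and the variable gadgets. The clause gadget is a subdivided claw with center $c_i$ (a tie post) and leaves $y_{i_1},y_{i_2},y_{i_3}$; in any popular matching $c_i$ is always matched and exactly one $y_{i_k}$ is left unmatched. So the ``detector'' is not an unmatched center signalling an unsatisfied clause, but an unavoidably unmatched leaf $y_{i_k}$ that is harmless precisely when its interconnecting edge goes to an $a_{j_t}$ currently matched to $b_{j_1}$ (the ``true'' side). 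This inversion buys the paper a very short case analysis for the converse: the only non-matching edges with positive vote sum are those labeled $(+1,0)$, and they occur in exactly two places ($a_{j_t}$--$b_{j_1}$ and $x_{i_k}$--$c_i$), so one checks a handful of short alternating paths directly rather than appealing to a general per-component bound. Your scheme (strict posts in the variable gadget, a tie detector that is unmatched iff the clause fails) is plausible, but you should expect the tuning you flag as ``the crux'' to be more delicate than in the paper's layout, since your detector being matched in the satisfied case consumes a literal applicant and you must ensure this never creates a positive alternating path back through the variable gadget.
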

 
Theorem~\ref{thm:nphard} follows from a simple reduction from the \textsc{(2,2)-e3-sat} problem. The \textsc{(2,2)-e3-sat} problem takes as its input a Boolean formula $\mathcal{I}$ in CNF, where each clause contains three literals and every variable appears exactly twice in unnegated form and exactly twice in negated form in the clauses. The problem is to determine if $\mathcal{I}$ 
is satisfiable or not. This problem is $\mathsf{NP}$-complete~\cite{BKS03} and our reduction shows that the following
version of the 2-sided popular matching problem in $G = (A\cup B,E)$ with 1-sided ties is $\mathsf{NP}$-complete: 
\begin{itemize}
\item every vertex in $A$ has a strict preference list of length 2 or 4;
\item every vertex in $B$ has either a strict preference list of length~2 or a single tie of length~2 or 3 as a preference list.
\end{itemize}
Note that our $\mathsf{NP}$-hardness reduction needs $B$ to have $\Omega(|B|)$ vertices with 
strict preference lists and $\Omega(|B|)$ vertices with single ties as their preference lists.

Our algorithm that proves Theorem~\ref{thm:popular} performs a partition of the set $B$ into three sets: 
the first set $X$ is a subset of top posts and, roughly speaking, the second set $Y$ consists of 
{\em mid-level} posts, while the third set $Z$ consists of {\em unwanted} posts 
(see Fig.~\ref{fig:next}). Applicants get divided into two sets: the set of those with one or 
more neighbors in the set $Z$ (call this set $\Nbr(Z)$) and the rest (this set is 
$A\setminus\Nbr(Z)$).  

Our algorithm performs the partition of $B$ into $X, Y$, and $Z$ over several iterations. 
Initially $X = F$, where $F$ is the set of top posts, $Y = B \setminus F$, and $Z = \emptyset$.
In each iteration, certain top posts get {\em demoted} from $X$ to $Y$ and certain non-top 
posts get demoted from $Y$ to~$Z$. With new posts entering $Z$, we also have applicants
moving from $A\setminus\Nbr(Z)$ to $\Nbr(Z)$. Using the partition $\langle X,Y,Z\rangle$ of 
$B$, we will build a graph $H$ where each applicant keeps at most two edges: either 
to its most preferred post in $X$ and also in $Y$ or to its most preferred post in $Z$ and 
also in~$Y$. Some dummy posts may be included in~$Y$.

We prove that $G$ admits a popular matching if and only if $H$ admits an $A$-complete matching,
i.e.\ one that matches all vertices in~$A$. 
We show that corresponding 
to any popular matching in $G$, there is a partition $\langle L_1,L_2,L_3\rangle$ of $B$ into 
{\em top posts}, {\em mid-level posts}, and {\em unwanted posts} such that $X \supseteq L_1$ 
and $Z \subseteq L_3$, where $\langle X,Y,Z\rangle$ is the partition computed by our algorithm.
This allows us to show that if $H$ does not admit an $A$-complete matching, then $G$ has no 
popular matching. In fact, not every popular matching in $G$ becomes an $A$-complete matching 
in $H$ (Section~\ref{sec:algo} has such an example). However it will be the case that if $G$ 
admits popular matchings, then at least one of them becomes an $A$-complete matching in~$H$.

\medskip

\noindent{\bf Background.\ \ } Popular matchings have been well-studied in the 1-sided model 
\cite{AIKM05,KN08,Mah06,MS06,MI09,Mes06} where only vertices of $A$ have preferences and cast votes. 
Abraham et al.~\cite{AIKM05} gave efficient algorithms to determine if a given instance admits a 
popular matching or not -- their algorithm also works when preference lists of vertices in $A$ admit ties. 
The notions of {\em least unpopular} matchings~\cite{McC06} and popular {\em mixed} matchings~\cite{KMN09} 
were also proposed to deal with instances that had no popular matchings.

G\"ardenfors~\cite{Gar75}, who introduced the notion of popular matchings, considered
this problem in the domain of 2-sided preference lists. In any instance $G = (A \cup B,E)$ 
with 2-sided strict preference lists, a stable matching is actually a minimum size popular matching
and efficient algorithms for computing a maximum size popular matching were given in~\cite{HK11,Kav12}. 

\medskip

\noindent{\bf Organization of the paper.} Section~\ref{sec:prelims} has preliminaries.
Section~\ref{sec:algo} contains our algorithm and its proof of correctness. Section~\ref{sec:hardness}
shows our $\mathsf{NP}$-hardness result. We conclude with some open problems.

\section{Preliminaries}
\label{sec:prelims}

For any $a \in A$, let $f(a)$ denote $a$'s most desired, first choice post. Let $F = \{f(a): a \in A\}$ be the set of these top posts. We will refer to posts in $F$ as $f$-posts and to those in 
$B \setminus F$ as non-$f$-posts. For any $a \in A$, let $r_a$ be the rank of $a$'s most 
preferred non-$f$-post in $a$'s preference list; when all of $a$'s neighbors are in $F$, we 
set $r_a = \infty$. The following theorem characterizes popular matchings in the 
1-sided voting model. 
\begin{theorem}[from \cite{AIKM05}]
\label{thm1}
Let $G = (A \cup B,E)$ be an instance of the 1-sided popular matching problem, where each 
$a \in A$ has a strict preference list.
Let $M$ be any matching in~$G$. $M$ is popular if and only if the following two properties are 
satisfied:
\begin{enumerate} [(i)]
\item $M$ matches every $b \in F$ to some applicant $a$ such that $b = f(a)$;
\item $M$ matches each applicant $a$ to either $f(a)$ or its neighbor of rank~$r_a$.
\end{enumerate}
\end{theorem}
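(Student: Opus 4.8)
The plan is to prove both implications of the characterization, after one normalization: append to every applicant $a$'s list a pendant ``last resort'' post $\ell(a)$ adjacent only to $a$, so that ``$a$ is unmatched'' is identified with ``$M(a)=\ell(a)$''. This leaves popularity unchanged, makes every $r_a$ finite, and lets me write $s(a)$ for the neighbour of $a$ of rank $r_a$ (its most preferred non-$f$-post). Two quick consequences of popularity of $M$ that I would record first: if some $a$ were unmatched, $M+(a,\ell(a))$ would gain a net vote, so $M$ is $A$-complete; and if an $f$-post $b=f(a)$ were unmatched, $M-(a,M(a))+(a,b)$ would gain a net vote, so $M$ matches every $f$-post. (Call an applicant $a'$ a \emph{fan} of $b$ if $f(a')=b$.)

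For the ``if'' direction I would exhibit an injection from the voters for a rival $M'$ into the voters for $M$. If $M$ satisfies (i) and (ii), an applicant $a$ can vote for $M'$ only when $M(a)=s(a)$ and $M'(a)$ is ranked above $s(a)$; by the definition of $r_a$ any such post lies in $F$, so $b:=M'(a)\in F$. By (i), $M$ matches $b$ to a fan $a^{\ast}:=M(b)$ of $b$; since $s(a)\notin F$ we have $a^{\ast}\neq a$, and since no matching gives $b$ to both $a$ and $a^{\ast}$ we have $M'(a^{\ast})\neq b=f(a^{\ast})$, so $a^{\ast}$ votes for $M$. The map $a\mapsto M(M'(a))$ thus sends voters for $M'$ to voters for $M$, and it is injective because its value determines $M'(a)$; hence $\phi(M',M)\le\phi(M,M')$.

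For the ``only if'' direction I would argue the contrapositive. By the observations above it suffices to handle $M$ violating (ii): if (ii) holds then, every $f$-post being matched, (i) holds too, since an $f$-post $b$ matched to an $a$ with $f(a)\neq b$ would give $M(a)=b\in F\setminus\{f(a)\}$, contradicting $M(a)\in\{f(a),s(a)\}$. So fix $a_0$ with $M(a_0)\notin\{f(a_0),s(a_0)\}$ and build $M'$ along a \emph{promotion path}. If $M(a_0)$ is a non-$f$-post it is ranked below $s(a_0)$: promote $a_0$ to $s(a_0)$ and leave $M(a_0)$ free; the applicant displaced from $s(a_0)$ was not at its first choice (as $s(a_0)\notin F$), hence is promoted to its first choice; and we continue, each displaced applicant that was not at its first choice being promoted to its first choice and each that was at its first choice being demoted to its $s$-post, stopping the first time a free post is reached --- necessarily a non-$f$-post, since all $f$-posts are matched. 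If instead $M(a_0)$ is an $f$-post, then $a_0$ is not at its first choice: promote $a_0$ to $f(a_0)$ and additionally extend the path backward by handing $M(a_0)$ to a fan of it, whose old partner is then left free. In $M\oplus M'$ this yields an alternating path (plus at most one alternating cycle) whose only votes for $M$ come from displaced applicants moved from their first choice down to their $s$-post; each such applicant other than the one ending the path is immediately followed by a voter for $M'$ promoted to a first choice, and on top of that the path opens with two voters for $M'$ --- $a_0\to s(a_0)$ and the ensuing forced first-choice promotion, respectively a fan of $M(a_0)$ moving into it and $a_0\to f(a_0)$. Counting gives $\phi(M',M)-\phi(M,M')\ge 1$, so $M$ is not popular.

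The delicate part is exactly this last step. One must verify that the promotion path is well defined and never stalls --- which it cannot, since a promotion target that is an $f$-post is always already matched and each step consumes a fresh post --- that a self-intersection merely splits off one alternating cycle whose vote balance is also nonnegative (the small cases, such as the $4$-cycle forced when $M(a_0)\in F$ has a unique fan, are checkable by hand and even give surplus $2$), and above all that the overall balance is \emph{strictly} positive, which is precisely why two ``$M'$-gaining'' moves are arranged at the start of the path so as to offset the lone ``$M$-gaining'' move forced at its terminal demotion. Everything else --- the normalization, the two easy observations, and the ``if'' direction --- is routine.
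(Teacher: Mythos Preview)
The paper does not prove this theorem at all --- it is quoted verbatim from \cite{AIKM05} and used only as a point of contrast with the paper's own 2-sided model. There is therefore no proof in the paper to compare your attempt against.

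On the merits: your normalization and the ``if'' direction are clean and correct; the injection $a\mapsto M(M'(a))$ is exactly the right idea. In the ``only if'' direction the promotion-path strategy is sound in spirit, but your termination argument is not coherent as written. You assert that ``each step consumes a fresh post'' to guarantee termination, and then in the same breath allow for self-intersections that split off cycles; these two claims are in tension, and you never say what happens to the \emph{residual} path after a cycle is split off --- does it still begin with two $+1$ moves, and why does it still terminate? The open-ended construction is also unnecessary. Since you have already derived (from popularity) that $M$ is $A$-complete and matches every $f$-post, a bounded-length detour always suffices: if $M(a_0)$ is ranked worse than $r_{a_0}$, set $a_1=M(s(a_0))$ and $a_2=M(f(a_1))$ and take $M'$ to be $M$ with $a_0\to s(a_0)$, $a_1\to f(a_1)$, $a_2$ evicted; if $M(a_0)\in F\setminus\{f(a_0)\}$, pick a fan $a'$ of $M(a_0)$, set $a_2=M(f(a_0))$, and take $M'$ with $a'\to M(a_0)$, $a_0\to f(a_0)$, $a_2$ evicted. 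In each case two applicants strictly gain and at most one loses, and the only coincidences ($a_2=a_0$, resp.\ $a_2=a'$) collapse to a $4$-cycle where \emph{both} applicants gain. This replaces your unbounded path with a three-step argument whose case analysis is finite and checkable.
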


Thus the only applicants that may be left unmatched in a popular matching here are those 
$a \in A$ that satisfy $r_a = \infty$.

Let us consider the following example where $A = \{a_1,a_2,a_3\}$ and $B = \{b_1,b_2,b_3\}$: 
both $a_1$ and $a_2$ have the same preference list which is $b_1 > b_2$ ($b_1$ followed by 
$b_2$) while $a_3$'s preference list is $b_1 > b_2 > b_3$ (see the top left figure in 
Fig.~\ref{fig:AppA}). Assume first that only applicants cast votes.
The only posts that any of $a_1,a_2,a_3$ can be matched to in a popular matching here 
are $b_1$ and~$b_2$.  As there are three applicants and only two possible partners 
in a popular matching, there is no popular matching here. However in our 2-sided voting model, 
where posts also care about being matched and all neighbors are in a single tie, 
we have a popular matching $\{(a_1,b_1),(a_2,b_2),(a_3,b_3)\}$. Note that $b_3$ is ranked third 
in $a_3$'s preference list, which is worse than $r_{a_3} = 2$, however such edges are permitted 
in popular matchings in our 2-sided model.

Consider the following example (see the middle figure in Fig.~\ref{fig:AppA}): $A = \{a_0,a_1,a_2,a_3\}$ 
and $B = \{b_0,b_1,b_2,b_3\}$; both $a_1$ and $a_2$ have the same preference list which is $b_1 > b_2$ 
while $a_3$'s preference list is $b_1 > b_0 > b_2$ and $a_0$'s preference list is $b_0 > b_3$. There 
is again no popular matching here in the 1-sided model, however in our 2-sided voting model, we have 
a popular matching $\{(a_0,b_3),(a_1,b_1),(a_2,b_2),(a_3,b_0)\}$. Note that $b_0 \in F$ and here it is 
matched to $a_3$ and $f(a_3) \ne b_0$; also $a_3$ is matched to its second ranked post: this 
is neither its top post nor its  $r_{a_3}$-th ranked post ($r_{a_3} = 3$ here).

Thus popular matchings in our 2-sided voting model are quite different from the 
characterization given in Theorem~\ref{thm1} for popular matchings in the 1-sided model. Our 
algorithm (presented in Section~\ref{sec:algo}) uses the following decomposition.

\paragraph{Dulmage-Mendelsohn decomposition~\cite{dulmage}.} Let $M$ be a maximum matching in 
a bipartite graph $G = (A \cup B, E)$. Using $M$, we can partition $A \cup B$ into three 
disjoint sets: a vertex $v$ is \emph {even} (similarly, \emph {odd}) if there is an even 
(resp., odd) length alternating path with respect to $M$ from an unmatched vertex to~$v$.
Similarly, a vertex $v$ is \emph {unreachable} if there is no alternating path from an 
unmatched vertex to~$v$. Denote by $\mathcal E$, $\mathcal O$, and $\mathcal U$ the sets of 
even, odd, and unreachable vertices, respectively. The following properties (proved 
in~\cite{GGL95}) will be used in our algorithm and analysis.
\begin{itemize}
\item $\mathcal E$, $\mathcal O$, and $\mathcal U$ are pairwise disjoint. Let $M'$ be any
maximum matching in $G$ and let $\mathcal E'$, $\mathcal O'$, and $\mathcal U'$ be the sets 
of even, odd, and unreachable vertices with respect to $M'$, respectively. Then ${\mathcal E} 
= {\mathcal E'}$, ${\mathcal O} = {\mathcal O'}$, and ${\mathcal U} = {\mathcal U'}$.
\item Every maximum matching $M$ matches all vertices in ${\mathcal O} \cup {\mathcal U}$ and 
has size $|{\mathcal O}| + |{\mathcal U}|/2$. In $M$, every vertex in $\mathcal O$ is matched 
with some vertex in $\mathcal E$, and every vertex in $\mathcal U$ is matched with another 
vertex in~$\mathcal U$. 
\item The graph $G$ has no edge in ${\mathcal E}\times({\mathcal E}\cup{\mathcal U})$.
\end{itemize}

\section{Finding popular matchings in a 2-sided voting model}
\label{sec:algo}
The input is $G = (A\cup B, E)$ where each applicant $a \in A$ has a strict preference list 
while each post $b \in B$ has a single tie as its preference list. Our algorithm below builds 
a graph $H$ using a partition $\langle X,Y,Z\rangle$ of $B$ that is 
constructed in an iterative manner.
Initialize $X = F$, $Y = B \setminus F$, and $Z = \emptyset$. 

For any $a \in A$, recall that $r_a$ is the rank of $a$'s most preferred non-$f$-post.
For any $U \subseteq B$, let $\Nbr(U)$ (similarly, $\Nbr_H(U)$) denote the set of neighbors in 
$G$ (resp., in $H$) of the vertices in~$U$. Note that our algorithm will maintain
$\Nbr_H(X) \cap \Nbr(Z) = \emptyset$ by ensuring that $\Nbr_H(X) \subseteq A \setminus \Nbr(Z)$.

\begin{enumerate}[(I)]
\item While $\mathsf{true}$ do
\begin{enumerate}[1.]
\setcounter{enumii}{-1}
\item $H$ is the empty graph on $A \cup B$.
\item For each $a \in A \setminus \Nbr(Z)$ do: 

-- if $f(a) \in X$ then add the edge $(a,f(a))$ to~$H$.

\item For every $b \in X$ that is isolated in~$H$ do: 

-- delete $b$ from $X$ and add $b$ to~$Y$.

\item For each $a \in A$ do: 

-- let $b$ be $a$'s most preferred post in the set $Y$; if the rank of $b$ in $a$'s preference 
list is $\le r_a$ (i.e.\ $r_a$ or better), then add $(a,b)$ to~$H$.

\item Consider the graph $H$ constructed in steps~1-3.  Compute a maximum matching in~$H$. 
{\em [This is to identify ``even'' posts in~$H$.]}
\begin{itemize}
\item If there exist even posts in~$Y$ then delete all even posts from $Y$ and add them 
to~$Z$.
\item Else quit the While-loop.
\end{itemize}
\end{enumerate}

\smallskip

\item Every $a \in \Nbr(Z)$ adds the edge $(a,b)$ to $H$ where $b$ is $a$'s most preferred post 
in the set~$Z$. 

\smallskip

\item Add all posts in $D = \{\ell(a): a \in A\ \mathrm{and}\ r_a=\infty\}$ to $Y$, where 
$\ell(a)$ is the {\em dummy} last resort post of applicant~$a$. For every applicant $a$ such that 
$\Nbr(\{a\}) \subseteq X$, add the edge $(a,\ell(a))$ to~$H$. 
\end{enumerate}

Note that
introducing dummy posts does not interfere with the voting for popular matchings because 
dummy posts do not vote -- they are only present in the 
``helper'' graph $H$ constructed above and not in the popular matching instance~$G$. For any
applicant $a$, being matched to $\ell(a)$ is equivalent to $a$ being left unmatched. Thus any 
matching $M$ in $H$ can be projected to a matching in $G$, by deleting all $(a,\ell(a))$ 
edges from $M$ and for convenience, we will refer to the resulting matching also as $M$.

The condition for exiting the While-loop ensures that all posts in $Y$, and hence 
all in $X \cup Y$, are odd/unreachable in the subgraph of $H$ with the set of posts 
restricted to {\em real} posts in $X \cup Y$ (i.e.\ the non-dummy ones). So starting 
with a maximum matching in this subgraph and augmenting it after adding the edges on 
posts in $Z$ in phase~(II) and the edges on dummy posts in phase~(III), we get a maximum matching 
in $H$ that matches all real posts in~$X \cup Y$. After the construction of $H$, our 
algorithm for the popular matching problem in $G$ is given below.
\begin{itemize}
\item If $H$ admits an $A$-complete matching, then return one that matches all real posts 
in~$X \cup Y$; else output ``$G$ has no popular matching''.
\end{itemize}

In the rest of this section, we prove the following theorem.

\begin{theorem} 
\label{thm:correctness}
$G$ admits a popular matching if and only if $H$ admits an $A$-complete matching, i.e.\ one 
that matches all vertices in~$A$.
\end{theorem}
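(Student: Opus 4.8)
The plan is to prove the two implications separately, using throughout the standard alternating-path/cycle reformulation of popularity: a matching $M$ of $G$ is popular if and only if, for every matching $M'$, no connected component of $M\oplus M'$ (each such component being a path or an even cycle) has positive \emph{gain}, where the gain of such a component $\rho$ is the number of vertices on $\rho$ that strictly prefer their $M'$-edge to their $M$-edge minus the number that strictly prefer their $M$-edge. (This is obtained by decomposing $\phi(M',M)-\phi(M,M')$ over the components of $M\oplus M'$; in our model a post on $\rho$ contributes $+1$ if it is unmatched in $M$ and matched in $M'$, $-1$ in the reverse case, and $0$ otherwise, since its neighbours are all tied.) I would also keep in mind the intended three-level reading of $\langle X,Y,Z\rangle$: posts in $X$ are ``over-demanded top posts'', posts in $Y$ (and the dummies) are ``neutral'', and posts in $Z$ are ``under-demanded/unwanted''. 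This suggests the certificate I would attach to a matching living in $H$: label every post of $X$ with $+1$, every post of $Y$ (dummies included) with $0$, every post of $Z$ with $-1$, and give each applicant a label in $\{-1,0,+1\}$ determined by the type of its $H$-partner, so that around any alternating path or cycle the labels telescope and bound the gain by $0$.

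\textbf{Sufficiency ($\Leftarrow$).} Suppose $H$ has an $A$-complete matching; as the excerpt notes, since all real posts of $X\cup Y$ are odd/unreachable in the real-post subgraph of $H$, we may take one, say $M$, that saturates all real posts of $X\cup Y$. Project $M$ to $G$. Each $a\in A$ is matched by $M$ to one of: its first choice $f(a)\in X$; its best post in $Y$, whose rank is $\le r_a$; its best post in $Z$; or its dummy $\ell(a)$. I would set $a$'s label to $-1$ if $M(a)=f(a)\in X$, to $+1$ if $M(a)\in Z$, and to $0$ if $M(a)\in Y$ or $M(a)=\ell(a)$ (possibly refining this inside $\Nbr(Z)$), and then verify, edge by edge of $G$, that the two endpoint labels sum to $a$'s vote on an $M$-edge and to at least $a$'s vote on a non-$M$-edge. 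The only delicate checks are: an edge $(a,b)$ with $b$ preferred by $a$ to $M(a)$ and $b\in X$ --- here $b$ is matched in $M$ to some $a'$ with $f(a')=b$, and its label $+1$ absorbs $a$'s $+1$ vote; and an edge $(a,b)$ with $b\in Z$ while $M(a)\notin Z$ --- this cannot increase a gain precisely because $Z$ consists of posts that became ``even'' in some iteration and are therefore in surplus, which is exactly the work done by the repeated demotion into $Z$. Telescoping the labels around any component of any $M\oplus M'$ then shows its gain is $\le 0$, so $M$ is popular.

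\textbf{Necessity ($\Rightarrow$).} Suppose $G$ has a popular matching $M^{\ast}$. I would first extract, from popularity of $M^{\ast}$ via small alternating exchanges, a tripartition $\langle L_1,L_2,L_3\rangle$ of $B$: $L_1$ the posts matched by $M^{\ast}$ to an applicant whose first choice they are, $L_3$ (essentially) the posts left unmatched by $M^{\ast}$, and $L_2$ the rest, together with the fact that each applicant is matched by $M^{\ast}$ into $L_1$ at rank $1$, or into $L_2$ at a rank $\le r_a$, or into $L_3$, or is unmatched. The core is then an invariant, proved by induction over the While-loop: at the top of each iteration $X\supseteq L_1$ and $Z\subseteq L_3$, and $M^{\ast}$ (with the $Z$- and dummy-edges of phases (II),(III) available) extends to an $A$-complete matching of the current $H$. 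For the step one shows (a) a post $b$ demoted from $X$ in step~2 because it is isolated in $H$ cannot lie in $L_1$: otherwise the applicant $a$ with $M^{\ast}(a)=b=f(a)$ satisfies $a\notin\Nbr(Z)$ (using $Z\subseteq L_3$ from the previous invariant, since $a$ is matched by $M^{\ast}$ outside $L_3$), so step~1 keeps $(a,f(a))$ in $H$, contradicting isolation; and (b) a post $b$ that becomes even in step~3 must lie in $L_3$: this combines the Dulmage--Mendelsohn properties of $H$ (an even post is reachable by an $M$-alternating path from an unmatched, hence surplus, vertex) with a rerouting-plus-popularity argument against $M^{\ast}$ showing $b$ cannot be matched by $M^{\ast}$. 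Granting the invariant, at loop exit all real posts of $X\cup Y$ are odd/unreachable in the real-post subgraph, and the extendability clause of the invariant delivers the desired $A$-complete matching of $H$.

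\textbf{Main obstacle.} I expect step (b) to be the crux: ``even in $H$'' is a property of the helper graph built from \emph{truncated} preference lists, whereas ``unwanted in $M^{\ast}$'' concerns the full instance, and bridging the two requires carefully marrying the Dulmage--Mendelsohn decomposition of $H$ with a vote-counting exchange against $M^{\ast}$, while keeping the applicant bookkeeping across $\Nbr(Z)$ versus $A\setminus\Nbr(Z)$ consistent. A secondary difficulty, flagged in the excerpt, is that not every popular matching of $G$ maps to an $A$-complete matching of $H$; the invariant therefore has to be strong enough either to single out a popular matching that does, or to describe how to reroute $M^{\ast}$ into one, which is why the extendability clause (rather than a bare containment $X\supseteq L_1$, $Z\subseteq L_3$) must be carried along the induction.
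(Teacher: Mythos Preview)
Your sufficiency direction is essentially sound and is a legitimate alternative to the paper's argument. The paper does not use vertex labels; it labels non-$M$ edges by $a$'s vote and then proves directly (Lemma~2) that along any alternating path or cycle the $+1$'s cannot outnumber the $-1$'s by enough to beat the endpoint contributions. Your dual-certificate formulation (labels $+1/0/-1$ on $X/Y/Z$ and the complementary labels on applicants) encodes the same structural facts --- in particular, the key constraint $\Nbr_H(X)\cap\Nbr(Z)=\emptyset$ is exactly what makes the edge check for $b\in Z$, $M(a)\in X$ vacuous --- and telescopes to the same conclusion. That is a genuine, cleaner repackaging.

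The necessity direction, however, has a real gap. Your tripartition ``$L_3$ (essentially) the posts left unmatched by $M^{\ast}$'' is far too small, and your step~(b) --- ``showing $b$ cannot be matched by $M^{\ast}$'' --- is simply false: posts that get demoted to $Z$ are very often matched in $M^{\ast}$ (the first example in Fig.~1 already shows $b_3\in Z$ with $(a_3,b_3)\in M^{\ast}$). With your $L_3$, the invariant $Z\subseteq L_3$ fails at the very first demotion. The paper's $L_3$ is defined by an iterative closure driven by length-$5$ alternating paths carrying two consecutive $+1$ labels (steps~1--2 of the partition scheme), and only with that larger $L_3$ can one prove $T_k\subseteq L_3$ for each iteration~$k$ (Claims~2--4). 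Equally important, the paper does \emph{not} carry an ``extendability of $M^{\ast}$ inside $H$'' clause through the induction; instead it builds a second helper graph $G'$ from $\langle L_1,L_2,L_3\rangle$, shows $M^{\ast}\subseteq G'$ (Lemma~6), and then compares $G'$ with $H$ structurally (Theorem~5 and Claim~5): every edge of $G'$ missing in $H$ lands on a post in $(X\cap L_2)\cup(Y\cap L_3)$, and all such posts are odd/unreachable, so maximum matchings in $G'$ and $H$ have the same size. Your proposed shortcut (reroute $M^{\ast}$ directly into $H$) runs into exactly the obstacle the excerpt warns about --- not every popular matching sits inside $H$ --- and without the intermediate $G'$ you have no mechanism to pass from ``$M^{\ast}$ exists'' to ``some $A$-complete matching of $H$ exists''.
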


\subsubsection{Some examples.}

We present some examples here and describe how our algorithm builds the graph $H$ on these examples. 
Let $X_i,Y_i,Z_i$ denote the sets 
$X,Y,Z$ at the end of the $i$-th iteration of our algorithm and let $H_i$ denote
the graph $H$ in step~4 of the $i$-th iteration of our algorithm.

\tikzstyle{vertex} = [circle, draw=black, fill=black, inner sep=0pt,  minimum size=5pt]
\tikzstyle{edgelabel} = [circle, fill=white, inner sep=0pt,  minimum size=15pt]
\begin{center}
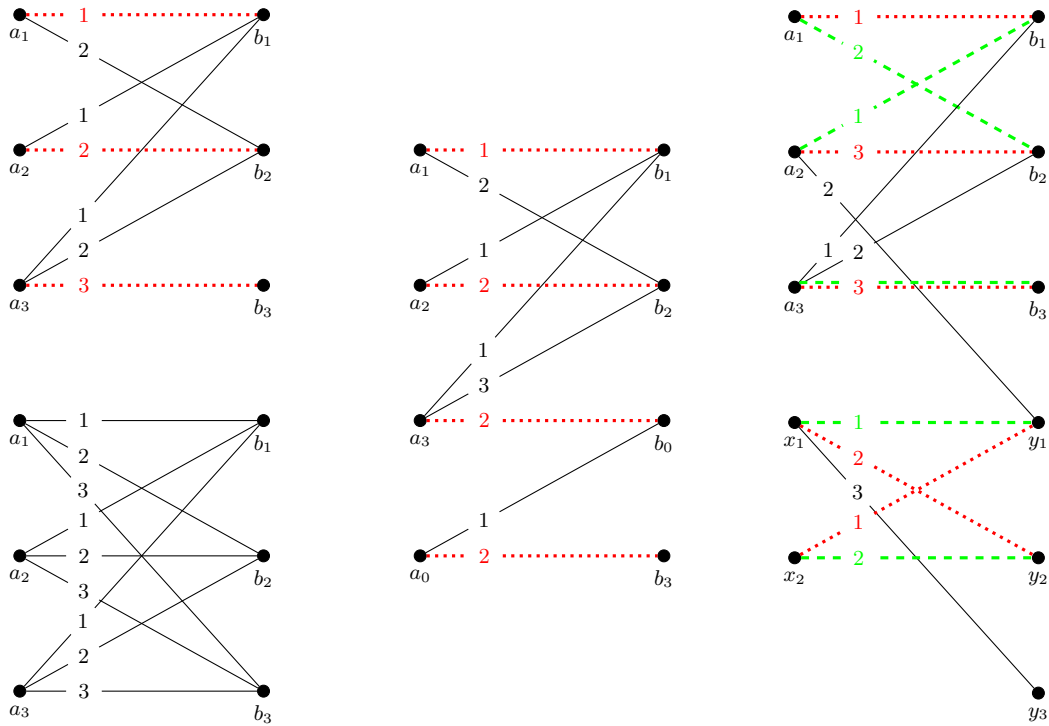
\begin{figure}[h!]
\centering
	\pgfmathsetmacro{\d}{2}
	\pgfmathsetmacro{\b}{3.6}
\begin{minipage}{0.28\textwidth}
\centering
\begin{tikzpicture}[scale=0.9, transform shape]

	\node[vertex, label=below:$a_1$] (a1) at (0,0) {};
	\node[vertex, label=below:$a_2$] (a2) at ($(a1) + (0, -\d)$) {};
	\node[vertex, label=below:$a_3$] (a3) at ($(a1) + (0, -2*\d)$) {};
	\node[vertex, label=below:$b_1$] (b1) at (\b,0) {};
	\node[vertex, label=below:$b_2$] (b2) at ($(b1) + (0, -\d)$) {};
	\node[vertex, label=below:$b_3$] (b3) at ($(b1) + (0, -2*\d)$) {};

	\draw [very thick, red, dotted] (a1) -- node[edgelabel, near start] {1} (b1);
	\draw [] (a1) -- node[edgelabel, near start] {2} (b2);
	\draw [] (a2) -- node[edgelabel, near start] {1} (b1);
	\draw [very thick, red, dotted] (a2) -- node[edgelabel, near start] {2} (b2);
	\draw [] (a3) -- node[edgelabel, near start] {1} (b1);
	\draw [] (a3) -- node[edgelabel, near start] {2} (b2);
	\draw [very thick, red, dotted] (a3) -- node[edgelabel, near start] {3} (b3);

	\node[vertex, label=below:$a_1$] (A1) at ($(a1) + (0, -6)$) {};
	\node[vertex, label=below:$a_2$] (A2) at ($(A1) + (0, -\d)$) {};
	\node[vertex, label=below:$a_3$] (A3) at ($(A1) + (0, -2*\d)$) {};
	\node[vertex, label=below:$b_1$] (B1) at (\b,-6) {};
	\node[vertex, label=below:$b_2$] (B2) at ($(B1) + (0, -\d)$) {};
	\node[vertex, label=below:$b_3$] (B3) at ($(B1) + (0, -2*\d)$) {};

	\draw [] (A1) -- node[edgelabel, near start] {1} (B1);
	\draw [] (A1) -- node[edgelabel, near start] {2} (B2);
	\draw [] (A1) -- node[edgelabel, near start] {3} (B3);
	\draw [] (A2) -- node[edgelabel, near start] {1} (B1);
	\draw [] (A2) -- node[edgelabel, near start] {2} (B2);
	\draw [] (A2) -- node[edgelabel, near start] {3} (B3);
	\draw [] (A3) -- node[edgelabel, near start] {1} (B1);
	\draw [] (A3) -- node[edgelabel, near start] {2} (B2);
	\draw [] (A3) -- node[edgelabel, near start] {3} (B3);

\end{tikzpicture}
\end{minipage}\hspace{0.7cm}\begin{minipage}{0.28\textwidth}
\centering
\begin{tikzpicture}[scale=0.9, transform shape]

	\node[vertex, label=below:$a_1$] (a1) at (0,0) {};
	\node[vertex, label=below:$a_2$] (a2) at ($(a1) + (0, -\d)$) {};
	\node[vertex, label=below:$a_3$] (a3) at ($(a1) + (0, -2*\d)$) {};
	\node[vertex, label=below:$a_0$] (a0) at ($(a1) + (0, -3*\d)$) {};
	\node[vertex, label=below:$b_1$] (b1) at (\b,0) {};
	\node[vertex, label=below:$b_2$] (b2) at ($(b1) + (0, -\d)$) {};
	\node[vertex, label=below:$b_0$] (b0) at ($(b1) + (0, -2*\d)$) {};
	\node[vertex, label=below:$b_3$] (b3) at ($(b1) + (0, -3*\d)$) {};

	\draw [very thick, red, dotted] (a1) -- node[edgelabel, near start] {1} (b1);
	\draw [] (a1) -- node[edgelabel, near start] {2} (b2);
	\draw [] (a2) -- node[edgelabel, near start] {1} (b1);
	\draw [very thick, red, dotted] (a2) -- node[edgelabel, near start] {2} (b2);
	\draw [] (a3) -- node[edgelabel, near start] {1} (b1);
	\draw [] (a3) -- node[edgelabel, near start] {3} (b2);
	\draw [very thick, red, dotted] (a3) -- node[edgelabel, near start] {2} (b0);
	\draw [] (a0) -- node[edgelabel, near start] {1} (b0);
	\draw [very thick, red, dotted] (a0) -- node[edgelabel, near start] {2} (b3);
	
\end{tikzpicture}
\end{minipage}
\hspace{0.7cm}\begin{minipage}{0.28\textwidth}
\begin{tikzpicture}[scale=0.9, transform shape]

	\node[vertex, label=below:$a_1$] (a1) at (0,0) {};
	\node[vertex, label=below:$a_2$] (a2) at ($(a1) + (0, -\d)$) {};
	\node[vertex, label=below:$a_3$] (a3) at ($(a1) + (0, -2*\d)$) {};
	\node[vertex, label=below:$x_1$] (x1) at ($(a1) + (0, -3*\d)$) {};
	\node[vertex, label=below:$x_2$] (x2) at ($(a1) + (0, -4*\d)$) {};
	\node[vertex, label=below:$b_1$] (b1) at (\b,0) {};
	\node[vertex, label=below:$b_2$] (b2) at ($(b1) + (0, -\d)$) {};
	\node[vertex, label=below:$b_3$] (b3) at ($(b1) + (0, -2*\d)$) {};
	\node[vertex, label=below:$y_1$] (y1) at ($(b1) + (0, -3*\d)$) {};
	\node[vertex, label=below:$y_2$] (y2) at ($(b1) + (0, -4*\d)$) {};
	\node[vertex, label=below:$y_3$] (y3) at ($(b1) + (0, -5*\d)$) {};
	
	\draw [very thick, red, dotted] (a1) -- node[edgelabel, near start] {1} (b1);
	\draw [very thick, green, dashed] (a1) -- node[edgelabel, near start] {2} (b2);
	\draw [very thick, green, dashed] (a2) -- node[edgelabel, near start] {1} (b1);
	\draw [very thick, red, dotted] (a2) -- node[edgelabel, near start] {3} (b2);
	\draw [] (a2) -- node[edgelabel, very near start] {2} (y1);
	\draw [] (a3) -- node[edgelabel, very near start] {1} (b1);
	\draw [] (a3) -- node[edgelabel, near start] {2} (b2);
	\draw [very thick, green, dashed] ($(a3) + (0.07, 0.07)$) -- ($(b3) + (-0.08, 0.07)$);
	\draw [very thick, red, dotted] (a3) -- node[edgelabel, near start] {3} (b3);
	\draw [very thick, green, dashed] (x1) -- node[edgelabel, near start] {1} (y1);
	\draw [very thick, red, dotted] (x1) -- node[edgelabel, near start] {2} (y2);
	\draw [] (x1) -- node[edgelabel, near start] {3} (y3);
	\draw [very thick, red, dotted] (x2) -- node[edgelabel, near start] {1} (y1);
	\draw [very thick, green, dashed] (x2) -- node[edgelabel, near start] {2} (y2);
\end{tikzpicture}
\end{minipage}

\caption{We have 4 examples here: except for the graph in bottom left, all the other graphs admit popular matchings and these are highlighted. In the graph on the extreme right, both the red  dotted and green dashed matchings are popular, however the matching $\{(a_1,b_1),(a_2,b_2),(a_3,b_3),(x_1,y_1),(x_2,y_2)\}$ in their union is {\em not} popular.}
\label{fig:AppA}
\end{figure}
\end{center}

In the first example (top left of Fig.~\ref{fig:AppA}), we have $A = \{a_1,a_2,a_3\}$ and $B = \{b_1,b_2,b_3\}$ 
and the preferences of applicants are denoted on the edges.  By our initialization, we have $X_0 = \{b_1\}$, 
$Y_0 = \{b_2,b_3\}$, and $Z_0 = \emptyset$. In step~4 of our first iteration, we identify $b_3$ as an even post in~$H_1$. 
So $Y_1 = \{b_2\}$ and $Z_1 = \{b_3\}$. In the second iteration, $a_3 \in \Nbr(Z_1)$ and so it has no edge to $b_1$ in~$H_2$. 
This is the last iteration of our algorithm. Our final graph $H$ has the edge set 
$\{(a_1,b_1),(a_2,b_1),(a_1,b_2),(a_2,b_2),(a_3,b_2),(a_3,b_3)\}$.

While the above example admits a popular matching,
consider the graph in the bottom left of Fig.~\ref{fig:AppA}. The first iteration of our algorithm is exactly the same
on this graph as it was with the earlier graph. We have $X_1 = \{b_1\}$, $Y_1 = \{b_2\}$, and $Z_1 = \{b_3\}$.
However in the second iteration, all the applicants $a_1,a_2,a_3$ become elements of $\Nbr(Z_1)$ and $b_1$ becomes an isolated vertex
in step~2, so $b_1$ becomes an element of~$Y_2$. In step~4 of the second iteration, $b_2$ is identified as an even post in $H_2$ 
as it is isolated in~$H_2$.  So $Y_2 = \{b_1\}$ and $Z_2 = \{b_2,b_3\}$. No demotions happen in the third iteration, 
which is the last iteration of our algorithm. Our final graph $H$ has the edge set $\{(a_1,b_1),(a_2,b_1),(a_3,b_1),(a_1,b_2),(a_2,b_2),(a_3,b_2)\}$. Observe that $H$ has no 
$A$-complete matching.

\smallskip

In the third example (middle of Fig.~\ref{fig:AppA}), we have $A = \{a_0,a_1,a_2,a_3\}$ and $B = \{b_0,b_1,b_2,b_3\}$ 
and the preferences of applicants are again denoted on the edges. In step~4 of the first iteration of this algorithm, the post $b_3$
is identified as an even vertex in $Y_0$ and it becomes an element of~$Z_1$. So $a_0 \in \Nbr(Z_1)$ and $b_0$ becomes isolated in step~2 
of the second iteration. So $b_0$ becomes an element of $Y_2$  and this is the last iteration of our algorithm. 
Our final graph $H$ has the edge set $\{(a_1,b_1),(a_2,b_1),(a_3,b_1),(a_1,b_2),(a_2,b_2),(a_3,b_0),(a_0,b_0),(a_0,b_3)\}$. 
This graph admits an $A$-complete matching $\{(a_1,b_1),(a_2,b_2),(a_3,b_0),(a_0,b_3)\}$.

\smallskip

The fourth example here (the rightmost graph in Fig.~\ref{fig:AppA}) is that of a  graph $G$ with several popular matchings. 
It is not the case that $H$ contains all these matchings.  At the end of our entire algorithm, we have $X = \{b_1,y_1\}$, $Y = \{b_2\}$,
and $Z = \{b_3,y_2,y_3\}$. The graph $H$ does not contain the edges $(a_3,b_1)$ and $(x_1,y_1)$ since $a_3$ and $x_1$ belong to~$\Nbr(Z)$. 
The subgraph $H$ admits an $A$-complete matching $M = \{(a_1,b_1),(a_2,b_2),(a_3,b_3),(x_2,y_1),(x_1,y_2)\}$ and this is a popular matching in~$G$.
However $H$ does not contain $M' = \{(a_1,b_2),(a_2,b_1),(a_3,b_3),(x_1,y_1)$, $(x_2,y_2)\}$, which is another popular matching in~$G$.
In fact, any subgraph that contains both $M$ and $M'$ would also contain the following $A$-complete matching
$\{(a_1,b_1),(a_2,b_2),(a_3,b_3),(x_1,y_1),(x_2,y_2)\}$, which is {\em not} popular.

\subsection{Proof of Theorem~\ref{thm:correctness}: the sufficient part} 
\label{sec:suff}

We first show that if $H$ admits an $A$-complete matching, then 
$G$ admits a popular matching. We have already observed that if $H$ admits an $A$-complete matching, then $H$ has an $A$-complete matching $M$ that matches all real posts in~$X \cup Y$.

A useful observation is that $Z \subseteq B \setminus F$. This is because in step~4 of the 
While-loop in our algorithm, all $f$-posts in $Y$ are odd/unreachable in $H$ as they are the 
only neighbors in $H$ of applicants who regard them as $f$-posts.

We now assign edge labels in $\{\pm 1\}$ to all edges in~$G \setminus M$:
for an edge $(a,b)$ in $G \setminus M$, if $a$ prefers $b$ to $M(a)$, 
then we label this edge $+1$, else we label this~$-1$. The label of $(a,b)$ is basically $a$'s 
vote for $b$ vs~$M(a)$. Fig.~\ref{fig:next} is helpful here.

\begin{figure}[h]
\centerline{\resizebox{0.65\textwidth}{!}{\input{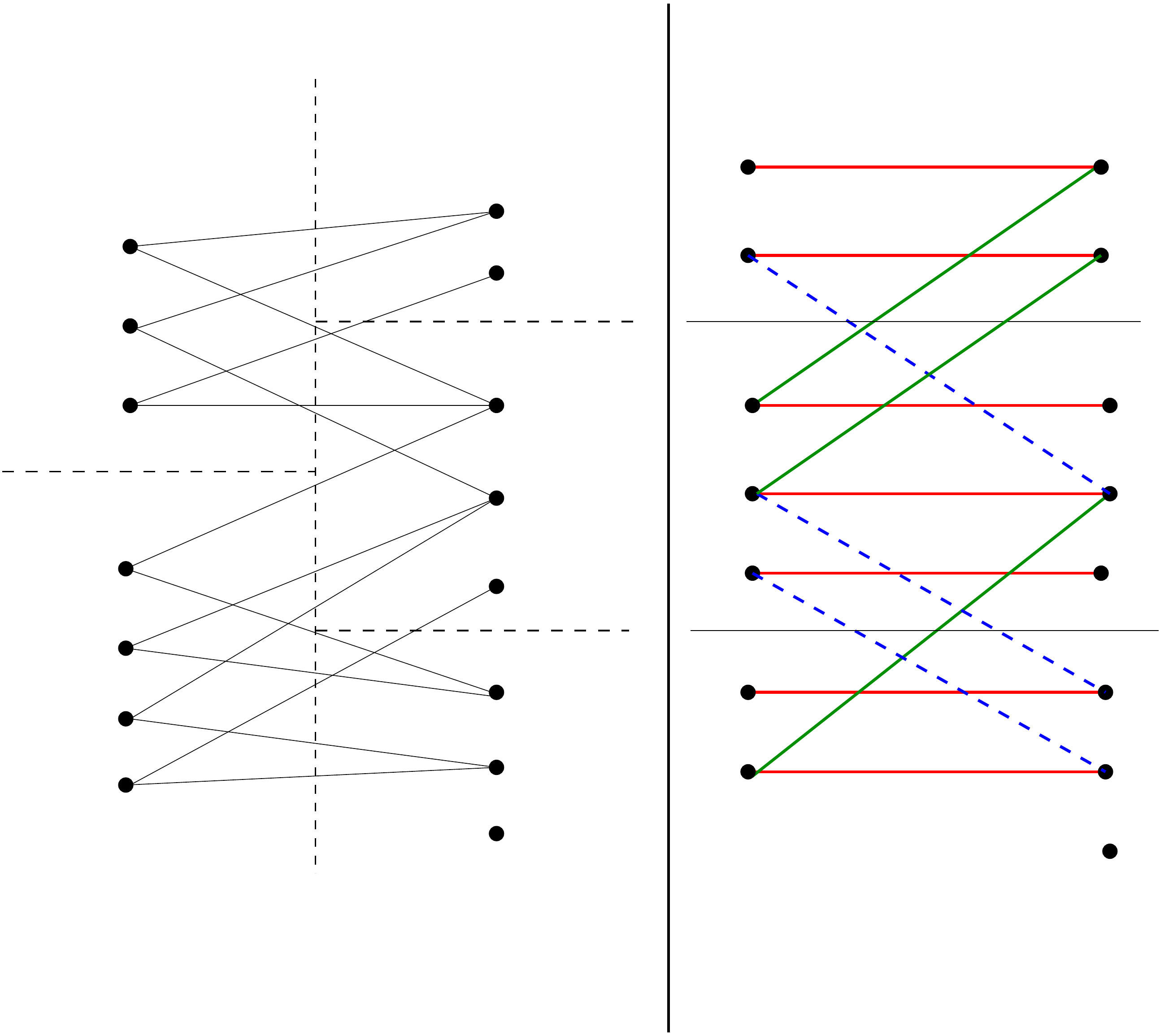_t}}}
\caption{The set $B$ gets partitioned into $X,Y$, and~$Z$.  We have $\Nbr_H(X) \cap \Nbr(Z) = \emptyset$. In the figure on the right, the horizontal edges  belong to~$M$. Only the edges of $(M(Y) \times X) \cup (M(Z)\times(X\cup Y))$ can be labeled~$+1$.}
\label{fig:next}
\end{figure}

For any $U \in \{X, Y, Z\}$, let $M(U) \subseteq A$ be the set of applicants matched in $M$ 
to posts in~$U$. The following lemma is important. 

\begin{lemma}
\label{lem:edge-signs}
Every edge of $G$ in $M(X) \times Y$ is labeled~$-1$; similarly, every edge in $M(Y) \times Z$ 
is labeled~$-1$. Any edge labeled~$+1$ has to be either in $M(Y) \times X$ or in 
$M(Z) \times (X \cup Y)$.
\end{lemma}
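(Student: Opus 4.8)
The plan is to determine, for each applicant $a$, exactly which post $M(a)$ it is matched to in $H$, by exploiting how restrictively $H$ is built, and then simply read off the labels. The structural fact that drives everything is an \emph{edge-origin} observation: in the final graph $H$ every edge at an applicant $a$ is one of at most three kinds --- the step~1 edge $(a,f(a))$ with $f(a)\in X$; the step~3 edge from $a$ to its most preferred post in $Y$ (possibly the dummy $\ell(a)$); and either the phase~(II) edge from $a$ to its most preferred post in $Z$ or the phase~(III) dummy edge $(a,\ell(a))\in Y$. The only piece of bookkeeping needed to justify this is that each iteration of the While-loop rebuilds $H$ from scratch and the loop exits without any further demotion, so the real posts of the final $Y$ are exactly those used in the last execution of step~3, while $X$ and $Z$ are the final sets. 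In particular, the \emph{only} $H$-edge from any applicant to a post of $X$ is its step~1 edge.

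First I would treat $M(X)$. If $a\in M(X)$ then $(a,M(a))$ must be a step~1 edge, so $M(a)=f(a)$; moreover the algorithm maintains $\Nbr_H(X)\cap\Nbr(Z)=\emptyset$, so $a\notin\Nbr(Z)$ and $a$ has no $G$-neighbor in $Z$. Since $M(a)=f(a)$ is $a$'s top post, every edge of $G\setminus M$ at $a$ is labeled $-1$. This already establishes that every edge of $M(X)\times Y$ is labeled $-1$ and that no $+1$ edge is incident to $M(X)$.

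Next, the same method applies to $M(Y)$ and $M(Z)$. If $a\in M(Y)$ and $M(a)$ is a real post, then $(a,M(a))$ is the step~3 edge, so $M(a)$ is $a$'s most preferred post in $Y$ and has rank at most $r_a$ in $a$'s list; hence $a$ prefers $M(a)$ to every other post of $Y$, and for any $G$-neighbor $b\in Z$ of $a$ --- which is a non-$f$-post because $Z\subseteq B\setminus F$, hence of rank at least $r_a$ in $a$'s list --- we get $\mathrm{rank}_a(M(a))\le r_a\le\mathrm{rank}_a(b)$, where equality would force $M(a)=b$ although $M(a)\in Y$ and $b\in Z$; so $a$ prefers $M(a)$ to $b$. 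If instead $M(a)=\ell(a)$ is a dummy, the phase~(III) edge was inserted only because $\Nbr(\{a\})\subseteq X$, so $a$ has no $G$-neighbor outside $X$ and the same conclusions hold vacuously. Either way, every edge of $M(Y)\times Y$ and of $M(Y)\times Z$ in $G\setminus M$ is labeled $-1$. Symmetrically, if $a\in M(Z)$ then $a\in\Nbr(Z)$ and $(a,M(a))$ is the phase~(II) edge, so $M(a)$ is $a$'s most preferred post in $Z$ and $a$ prefers it to every other post of $Z$; hence every edge of $M(Z)\times Z$ in $G\setminus M$ is labeled $-1$.

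Finally I would combine these facts. Let $(a,b)\in G\setminus M$ be labeled $+1$; it is not incident to $M(X)$. If $a\in M(Y)$, then $b\notin Y$ and $b\notin Z$ by the two $-1$ statements above, hence $b\in X$ and $(a,b)\in M(Y)\times X$. If $a\in M(Z)$, then $b\notin Z$, hence $b\in X\cup Y$ and $(a,b)\in M(Z)\times(X\cup Y)$. This is exactly the claimed dichotomy, and the lemma follows. I do not expect a genuine obstacle: the only things requiring care are stating the edge-origin observation precisely and handling the dummy posts and the $r_a=\infty$ applicants (settled by the remark that a dummy edge is created only when all of $a$'s real neighbors lie in $X$); everything else is a one-line rank comparison.
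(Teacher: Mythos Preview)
Your proposal is correct and follows essentially the same approach as the paper: both arguments identify, from the construction of $H$, that $a\in M(X)$ forces $M(a)=f(a)$, that $a\in M(Y)$ with $M(a)$ real forces $M(a)$ to be $a$'s most preferred post in $Y$ with rank $\le r_a$, and that $a\in M(Z)$ forces $M(a)$ to be $a$'s most preferred post in $Z$, and then read off the labels by rank comparison. Your version is a bit more explicit about the edge-origin classification and about the dummy-post case, but the logic is the same.
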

\begin{proof}
Every edge of $\Nbr(X)\times X$ that is present in $H$ is a top ranked edge. Since 
$M$ belongs to $H$, the edges of $M$ from $\Nbr(X)\times X$ are top ranked edges. Thus it is 
clear that every edge of $G$ in $M(X) \times Y$ is labeled~$-1$. Regarding $M(Y) \times Z$, 
every edge of $\Nbr(Y) \times Y$ that is present in the graph $H$ is an edge $(a,b)$ where the 
rank of $b$ in $a$'s preference list is $\le r_a$ (i.e.\ $r_a$ or better); on the other hand, 
every edge of $\Nbr(Z) \times Z$ that is present in the graph $H$ is an edge $(a,b')$ 
where the rank of $b'$ in $a$'s preference list is $\ge r_a$ (because $b' \in B\setminus F$). 
Since $M$ belongs to $H$, the edges of $M$ from $\Nbr(Y)\times Y$ are ranked better than the edges 
of~$\Nbr(Z) \times Z$. Thus every edge of $G$ in $M(Y) \times Z$ is labeled~$-1$.

We now show that any edge labeled~$+1$ has to be in either  $M(Y) \times X$ or
$M(Z) \times (X \cup Y)$ (see Fig.~\ref{fig:next}). Consider any edge $(a,b) \notin M$ 
such that $b \in U$ and $a \in M(U)$, where $U \in \{X, Y, Z\}$. It follows from the 
construction of the graph $H$ that a vertex in $\Nbr(U)$ can be adjacent in $H$ to only its most 
preferred post in~$U$. Thus any edge $(a,b) \notin M$ where $b \in U$ and $a \in M(U)$ is 
ranked~$-1$. We have already seen that all edges in $M(X) \times Y$ and in $M(Y) \times Z$ are 
labeled~$-1$. There are no edges in $M(X) \times Z$ since $M(X) \subseteq A \setminus \Nbr(Z)$. 
Thus any edge labeled~$+1$ has to be in either  $M(Y) \times X$ or $M(Z) \times (X \cup Y)$. \qed
\end{proof}

Let $M'$ be any matching in~$G$. The symmetric difference of $M'$ and $M$ is denoted by 
$M' \oplus M$: this consists of alternating paths and alternating cycles -- note that
edges here alternate between $M$ and~$M'$. 
Recall that last resort posts are not used in~$M'$ (which is a matching in $G$) 
whereas last resort posts may be present in~$M$ (which is a matching in $H$). 

\begin{lemma}
\label{lem:1-3}
Consider $M' \oplus M$. The following three properties hold:
\begin{enumerate}[(i)]
\item\label{le2i} in any alternating cycle in  $M' \oplus M$, the number of edges that are 
labeled~$-1$ is at least the number of edges that are labeled~$+1$.
\item\label{le2ii} in any alternating path in $M' \oplus M$, the number of edges that are 
labeled~$+1$ is at most {\em two plus} the number of edges that are labeled~$-1$; in case one 
of the endpoints of this path is a last resort post, then the number of edges labeled~$+1$ is 
at most {\em one plus} the number of edges labeled~$-1$.
\item\label{le2iii} in any even length alternating path in $M' \oplus M$, the number of edges 
that are labeled~$-1$ is at least the number of edges that are labeled~$+1$;
in case one of the endpoints of this path is a last resort post, then
the number of edges labeled~$-1$ is at least  {\em one plus} the number of edges labeled~$+1$.
\end{enumerate}
\end{lemma}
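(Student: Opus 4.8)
The plan is to analyze the structure of the alternating paths and cycles in $M' \oplus M$ using Lemma~\ref{lem:edge-signs}, which restricts where $+1$-labeled edges can occur. The key combinatorial idea is that a $+1$-labeled edge $(a,b)$ not in $M$ always has $a \in M(Y) \cup M(Z)$ and $b$ strictly "higher" in the partition than $a$'s partner class — precisely, $b \in X$ when $a \in M(Y)$, and $b \in X \cup Y$ when $a \in M(Z)$. So along an alternating path or cycle, if we track which of the three blocks $X, Y, Z$ the $M$-partner of each applicant lies in, a $+1$ edge forces a "jump upward" (from the $Y$-block to the $X$-block, or from the $Z$-block to the $Y$- or $X$-block), while the intervening $M$-edges and $-1$-labeled $M'$-edges can only move us "downward or sideways" in a controlled fashion.

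First I would make this precise by defining, for each applicant vertex $a$ appearing on a path/cycle of $M' \oplus M$, a level $\lambda(a) \in \{1,2,3\}$ according to whether $M(a) \in X$, $M(a) \in Y$, or $M(a) \in Z$ (noting that if $a$ is an endpoint unmatched in $M$, or matched to a dummy/last-resort post, we treat its level appropriately — a last-resort post lives in $Y$, so $\lambda = 2$ there). Then I would walk along a path edge by edge. Each edge of $M'\oplus M$ is either an $M$-edge (label irrelevant, it just identifies the applicant–post pair) or an $M'$-edge with a label in $\{\pm 1\}$. The traversal alternates $M'$-edge, $M$-edge, $M'$-edge, $\dots$. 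Crossing an $M'$-edge $(a,b)$ labeled $+1$: by Lemma~\ref{lem:edge-signs}, if we entered at post $b$ then $b$'s block is strictly above the block of the applicant $a$ we move to — so the level strictly increases. Crossing an $M$-edge does not change which post-block we are "at" (we move from $a$ to $M(a)$, staying in $a$'s block). Crossing an $M'$-edge labeled $-1$: I claim the level can decrease by at most... well, the cleanest route is to argue that the post blocks encountered along the path are "almost monotone": since $\Nbr_H(X) \cap \Nbr(Z) = \emptyset$ and $M$ respects $H$, no single applicant can be "near" both $X$ and $Z$ in the relevant sense, and the edges of $G$ that $M' \oplus M$ can use are constrained. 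I expect that a careful bookkeeping gives: along any maximal run, $(\#\,{+1}) - (\#\,{-1}) \le (\text{net increase in level along the run}) \le 2$, with the bound being $2$ precisely because the level ranges over $\{1,2,3\}$ and so a path can climb at most $3-1 = 2$ levels total (and a cycle returns to its start, forcing the net climb to be $\le 0$, i.e. $\#\,{+1} \le \#\,{-1}$).

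For part~(\ref{le2i}), the cycle case, I would observe that a cycle is a closed walk, so the total net change in level is zero; since every $+1$ edge contributes a strict increase and the $-1$ edges plus $M$-edges are the only places the level can fall back, the count of $+1$ edges is at most the count of $-1$ edges (after verifying that each $-1$ $M'$-edge drops the level by at most $1$, or more carefully, that the total downward movement available is bounded by $\#\,{-1}$). For part~(\ref{le2ii}), a path has two endpoints; the net level change between them is at most $2$ (from level $1$ to level $3$), giving $\#\,{+1} \le \#\,{-1} + 2$; if one endpoint is a last-resort post, that endpoint sits at level $2$, so the net climb is at most $1$, giving the "one plus" bound. (I should also check the degenerate endpoint cases: an endpoint unmatched in $M'$ but matched in $M$, vs. matched in $M'$ but unmatched in $M$ — the latter can only happen at a post, and such a post cannot be in $X\cup Y$ since $M$ matches all real posts there, so it is in $Z$, level $3$, which is favorable.) Part~(\ref{le2iii}): an even-length alternating path has both endpoints of the "same type" in $M\oplus M'$ — one endpoint unmatched in $M$, the other unmatched in $M'$ — and in fact one endpoint is an applicant and the other is a post; more useful is that the two endpoints, being one $M$-exposed and one $M'$-exposed, force the net level change to be $\le 0$ (an $M'$-exposed post is in $Z$, level $3$, and the path must end there going "up", meaning we started lower or equal), hence $\#\,{-1} \ge \#\,{+1}$; the last-resort refinement again comes from a level-$2$ endpoint tightening the bound by one.

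The main obstacle I anticipate is pinning down exactly how much the level can \emph{drop} across a $-1$-labeled $M'$-edge, and handling the endpoint/dummy-post cases cleanly — in particular making sure that the "level" of an endpoint is well-defined and correctly placed in $\{1,2,3\}$ when the vertex is unmatched in $M$ or matched to a dummy post, and that the parity/endpoint-type analysis for even-length paths (part~(\ref{le2iii})) correctly distinguishes the applicant-endpoint from the post-endpoint. A secondary subtlety: I should double-check that $M' \oplus M$ really only uses edges of $G$ (the last-resort edges of $M$ are in $H$ only, and they attach to dummy posts which have degree one, so each such edge is a pendant edge forming the endpoint of a path — this is consistent with the "last resort post as endpoint" clause), so no alternating cycle ever involves a dummy post. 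Once the level function and its monotonicity-up-to-bounded-descent are nailed down, all three parts follow by the same endpoint/closedness counting argument applied to paths, cycles, and even-length paths respectively.
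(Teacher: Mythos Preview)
Your approach is correct and is essentially the paper's own argument, repackaged as an explicit height function on $\{X,Y,Z\}$: the paper tracks which block the current post lies in and argues directly about runs of consecutive $+1$ edges (and how many $-1$ edges are needed to return to $Z$), whereas you telescope a level function to get the same inequalities. Your flagged concern resolves exactly as you suspect---the crucial point is that there are no $G$-edges between $M(X)$ and $Z$ (since $M(X)\subseteq\Nbr_H(X)\subseteq A\setminus\Nbr(Z)$), so every $-1$-labeled $M'$-edge changes the level by at least $-1$ (it may even raise it, which only helps), and the endpoint bookkeeping (dummy posts sit in $Y$; any post unmatched in $M$ must lie in $Z$) goes through just as you outline.
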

\begin{proof}
\noindent{\em Property~(\ref{le2i}).} Let $C \in M \oplus M'$ be an alternating cycle. Let $C$ be 
$b_0$-$a_0$-$b_1$-$a_1$-$b_2$-$\cdots$-$a_{k-1}$-$b_0$, 
where $(a_i,b_i) \in M$ for $0 \le i \le k-1$. If $C$ contains no vertex of $Z$, then there cannot
be two consecutive non-matching edges labeled~$+1$ in~$C$. That is, if $(a_i,b_{i+1})$ is 
labeled~$+1$, then $b_{i+1} \in X$ 
and there is no $+1$ edge incident on $M(b_{i+1}) = a_{i+1}$, thus the non-matching edge incident on $a_{i+1}$ 
in $C$ has to be labeled~$-1$. Hence the number of edges that are labeled~$-1$ is at least the number of edges 
that are labeled~$+1$.

Suppose $C$ contains a vertex of~$Z$: let $b_i$ be such a vertex. There can be two consecutive non-matching edges labeled~$+1$ now: let $b_i$-$a_i$-$b_{i+1}$-$a_{i+1}$-$b_{i+2}$ be such an alternating path within $C$, where both $(a_i,b_{i+1})$ and $(a_{i+1},b_{i+2})$ are labeled~$+1$. Then $b_i \in Z$, $b_{i+1} \in Y$, and $b_{i+2} \in X$. In the first place, there is no $+1$ edge incident on $a_{i+2}$ and the crucial part is that there is {\em no} edge in $G$ between a vertex in $\Nbr_H(X)$ and a vertex in~$Z$. Thus once we reach a vertex $a_{i+2} \in M(X)$, we have to see an edge labeled  $-1$ and so as to reach a vertex in $Z$, we need to see at least {\em two} consecutive non-matching edges labeled~$-1$. Thus it again follows that the number of edges that are labeled~$-1$ is at least the number of edges that are labeled~$+1$.

\medskip

\noindent{\em Property~(\ref{le2ii}).} Let $\rho \in M \oplus M'$ be an alternating path. Let $\rho$ be
$b_0$-$a_0$-$b_1$-$a_1$-$b_2$-$\cdots$-$a_{k-1}$-$b_k$-$a_k$, where $(a_i,b_i) \in M$ for 
$0 \le i \le k$. The same argument that was used in the proof of property~(\ref{le2i}) shows us that 
there can be at most two consecutive non-matching edges labeled~$+1$ in $\rho$ and once we 
traverse such an alternating path $b_i$-$a_i$-$b_{i+1}$-$a_{i+1}$-$b_{i+2}$ in $\rho$ (where $b_i$ 
has to be in $Z$), we are at a vertex~$b_{i+2} \in X$. Thereafter we have to see at least two more 
non-matching edges 
labeled~$-1$ than those labeled~$+1$ to again reach a vertex in~$Z$. Thus it follows that the 
difference between the number of edges that are labeled~$+1$ and the number of edges that are 
labeled~$-1$ is at most two.
 
In fact, for the difference between the number of edges that are labeled~$+1$ and the number of 
edges that are labeled~$-1$ to be exactly two, it has to be the case that $b_0$ is in~$Z$. For, in 
case $b_0$ is in $Y$, then it is easy to see that the difference between the number of 
non-matching edges that are labeled~$+1$ and the number of non-matching edges that are labeled 
$-1$ is at most one. Note that all last resort posts belong to~$Y$. Thus when $b_0$ is a last 
resort post, then the number of edges labeled~$+1$ in $\rho$ is at most {\em one plus} the 
number of edges labeled~$-1$.

\medskip

\noindent{\em Property~(\ref{le2iii}).} Let $\rho = b_0$-$a_0$-$b_1$-$a_1$-$b_2$-$\cdots$-$a_{k-1}$-$b_k$ be an 
even length alternating path where $(a_i,b_i) \in M$ for $0 \le i \le k-1$. The post $b_0$ is unmatched in $M'$ 
and $b_k$ is unmatched in~$M$. Recall that $M$ is $A$-complete, thus any even length alternating path with 
respect to $M$ has to have vertices in $B$ as its endpoints (since one of them is left unmatched in~$M$). 
Since $b_k$ is a post that is matched in $M'$ but not in $M$, it follows that $b_k \in Z$ (as all non-dummy 
posts in $X \cup Y$ are matched in~$M$).

Now the argument is similar to the proof of property~(\ref{le2ii}). In order to maximize the difference between 
the number of edges labeled~$+1$ and those labeled~$-1$, we assumed that the starting vertex $b_0 \in Z$. For the 
final vertex $b_k$ to be in $Z$, it follows that the number of edges that are labeled~$-1$ is at least the number 
of edges that are labeled~$+1$. In particular, when $b_0$ is a last resort post, then the starting vertex is in $Y$ 
and so the number of edges that are labeled~$-1$ is at least {\em one plus} the number of edges that are labeled~$+1$. \qed
\end{proof}

Lemma~\ref{new-cor} uses the above lemma to show the popularity of~$M$. This completes the proof that if $H$ admits an 
$A$-complete matching then $G$ admits a popular matching.

\begin{lemma}
\label{new-cor}
For any matching $M'$ in $G$, we have $\phi(M,M') \ge \phi(M',M)$.
\end{lemma}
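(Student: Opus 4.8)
The plan is to compare the total number of votes for $M$ against the total number for $M'$ by decomposing $M \oplus M'$ into alternating paths and cycles and accounting for votes component by component. The key observation is that the vote of an applicant $a$ in the election between $M$ and $M'$ is exactly the $\pm 1$ edge label on $(a, M'(a))$ when $a$ is matched differently in the two matchings, and $0$ when $a$ abstains; and similarly a post $b$ votes $+1$ for whichever of $M, M'$ matches it when exactly one does, and abstains otherwise. So for each component $P$ of $M \oplus M'$, I would define $\mathsf{bal}(P) = \phi(M,M')|_P - \phi(M',M)|_P$, the net advantage of $M$ restricted to the edges and vertices of $P$, and show $\mathsf{bal}(P) \ge 0$ for every component; summing over all components then yields $\phi(M,M') \ge \phi(M',M)$.

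First I would handle the applicant votes inside a component using Lemma~\ref{lem:1-3}. For an alternating cycle or an alternating path, the applicants lying on it each contribute their $\pm 1$ label on the non-matching edge incident to them in that component (an applicant whose incident non-matching edge in the component is its $M'$-edge; if it is the $M$-edge, we reverse perspective, but in a cycle every applicant has both, so I must be careful to count each applicant exactly once — namely via the edge $(a,M'(a))$, and an applicant only appears in a component if $M(a) \ne M'(a)$). The net applicant contribution of a component is therefore $(\#(+1)\text{-edges}) - (\#(-1)\text{-edges})$ among its non-matching edges, up to this bookkeeping. Then I would add in the post votes: in an alternating cycle all posts are matched in both $M$ and $M'$, so they all abstain and contribute $0$; in an alternating path, only the two endpoints can be a post matched in exactly one matching, so the post contribution is $+1$ for each post-endpoint matched in $M$ and $-1$ for each post-endpoint matched in $M'$. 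Since $M$ is $A$-complete, every odd-length component endpoint on the $A$-side is impossible, so path endpoints are posts or are applicants matched by $M$ in $G$ — and here the last-resort/dummy post subtlety enters: a path endpoint that is a dummy post $\ell(a)$ corresponds in $G$ to $a$ being unmatched, so that "post" casts no vote, which is precisely why Lemma~\ref{lem:1-3}(ii),(iii) give the sharper "one plus" bounds in that case.

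Now the case analysis closes: for a cycle, Lemma~\ref{lem:1-3}(\ref{le2i}) gives $\#(-1) \ge \#(+1)$ and posts contribute $0$, so $\mathsf{bal}(P) \ge 0$. For an odd-length alternating path — one endpoint a post matched in $M$, the other a post matched in $M'$ (or, the $M$-side could be a dummy/unmatched applicant) — the post contributions net to $0$ (one $+1$ from the $M$-endpoint, one $-1$ from the $M'$-endpoint), or to $+1$ if the $M'$-endpoint post is in fact a dummy, but dummy posts live in $Y$ and are matched in $M$, so the only relevant unmatched-in-one endpoint of a path is the one covered by Lemma~\ref{lem:1-3}(ii); combining the "two plus" (or "one plus" with a last resort endpoint) bound on applicant labels with the post contribution yields $\mathsf{bal}(P) \ge 0$. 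For an even-length alternating path, both endpoints are posts, exactly one of which ($b_k \in Z$, matched in $M'$ not $M$) votes, contributing $-1$ — unless the other endpoint is a last resort post, in which case its "vote" vanishes; Lemma~\ref{lem:1-3}(\ref{le2iii}) gives $\#(-1) \ge \#(+1)$ in general and $\#(-1) \ge 1 + \#(+1)$ when a last resort post is an endpoint, so that the applicant surplus of $-1$ edges absorbs the $-1$ post vote (or the dummy-endpoint vote is zero and the extra $-1$ is not needed), giving $\mathsf{bal}(P) \ge 0$ again.

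The main obstacle I anticipate is the careful matching of "edge labels" to "votes" at the level of a single component, especially getting the endpoint bookkeeping right: ensuring each applicant and each post is counted exactly once, correctly identifying which endpoint of an even path is the $Z$-post matched in $M'$ (using the argument already given in the proof of Lemma~\ref{lem:1-3}(\ref{le2iii}) that $b_k \in Z$), and handling the dummy last-resort posts consistently — they appear in $M$ but not in $G$, cast no vote, yet they are exactly the configurations for which Lemma~\ref{lem:1-3} supplies the tighter "$+1$" slack. Once the correspondence between the $\{\pm 1\}$ labels of Lemma~\ref{lem:edge-signs} / Lemma~\ref{lem:1-3} and the actual votes is pinned down precisely, each of the three cases is a one-line inequality, and summing $\mathsf{bal}(P) \ge 0$ over all components of $M \oplus M'$ gives $\phi(M,M') \ge \phi(M',M)$, proving $M$ is popular.
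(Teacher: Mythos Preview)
Your approach---decompose $M \oplus M'$ into alternating components and show each has nonnegative balance using Lemma~\ref{lem:1-3}---is exactly the paper's, and your cycle case is fine. The gap is in the endpoint bookkeeping for paths, which you flag as the obstacle but then get wrong in a way that breaks the arithmetic.

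For an odd-length alternating path in a bipartite graph, the two endpoints lie on \emph{opposite} sides: one is an applicant $a'$ and one is a post $b'$. Moreover, in an odd path the first and last edges belong to the \emph{same} matching; since $M$ is $A$-complete the applicant endpoint cannot be unmatched in $M$, so both endpoint edges lie in $M$ and \emph{both} $a'$ and $b'$ are unmatched in $M'$. Thus both vote for $M$, contributing $+2$ to $\mathsf{bal}(P)$ (or $+1$ when $b'$ is a dummy last-resort post), and it is this $+2$ (resp.\ $+1$) that absorbs the ``two plus'' (resp.\ ``one plus'') deficit from Lemma~\ref{lem:1-3}(\ref{le2ii}). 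Your description (``one endpoint a post matched in $M$, the other a post matched in $M'$'') has both the side-membership and the matched/unmatched status wrong; with your claimed net endpoint contribution of $0$ you would only get $\mathsf{bal}(P) \ge -2$.

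For even-length paths you correctly identify both endpoints as posts, but then say ``exactly one of which ($b_k$) votes.'' When $b_0$ is a \emph{real} post it is matched in $M$ and unmatched in $M'$, so it votes for $M$: both endpoints vote and their contributions cancel, after which the applicant bound $\#(-1) \ge \#(+1)$ from Lemma~\ref{lem:1-3}(\ref{le2iii}) gives $\mathsf{bal}(P) \ge 0$. Your accounting drops $b_0$'s $+1$, and with only the $-1$ from $b_k$ plus an applicant surplus of $\ge 0$ you again only reach $\mathsf{bal}(P) \ge -1$. The dummy case is precisely when $b_0$'s vote is absent, and there the sharper ``one plus'' bound in Lemma~\ref{lem:1-3}(\ref{le2iii}) supplies the missing $+1$.
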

\begin{proof}
Recall that $M$ is $A$-complete (where some of the posts used in $M$ 
can be last resort posts). For any $a \in A$ and any neighbor $b$ of $a$ such that $b \ne M(a)$, let $\vote_a(b,M(a))$ 
be the label of the edge $(a,b)$, which is $a$'s vote for $b$ vs~$M(a)$. Consider $M \oplus M'$. We will now investigate each component of $M \oplus M'$ -- being a cycle, an odd path or an even path -- and show $\phi(M,M') \ge \phi(M',M)$ for each of them.
\begin{itemize}
\item For any alternating cycle $C \in M \oplus M'$, among the vertices of $C$, the difference between those who prefer 
$M'$ and those who prefer $M$ is equal to $\sum_{(a,M'(a)) \in C}\vote_a(M'(a),M(a))$. It follows from property~(\ref{le2i}) 
that this sum is at most 0.

\item Consider any odd length alternating path $\rho \in M \oplus M'$: its endpoints are an applicant $a'$ and a post $b'$ that are 
unmatched in~$M'$. Assume $b'$ is a non-dummy post. Then among the vertices of $\rho$ that are matched in $M'$, the difference between 
those who prefer $M'$ and those who prefer $M$ is equal to $\sum_{(a,M'(a)) \in \rho}\vote_a(M'(a),M(a))$. It follows from 
property~(\ref{le2ii}) that this sum is at most~2. The two vertices $a'$ and $b'$ prefer $M$ to $M'$ as they are matched in $M$ and 
unmatched in $M'$, since $a'$ is unmatched in~$M'$. Thus summed over all vertices of $\rho$, the difference between those who prefer $M'$ and those who prefer $M$ is 
again at most 0.

Now suppose $b'$ is a dummy post. Then it follows from property~(\ref{le2ii}) that among the vertices of $\rho$ that are matched in 
$M'$, the difference between those who prefer $M'$ and those who prefer $M$ is at most 1. The vertex $a'$ prefers $M$ to~$M'$. Thus 
summed over all real vertices of $\rho$, the difference between those who prefer $M'$ and those who prefer $M$ is again at most 0.

\item Consider any even length alternating path $\rho \in M \oplus M'$: its endpoints are a post $b_0$ that is unmatched in $M'$ 
and a post $b_k$ that is unmatched in~$M$. Assume $b_0$ is a non-dummy post. Then summed over all vertices of $\rho$ 
(this includes $b_0$ who prefers $M$ and $b_k$ who prefers $M'$), the difference between those who prefer $M$ and those who prefer $M'$ 
is at least 0 (by property~(\ref{le2iii})). 

Now suppose $b_0$ is a dummy post.  Then summed over all real vertices of $\rho$ that are matched in $M$, the difference between those 
who prefer $M$ and those who prefer $M'$ is at least 1 (by property~(\ref{le2iii})). Thus summed over all real vertices of $\rho$ 
(this includes $b_k$ who prefers $M'$), the difference between those who prefer $M$ and those who prefer $M'$ is at least 0.
\end{itemize}
All vertices whose partners in $M$ and in $M'$ are different belong to some alternating path or cycle in~$M \oplus M'$. Hence 
the difference between the number of vertices that prefer $M$ and those that prefer $M'$ is non-negative. In other words, 
$\phi(M,M') \ge \phi(M',M)$. \qed
\end{proof}

\subsection{Proof of Theorem~\ref{thm:correctness}: the necessary part}
We now show the other side of Theorem~\ref{thm:correctness}. That is, if $G$ admits a popular 
matching, then $H$ admits an $A$-complete matching. Let $M^*$ be a popular matching in~$G$. 
Lemma~\ref{lem:pop-mat} will be useful to us.
\begin{lemma}
\label{lem:pop-mat}
If $(a,b) \in M^*$ and $b \in F$, then $b$ has rank better than $r_a$ in $a$'s preference list.
\end{lemma}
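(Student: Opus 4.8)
Let $(a,b) \in M^*$ with $b \in F$, so $b = f(a')$ for some applicant $a'$. I want to show that $b$ has rank better than $r_a$ in $a$'s list, i.e. $b$ is an $f$-post of $a$ — in other words, $f(a) = b$ (since $a$ ranks all $f$-posts before its first non-$f$-post, and $b$ being ranked better than $r_a$ exactly means $b \in F \cap \mathrm{Nbr}(\{a\})$ is a top-tier post for $a$). The natural approach is by contradiction: suppose $(a,b) \in M^*$, $b \in F$, but the rank of $b$ in $a$'s list is $\ge r_a$, so $a$ strictly prefers some non-$f$-post $c$ (its rank-$r_a$ post) to $b$, and also strictly prefers $f(a)$ to $b$. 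I will exhibit a matching $M'$ that is more popular than $M^*$, contradicting popularity of $M^*$.

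**Constructing the challenger.** The idea is to use the applicant $a'$ with $f(a') = b$. There are two cases. If $a'$ is unmatched in $M^*$ or $M^*(a') \ne b$, then consider $M' = M^* \setminus \{(a,b)\} \cup \{(a',b)\}$, possibly also rematching $a$ to improve its vote: promote $a'$ to its top choice $b$ (gain: $a'$ votes $+1$ if it was unmatched or strictly prefers $b$, which holds since $b = f(a')$ unless $M^*(a') = b$), and handle $a$. Since $a$ strictly prefers $f(a)$ to $b$, the cleanest move is to build an alternating/augmenting structure: promote $a$ toward $f(a)$ — but $f(a)$ may be occupied. The robust way is to mimic the standard popular-matchings argument: form $M'$ from $M^*$ by the symmetric difference with a short path. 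Concretely, if $M^*(a') \neq b$, rematch $b$ from $a$ to $a'$; now $a$ is unmatched. If $a$ has its rank-$r_a$ post $c$ available (or can be inserted along a short alternating path), rematch $a$ to something it prefers to $b$ — but if nothing works, at worst $a$ becomes unmatched, losing one vote, while $a'$ gains one vote and $b$ is indifferent (matched in both). That nets zero, not a strict gain, so I need to be more careful: I should choose $M'$ so that the vote of $a$ does not drop, e.g. by picking $a'$ to be an applicant for whom $b$ really is strictly better than its $M^*$-partner, and additionally routing $a$ to a post it prefers to $b$. The key structural fact to lean on is Theorem~\ref{thm1}-style reasoning adapted to the 2-sided model, or more likely a direct exchange argument.

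**The cleaner exchange.** Actually the intended argument is probably shorter: since $b \in F$, write $b = f(a')$. If $a' = a$ then $b = f(a)$ has rank $1 < r_a$ and we are done, so assume $a' \neq a$. Suppose for contradiction $b$ is ranked $\ge r_a$ by $a$. Consider $M' = M^* \oplus \{(a,b),(a',b)\}$: this unmatches $a$, matches $a'$ to $b$, and leaves everything else fixed (this is a valid matching since $b$ was matched to $a$ in $M^*$). Compare votes: $a'$ prefers $M'$ (it gets its top post $f(a')$, which is at least as good as its $M^*$-partner, and strictly better unless $M^*(a') = b$ — but $b$ is matched to $a \neq a'$ in $M^*$, so $a'$ strictly prefers $M'$); $a$ prefers $M^*$ (matched there, unmatched in $M'$); $b$ is indifferent; all other vertices are indifferent. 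So $\phi(M',M^*) = \phi(M^*,M') = 1$, giving $M^* \succeq M'$ only — a tie, not a contradiction. Hence one more improving step is needed: we must also improve $a$'s situation or exploit that $a$ strictly prefers a non-$f$-post. Here I would invoke that since $a$ ranks $b \geq r_a$, $a$ strictly prefers its rank-$r_a$ post $c$ to $b$; then additionally rematch $a$ to $c$ — if $c$ is unmatched in $M^*$, this makes $a$ vote $+1$ instead of $-1$, and $c$ votes $+1$, yielding a strict win; if $c$ is matched in $M^*$, we extend the alternating path from $c$ and argue along it. The main obstacle, and the crux of the proof, is precisely this path-chasing: controlling what happens along the alternating path hanging off $c$ (and possibly off $f(a)$) so that the net vote strictly favors $M'$. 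I expect the authors handle this via a careful choice of which post to reassign $a$ to — most likely $f(a)$ itself via an augmenting-path argument rooted at the applicant currently holding $f(a)$ — and the delicate part is ensuring no vertex on that path flips to vote for $M^*$, which follows from the $f$-post structure (an $f$-post in the interior of such a path is always held by an applicant who ranks it first, hence votes $+1$ when promoted to it).
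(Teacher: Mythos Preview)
Your overall strategy---assume the rank of $b$ is at least $r_a$, take an applicant $a'$ with $f(a')=b$, and build a more-popular matching by promoting $a'$ to $b$ and sending $a$ to its rank-$r_a$ non-$f$-post $c$---is exactly the paper's approach. But the proposal has two real problems.

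First, a misreading at the outset: the lemma does \emph{not} assert $b=f(a)$. It only asserts that $b$ is ranked strictly better than $r_a$ by $a$; posts ranked $1,\ldots,r_a-1$ by $a$ are all $f$-posts, but $b$ need not be $a$'s top choice. Fortunately your contradiction setup (``rank of $b$ is $\ge r_a$, hence $>r_a$ since rank $r_a$ is the non-$f$-post $c$'') is unaffected by this, but the opening paragraph should be corrected.

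Second, and this is the genuine gap: you stop at ``if $c$ is matched in $M^*$, we extend the alternating path from $c$ and argue along it,'' and then speculate about long path-chasing via $f(a)$. No such chasing is needed. The missing idea is a \emph{single} further step: let $a_1=M^*(c)$ and promote $a_1$ to its own top post $f(a_1)$; since $c\notin F$ we have $c\ne f(a_1)$, so $a_1$ strictly improves. The resulting matching (displacing whoever held $f(a_1)$) has $a'$, $a$, and $a_1$ all strictly better off, while at most the two posts $M^*(a')$ and $M^*(f(a_1))$ become unmatched and vote against; $b$ and $c$ are indifferent. That is $3$ versus at most $2$, contradicting popularity. (When $a'=a_1$ you get an alternating $4$-cycle $a'\text{--}b\text{--}a\text{--}c\text{--}a'$ with $a'$ and $a$ both improving and no one worse off.) A minor technical slip along the way: your ``$M'=M^*\oplus\{(a,b),(a',b)\}$'' is not a matching when $a'$ is already matched in~$M^*$; you must also drop $(a',M^*(a'))$.
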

\begin{proof}
Suppose $(a,b) \in M^*$, where $b \in F$, and $b$ has rank worse than $r_a$ in $a$'s preference list. Note that the rank of $b$ cannot be exactly $r_a$ since there is another 
post $b' \notin F$ that has rank $r_a$ in $a$'s preference list. 
We know that $a = M^*(b)$ prefers post 
$b'$ to~$b$. If post $b'$ is unmatched, then consider $M^* \oplus p$ where $p = M^*(a_0)$-$a_0$-$b$-$a$-$b'$, where $a_0$ is an applicant such that $f(a_0) = b$ (there exists such an applicant since $b \in F$). The matching $M^* \oplus p$ is more popular than~$M^*$. 

So suppose the post $b'$ is matched and let $a_1 = M^*(b')$. If $a_0 = a_1$, then consider the alternating cycle 
$C = a_0$-$b$-$a$-$b'$-$a_0$; the matching $M^* \oplus C$ makes $a_0$ and $a$ swap their partners and both applicants prefer
$M^* \oplus C$ to $M^*$ while nobody prefers $M^*$ to~$M^* \oplus C$. Thus $M^* \oplus C$ is more popular than~$M^*$.
If $a_0 \ne a_1$, then consider the alternating path $\rho$ that promotes $a_0$ to its top post $b$ and then $M^*(b) = a$ to a more 
preferred post $b'$ and finally, $M^*(b') = a_1$ to its top post~$f(a_1)$.
The vertices $M^*(a_0)$ and $M^*(f(a_1))$ become unmatched in $M^*\oplus\rho$ and so they prefer $M^*$ 
to  $M^*\oplus\rho$ while the three vertices $a_0,a$, and $a_1$  prefer $M^*\oplus\rho$ to~$M^*$. 
Every other vertex is indifferent between $M^*\oplus\rho$ and~$M^*$.
So $M^*\oplus\rho$ is more popular than~$M^*$. 
Thus we have contradicted the popularity of $M^*$ in all the cases. \qed
\end{proof}

Analogous to Section~\ref{sec:suff}, we label the edges of $G \setminus M^*$ by $+1$ or~$-1$: the label of an edge $(a,b)$ in 
$G \setminus M^*$ is the vote of $a$ for $b$ vs~$M^*(a)$. In case $a$ is not matched in $M^*$, 
then $\vote(a,b) = +1$ for any neighbor $b$ of~$a$. Due to the popularity of $M^*$, 
the following two properties hold on these edge labels (otherwise $M^* \oplus \rho \succ M^*$).

\begin{enumerate} 
\item[$\mathsf{(i)}$] There is no alternating path $\rho$ such that the edge labels in 
$\rho \setminus M^*$ are $\langle +1, +1, +1, \cdots \rangle$, i.e.\
no three consecutive non-matching edges are labeled~$+1$. 
\item[$\mathsf{(ii)}$]  There is no alternating path $\rho$ where the edge labels in 
$\rho \setminus M^*$ are $\langle +1, +1, -1$, $+1, +1, \cdots \rangle$, i.e.\
no five consecutive non-matching edge labels add up to~4.
\end{enumerate}

Based on the matching $M^*$ and the edge labels on $G \setminus M^*$, we partition $B$ into 
$L_1 \cup L_2 \cup L_3$. 
\begin{itemize}
\item Roughly speaking, $L_3$ consists of unwanted posts, so
all posts that are unmatched in $M^*$ belong to $L_3$. Similarly, posts like $b_3$ with a
length-5 alternating path $M^*(b_1)$-$b_1$-$M^*(b_2)$-$b_2$-$M^*(b_3)$-$b_3$ incident on them,
with both the non-matching edges labeled~$+1$ (see Fig.~\ref{fig:partition}) are in $L_3$.
\item Top posts are split across $L_1$ and $L_2$: property~$\mathsf{(ii)}$ indicates
that applicants matched to posts in $L_1$ should not be adjacent to posts in $L_3$,
hence those top posts whose partners have neighbors in $L_3$ are in~$L_2$ and
the top posts whose partners have {\em no} neighbors in $L_3$ are in~$L_1$.
\end{itemize}

\begin{figure}[h]
\centerline{\resizebox{0.38\textwidth}{!}{\input{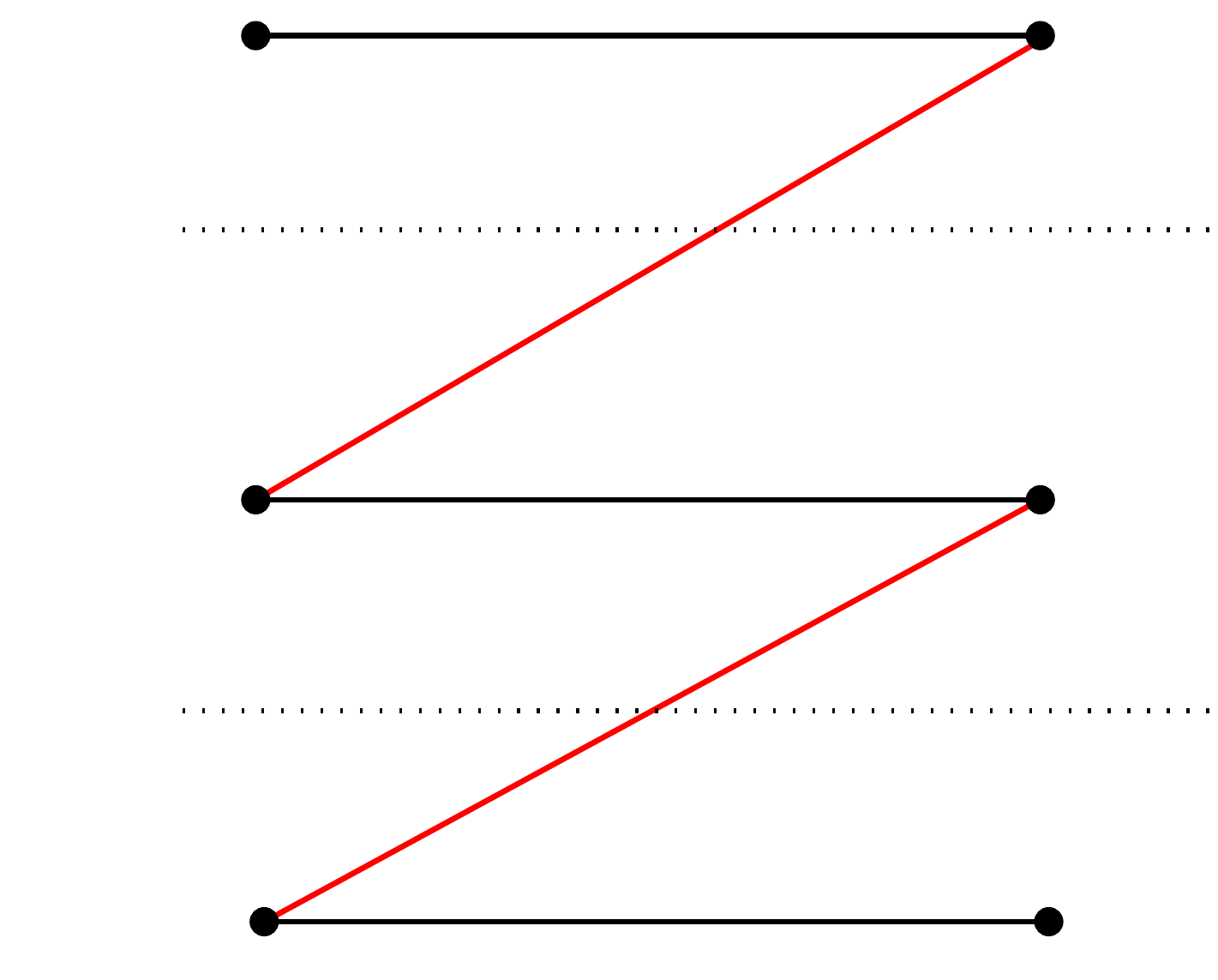_t}}}
\caption{A length-5 alternating path $M^*(b_1)$-$b_1$-$M^*(b_2)$-$b_2$-$M^*(b_3)$-$b_3$, where both $(M^*(b_3),b_2)$ and $(M^*(b_2),b_1)$ are labeled~$+1$.}
\label{fig:partition}
\end{figure}

More formally, we define the partition $B = L_1 \cup L_2 \cup L_3$ below.
\begin{enumerate}\setcounter{enumi}{-1}
\item Initialize $L_1 = L_2 = \emptyset$ and $L_3 = \{b\in B: b$ is unmatched in~$M^*\}$.
We now add more posts to the sets $L_1, L_2, L_3$ as described below.

\item 
For each length-5 alternating path $\rho = a_1$-$b_1$-$a_2$-$b_2$-$a_3$-$b_3$ where 
$(a_1,b_1), (a_2,b_2), (a_3,b_3) \in M^*$ and both $(a_2,b_1)$ and $(a_3,b_2)$ are marked~$+1$, 
add $b_i$ to~$L_i$, for $i = 1,2,3$.

\item Now consider those $b \in B$ that are matched in $M^*$ but $b$ is not a part of any 
length-5 alternating path where both the non-matching edges are labeled~$+1$. We repeat the 
following two steps till there are no more posts to be added to either $L_2$ or~$L_3$
via these rules: 
\begin{itemize}
\item suppose $M^*(b)$ has no $+1$ edge incident on it: if $M^*(b) \in \Nbr(L_3)$, then add 
$b$ to~$L_2$. 
\item if $M^*(b)$ has a $+1$ edge to a vertex in $L_2$, then add $b$ to~$L_3$. 
\end{itemize}
\item For each $b$ such that $M^*(b)$ has no $+1$ edge incident on it: 
\begin{itemize}
\item if $M^*(b) \notin \Nbr(L_3)$, then add $b$ to~$L_1$. 
\end{itemize}
\item For each $b$ not yet in $L_2\cup L_3$ and $M^*(b)$ has a $+1$ edge to a vertex in $L_1$: 
\begin{itemize}
\item add $b$ to~$L_2$. 
\end{itemize}
\end{enumerate}

\begin{lemma}
\label{lemma-F} 
Recall that $F$ is the set of top posts. The above partition $\langle L_1,L_2,L_3\rangle$ satisfies the following properties:
\begin{enumerate}[1.]
\item $F \subseteq L_1 \cup L_2$;
\item $M^*(L_1) \cap \Nbr(L_3) = \emptyset$.
\end{enumerate}
\end{lemma}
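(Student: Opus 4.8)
The plan is to verify the two properties essentially by inspecting the five-step construction of the partition and using the popularity properties $\mathsf{(i)}$ and $\mathsf{(ii)}$ together with Lemma~\ref{lem:pop-mat}.

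\textbf{Property 2 ($M^*(L_1)\cap\Nbr(L_3)=\emptyset$).} This is the more structural of the two, so I would handle it first. A post $b$ enters $L_1$ only in step~3, and the defining condition there is precisely that $M^*(b)$ has no $+1$ edge incident on it and $M^*(b)\notin\Nbr(L_3)$. So at the moment $b$ is placed in $L_1$, its partner $M^*(b)$ is not adjacent (in $G$) to anything currently in $L_3$. The only thing to check is that $L_3$ never grows afterwards in a way that would retroactively violate this: posts are added to $L_3$ only in step~0 (unmatched posts, fixed at the start), step~1 (the $b_3$'s of length-5 all-$+1$ paths), and step~2 (posts $b$ whose partner has a $+1$ edge into $L_2$). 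Steps~3 and~4 add posts only to $L_1$ and $L_2$, never to $L_3$, and step~1 precedes step~3. So the only real worry is step~2's loop running concurrently with later additions — but step~2 is completed before step~3 begins, so by the time we populate $L_1$ in step~3, $L_3$ is already in its final form. Hence the condition ``$M^*(b)\notin\Nbr(L_3)$'' tested in step~3 is tested against the final $L_3$, and property~2 follows directly.

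\textbf{Property 1 ($F\subseteq L_1\cup L_2$).} Here I would argue that no top post can land in $L_3$. Take $b\in F$ and suppose for contradiction $b\in L_3$. A top post is matched in any popular matching (by Theorem~\ref{thm1} applied in spirit, or more directly: if $b=f(a_0)$ is unmatched while $a_0$ is matched to something worse or unmatched, promoting $a_0$ to $b$ gives a more popular matching), so $b$ is not added in step~0. If $b$ is added in step~1, then $b=b_3$ for some length-5 alternating path $a_1$-$b_1$-$a_2$-$b_2$-$a_3$-$b_3$ with $(a_2,b_1),(a_3,b_2)$ labeled $+1$; here $a_3=M^*(b_3)$ prefers $b_2$ to $b_3=b\in F$, so $b$ has rank worse than $r_{a_3}$ in $a_3$'s list — wait, more carefully: $a_3$ prefers its non-matching neighbor $b_2$ to its matched post $b$, and since $b\in F$ is $a_3$'s matched post, $b$ is worse than $b_2$; combined with $b\in F$ this contradicts Lemma~\ref{lem:pop-mat}, which says a matched $f$-post must have rank \emph{better} than $r_a$, yet $a_3$ has a strictly preferred post $b_2$ — I would spell out that $b_2$ being strictly preferred to the $f$-post $b$ is exactly the situation Lemma~\ref{lem:pop-mat} forbids. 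Finally, if $b$ is added in step~2 via ``$M^*(b)$ has a $+1$ edge to a vertex in $L_2$'', then again $M^*(b)$ strictly prefers some post to its matched $f$-post $b$, contradicting Lemma~\ref{lem:pop-mat} once more. So $b\notin L_3$. It remains to note that every post not in $L_3$ ends up in $L_1\cup L_2$: a matched post $b$ either has a $+1$ edge incident on $M^*(b)$ or not; if not, it is placed in $L_1$ (step~3, if $M^*(b)\notin\Nbr(L_3)$) or $L_2$ (step~2, if $M^*(b)\in\Nbr(L_3)$); if $M^*(b)$ does have a $+1$ edge, that edge goes to a vertex of $L_1$, $L_2$, or $L_3$ — into $L_3$ is impossible for $b\in F$ by the argument just given, into $L_2$ puts $b$ in $L_3$ (again impossible for $b\in F$), and into $L_1$ puts $b$ in $L_2$ via step~4. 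Hence $b\in F$ forces $b\in L_1\cup L_2$.

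\textbf{Anticipated obstacle.} The fiddly part is making the case analysis for Property~1 airtight: one must check that the iterative loop in step~2 terminates and that every matched post is classified by the end of step~4 (no post is ``left over''). This amounts to observing that steps 2–4 exhaust the dichotomy ``$M^*(b)$ has a $+1$ edge / does not'' and, in the former case, the sub-dichotomy on where that $+1$ edge points, using the fact (from popularity property $\mathsf{(i)}$, no three consecutive $+1$'s) that the $+1$-edge structure cannot chain arbitrarily. I would state this termination/exhaustiveness observation explicitly as the first paragraph of the proof and then run the two contradiction arguments above against the finalized partition.
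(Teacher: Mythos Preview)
Your plan has a substantial gap in each part.

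\textbf{Property 2.} You assert that ``a post $b$ enters $L_1$ only in step~3.'' This is false: in step~1 of the construction, for every length-5 alternating path $a_1$--$b_1$--$a_2$--$b_2$--$a_3$--$b_3$ with both non-matching edges labeled $+1$, the post $b_1$ is added to $L_1$. Step~1 happens \emph{before} step~2, so $L_3$ is certainly not in its final form when these $b_1$'s enter $L_1$; your timing argument does not apply to them. This is precisely the hard case. The paper handles it via an auxiliary claim (Claim~\ref{lem:alt-path}): noting that property~$\mathsf{(i)}$ forces $b_1 = f(a_1)$, one shows that if nevertheless $a_1 \in \Nbr(L_3)$, then there is an alternating path $\rho_{a_1}$ from $a_1$ whose non-matching labels look like $\langle -1,+1,-1,\ldots\rangle$ and which ends either at two consecutive $+1$'s or at an unmatched post; concatenating $\rho_{a_1}$ with the step-1 path $p$ produces an alternating path (or cycle) witnessing a matching more popular than $M^*$.

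\textbf{Property 1.} Your invocation of Lemma~\ref{lem:pop-mat} does not give the contradiction you claim. That lemma only says that if $(a,b)\in M^*$ with $b\in F$, then $b$ has rank \emph{better than $r_a$}; it does not say $b=f(a)$, and it does not forbid $a$ from having a $+1$ edge. If the $+1$ neighbour $b'$ of $a=M^*(b)$ is itself in $F$, then $b'$ preferred to $b$ is perfectly consistent with $b$ having rank $<r_a$, and your argument stalls. The paper does not use Lemma~\ref{lem:pop-mat} here. Instead it picks any $a_0$ with $f(a_0)=b$ and notes that the edge $(a_0,b)$ is labeled $+1$; together with the $+1$ edge $(M^*(b),b')$ into $L_2$ that every matched $L_3$-post's partner carries, this gives a length-5 alternating path with two consecutive $+1$'s whose \emph{middle} post is $b$. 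Step~1 then forces $b\in L_2$, contradicting $b\in L_3$.
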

\begin{proof}
Suppose $b_1 = f(a_0)$ belongs to~$L_3$. The post $b_1$ has to be matched 
in~$M^*$. Let $a_1 = M^*(b_1)$ and we also know from the construction of the set $L_3$ that 
there is an edge $(a_1,b_2)$ with $b_2 \in L_2$ that is labeled~$+1$. If the vertex $a_0$ is 
unmatched in $M^*$, then by promoting $a_0$ to $b_1$ and $a_1$ to $b_2$, 
and leaving $M^*(b_2)$ unmatched, we obtain a matching that is more popular than~$M^*$.

Hence let us assume that $a_0$ is matched in~$M^*$. Consider the alternating path 
$M^*(a_0)$-$a_0$-$b_1$-$a_1$-$b_2$ with respect to~$M^*$: this has two consecutive non-matching 
edges that are labeled~$+1$. Thus it follows from our construction of $L_1,L_2,L_3$ that 
$M^*(a_0) \in L_3$, $b_1 \in L_2$, and $b_2 \in L_1$. 
This contradicts our assumption that $b_1 \in L_3$. This finishes the proof of (1).

\smallskip

We first state the following claim, which will be used in the proof of (2). We will assume Claim~\ref{lem:alt-path}
and finish the proof of (2) and then prove this claim.
\begin{new-claim}
\label{lem:alt-path}
If $a \in \Nbr(L_3)$ and $M^*(a) = f(a)$, then there is an alternating path $\rho_a$ with respect to $M^*$ 
with $a$ as an endpoint such that either $\rho_a$ is even and  the edge labels on $\rho_a \setminus M^*$ 
are $\langle -1,+1,-1,\cdots,+1,+1\rangle$ or $\rho_a$ is odd and the edge labels on $\rho_a \setminus M^*$ are 
$\langle -1,+1,-1,\cdots,+1,-1\rangle$ where the last edge is incident on an unmatched post.
\end{new-claim}

We now use Claim~\ref{lem:alt-path} to show that $M^*(L_1) \cap \Nbr(L_3) = \emptyset$.
Posts get added to $L_1$ in steps~1 and~3 of the partition scheme. 
Let $b_1$ be a post that got added to $L_1$ in step~1 -- then there is an alternating path 
$p = a_3$-$b_2$-$a_2$-$b_1$-$a_1$ where both $(a_3,b_2)$ and $(a_2,b_1)$ are labeled~$+1$. 
We know from property~$\mathsf{(i)}$ that $b_1 = f(a_1)$; if $a_1 \in \Nbr(L_3)$, then there is 
an alternating path $\rho_{a_1}$ as described in Claim~\ref{lem:alt-path}. If the posts $a_2$ and 
$a_3$ do not appear in $\rho_{a_1}$, then 
consider the alternating path $p'$ which consists of $p$ followed by~$\rho_{a_1}$. It is easy 
to see that $M^* \oplus p'$ is more popular than $M^*$: a contradiction to the popularity of~$M^*$. 

In case $a_2$ appears in $\rho_{a_1}$, then we have an alternating cycle $C$, which is $\rho_{a_1}$ 
truncated till the vertex $a_2$ followed by $a_2$-$b_1$-$a_1$. This cycle has a stretch of 
alternating $-1$ and $+1$ labeled non-matching edges along with two consecutive non-matching 
edges labeled~$+1$: these are the edge $(a_2,b_1)$ and the edge incident on $b_2$ in $\rho_{a_1}$ from a vertex 
in~$M^*(L_3)$. Thus $M^* \oplus C$ is more popular than $M^*$: a contradiction again. If $a_3$ 
appears in  $\rho_{a_1}$, then we can again construct an alternating cycle $C'$ 
(using the $a_1 \leadsto a_3$ subpath of $\rho_{a_1}$ followed by the alternating path $p$).
The matching $M^* \oplus C'$ is more popular than $M^*$ since $C'$ 
has more $+1$ labeled non-matching edges than $-1$ labeled non-matching edges. This again
contradicts the popularity of~$M^*$.

Regarding posts added to $L_1$ in step~3, we add any post $b$ to $L_1$ only after checking that 
$M^*(b) \notin \Nbr(L_3)$. This completes the proof that $M^*(L_1) \cap \Nbr(L_3) = \emptyset$. \qed
\end{proof}

\paragraph{Proof of Claim~\ref{lem:alt-path}.}
Posts are added to $L_3$ in steps 1, 2 and~3. We now study each of these cases. The set $L_3$ was initialized to the set of posts left unmatched in~$M^*$. So at the end of step~0, it is the case that every $a \in \Nbr(L_3)$ has an odd alternating path, which is in fact an edge $(a,b)$ labeled~$-1$, whose one endpoint is $a$ and the
other endpoint is an unmatched post~$b$.

Let $b_3$ be a post that got added to $L_3$ in step~1. Then there is an alternating path 
$b_3$-$a_3$-$b_2$-$a_2$-$b_1$-$a_1$ such that $(a_i,b_i) \in M^*$ for $i = 1,2,3$, and both $(a_3,b_2)$ and 
$(a_2,b_1)$ are marked~$+1$. Thus every neighbor $a \in \Nbr(\{b_3\})$ with $M^*(a) = f(a)$ has an even 
length alternating path $\rho_a = a$-$b_3$-$a_3$-$b_2$-$a_2$-$b_1$-$a_1$ where the edge labels on $\rho_a \setminus M^*$ 
are $\langle -1,+1,+1\rangle$. Note that $a \ne a_1$ -- otherwise $\rho_a$ is an alternating cycle and
$M^* \oplus \rho_a$ is more popular than~$M^*$.

Thus the claim that every $a \in \Nbr(L_3)$ with $M^*(a) = f(a)$ has a desired alternating path $\rho_a$ is true at
the end of step~1. Let $b_3$ be a post that got added to $L_3$ in step~2 and
let us assume that till the point $b_3$ gets added to $L_3$, the claim holds.
Since $b_3$ was added to $L_3$ in step~2, this was due to a $+1$ edge between $a_3 = M^*(b_3)$ 
and a post $b_2 \in L_2$ whose partner $a_2 = M^*(b_2)$ regards $b_2$ as a top post. 
The post $b_2 \in L_2$ because its partner $a_2 \in \Nbr(L_3)$. 
This means there is a desired alternating path $\rho_{a_2}$ incident on~$a_2$.
Neither $b_3$ nor $b_2$ lies on $\rho_{a_2}$ since all the posts in $\rho_{a_2}$ that belong to 
$L_2 \cup L_3$ were added to $L_2 \cup L_3$ prior to $b_2$ joining $L_2$ and $b_3$ joining~$L_3$. 
Consider any neighbor $a$ of $b_3$ that is in $\Nbr(L_3)$ because $b_3 \in L_3$ and $M^*(a) = f(a)$. 
The desired alternating path $\rho_a$ is $a$-$b_3$-$a_3$-$b_2$-$a_2$ followed by~$\rho_{a_2}$. \qed

\medskip

We will use the partition $\langle L_1,L_2,L_3 \rangle$ of $B$ 
to build the following subgraph $G' = (A\cup B, E')$ of~$G$. For each $a \in A$, 
include the following edges in~$E'$:  
\begin{itemize}
\item[(i)] if $a \notin \Nbr(L_3)$, then add the edge $(a,f(a))$ to~$E'$.
\item[(ii)] if $a$ has a neighbor of rank $\le r_a$ in $L_2$, then add the edge $(a,b)$ to $E'$, 
where $b$ is $a$'s most preferred neighbor in~$L_2$.
\item[(iii)] if $a \in \Nbr(L_3)$, then add the edge $(a,b)$ to $E'$, where $b$ is $a$'s most 
preferred neighbor in~$L_3$.
\end{itemize}

\begin{lemma}
\label{lem:main}
Every edge of the matching $M^*$ belongs to the graph~$G'$.
\end{lemma}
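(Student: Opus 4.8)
The plan is to show, for each edge $(a,b) \in M^*$, that $b$ lands in exactly the set ($L_1$, $L_2$, or $L_3$) that causes the edge $(a,b)$ to be included in $G'$ by one of rules (i), (ii), (iii). The natural way to organize this is to split on which of the three sets contains $b$, and in parallel to keep track of whether $M^*(a)$ carries a $+1$-labeled edge and whether $a \in \Nbr(L_3)$, since those are precisely the triggers in the partition scheme and in the construction of $G'$.

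First I would handle the case $b \in L_1$. By Lemma~\ref{lemma-F}(2), $M^*(L_1) \cap \Nbr(L_3) = \emptyset$, so $a = M^*(b) \notin \Nbr(L_3)$; hence rule (i) wants the edge $(a,f(a))$. It remains to argue $b = f(a)$. Posts enter $L_1$ only in steps~1 and~3 of the partition; in step~3 we add $b$ to $L_1$ only when $M^*(b)$ has no $+1$ edge incident on it, which forces $b = f(a)$ (otherwise $a$ prefers its rank-$r_a$ post, giving a $+1$ edge from $a$); in step~1 we have property~$\mathsf{(i)}$ telling us $b_1 = f(a_1)$ for exactly the relevant vertex. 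So in both sub-cases $b = f(a)$ and $(a,b)=(a,f(a)) \in E'$.

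Next, $b \in L_3$. Then $a = M^*(b) \in \Nbr(L_3)$ trivially, so rule (iii) includes the edge from $a$ to its most preferred neighbor in $L_3$; I must show $b$ is that most preferred neighbor. Here the key point — which I expect to be the main obstacle — is to rule out $a$ having a neighbor in $L_3$ strictly better than $M^*(a)$. If $a$ had a $+1$ edge to some $b' \in L_3$, then following the construction one gets a forbidden alternating path: $b'$ being in $L_3$ means there is a stretch of alternating $\mp 1$ edges hanging off $M^*(b')$ ending in two consecutive $+1$'s (or ending at an unmatched post), and prepending $a$-$b$-$M^*(b)$ $\dots$ to it, using the $+1$ edge $(a,b')$, produces three consecutive non-matching $+1$-edges or a more-popular augmenting structure, contradicting property~$\mathsf{(i)}$/$\mathsf{(ii)}$ and the popularity of $M^*$. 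This is essentially a re-run of the arguments in the proof of Lemma~\ref{lemma-F} and Claim~\ref{lem:alt-path}, so I would cite those patterns rather than redo them from scratch; the care needed is in checking that the spliced path/cycle is genuinely alternating and the $+1$/$-1$ bookkeeping beats $M^*$.

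Finally, $b \in L_2$. Posts enter $L_2$ in steps~1, 2, and~4. In every such case $M^*(b)$ has a $+1$ edge incident on it (step~1: the edge $(a_2,b_1)$ of the length-5 path; steps~2 and~4: by definition a $+1$ edge from $M^*(b)$ into $L_3$ or $L_1$), so $a = M^*(b)$ strictly prefers some neighbor to $b$, which forces $b$ to be a non-$f$-post of $a$ and in particular the rank of $b$ in $a$'s list is $\ge r_a$; combined with Lemma~\ref{lem:pop-mat} (which says $f$-posts of $a$ in $M^*$ have rank better than $r_a$) I can pin down that rank is exactly $r_a$, or at least that $b$ is a neighbor of rank $\le r_a$ in $L_2$. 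Then I need $b$ to be $a$'s \emph{most} preferred neighbor in $L_2$: if $a$ had a strictly better neighbor $b'' \in L_2$, then $b''$ has rank $< r_a$, so $b'' \in F$, and $(a,b'')$ is a $+1$ edge; chasing how $b''$ got into $L_2$ (its partner has a $+1$ edge to $L_1$ or $L_3$, or it sits on a length-5 path) again yields a forbidden $\langle +1,+1,+1\rangle$ or $\langle +1,+1,-1,+1,+1\rangle$ pattern through $a$, contradicting popularity. So $b$ is the most preferred neighbor of $a$ in $L_2$ and $(a,b) \in E'$ by rule (ii). Across all three cases every edge of $M^*$ is in $G'$, which is the claim.
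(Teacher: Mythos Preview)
Your Case~2 ($b \in L_2$) contains a genuine error. You assert that in every sub-case $M^*(b)$ carries a $+1$ edge, citing ``steps~2 and~4: by definition a $+1$ edge from $M^*(b)$ into $L_3$ or $L_1$''. But re-read step~2 of the partition: posts are added to $L_2$ there precisely when $M^*(b)$ has \emph{no} $+1$ edge incident on it (the condition is only that $M^*(b)\in\Nbr(L_3)$, and the edge into $L_3$ is labeled $-1$). So for such posts $b=f(a)$, and your subsequent claim that $b$ is a ``non-$f$-post of $a$'' with rank $\ge r_a$ is false. You also blur the distinction between $b\ne f(a)$ and $b\notin F$: having a $+1$ edge on $a$ gives only $b\ne f(a)$, which does not by itself force rank $\ge r_a$ (there can be $f$-posts of rank between $2$ and $r_a-1$ in $a$'s list). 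The paper avoids this by not casing on how $b$ entered $L_2$: it first shows $b$ is $a$'s most preferred post in $L_2$ (a $+1$ edge to some $b_1\in L_2$ would, via the partition rules, force $b_1\in L_1$ or $b\in L_3$), and then splits on $b\in F$ (use Lemma~\ref{lem:pop-mat}) versus $b\notin F$ (build a length-5 path showing $b$ must be the rank-$r_a$ post).

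Your Case~3 is also more circuitous than needed. Rather than invoking Claim~\ref{lem:alt-path}-style paths hanging off $M^*(b')$, the paper observes directly that if $b_1\in L_3$ is preferred by $a_0$ to $b_0$ then (since $L_3\cap F=\emptyset$) the partner $a_1=M^*(b_1)$ has the $+1$ edge $(a_1,f(a_1))$, producing a length-5 path $b_0$-$a_0$-$b_1$-$a_1$-$f(a_1)$-$M^*(f(a_1))$ with two consecutive $+1$ non-matching edges; step~1 of the partition then forces $b_1\in L_2$, a contradiction. Your Case~1 is fine and in fact slightly more explicit than the paper's.
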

\begin{proof}
The set $B$ has been partitioned into $L_1 \cup L_2 \cup L_3$. We will now show that for each post $b_0$ that is matched in $M^*$, the edge $(M^*(b_0),b_0)$ belongs to~$G'$. We distinguish three cases: $b_0 \in L_1$, $b_0 \in L_2$ and $b_0 \in L_3$.

\smallskip

\noindent{\em -- Case~1.} The post $b_0 \in L_1$. Hence there is no $+1$ edge incident on 
$a_0 = M^*(b_0)$, in other words, $b_0 = f(a_0)$.
Lemma~\ref{lemma-F}.2 tells us that $M^*(L_1) \cap \Nbr(L_3) = \emptyset$; 
hence $a_0$ has no neighbor in $L_3$ and by rule~(i) above, 
the edge $(a_0,f(a_0)) = (a_0,b_0)$ belongs to the edge set of~$G'$.

\smallskip

\noindent{\em -- Case~2.} Next we consider the case when $b_0 \in L_2$.
It is easy to see that $b_0$ has to be $a_0$'s most preferred post in $L_2$, where 
$a_0 = M^*(b_0)$. 
Otherwise there would have been an edge $(a_0,b_1)$ labeled~$+1$ with $b_1 \in L_2$, 
where $b_1$ is $a_0$'s most preferred post in~$L_2$. Then either $b_1 \in L_1$ or
$b_0 \in L_3$ (from how we construct the sets $L_1, L_2, L_3$), a contradiction.
We now have to show that the rank of $b_0$ in $a_0$'s preference list is $\le r_a$, 
otherwise the edge $(a_0,b_0)$ does not belong to~$G'$.

Suppose $b_0 \in F$. Since the edge $(a_0,b_0) \in M^*$, which is a popular matching, it follows 
from Lemma~\ref{lem:pop-mat} that $b_0$ is ranked better than $r_{a_0}$ in $a_0$'s preference 
list; thus the edge $(a_0,b_0)$ belongs to~$G'$. So the case left is when $b_0 \notin F$. 
If $b_0$ is not $a_0$'s most preferred post outside $F$, then there is the length-5 alternating 
path $\rho =b_0$-$a_0$-$b_1$-$a_1$-$f(a_1)$-$M^*(f(a_1))$, 
where $b_1$ is the most preferred post of $a_0$ outside $F$ and $a_1 = M^*(b_1)$.
The alternating path $\rho$ has two consecutive non-matching edges $(a_0,b_1)$ and 
$(a_1,f(a_1))$ that are labeled~$+1$. This contradicts the presence of $b_0$ in $L_2$ as such 
a post would have to be in~$L_3$.
Thus if $b_0 \notin F$, then $b_0$ has to be $a_0$'s most preferred post outside $F$, i.e.\
$b_0$ has rank $r_{a_0}$ in $a_0$'s preference list.

\smallskip

\noindent{\em -- Case 3.} We finally consider the case when the post $b_0 \in L_3$. 
We need to show that $b_0$ is the most preferred post of $a_0 = M^*(b_0)$ in~$L_3$. 
Suppose not. 
Let $b_1$ be $a_0$'s most preferred post in~$L_3$. Since $b_1 \in L_3$ while $F \cap L_3 = 
\emptyset$ (by Lemma~\ref{lemma-F}.1), we know that there is an 
edge labeled~$+1$ incident on $a_1 = M^*(b_1)$. Let this edge be~$(a_1,b_2)$ and let $a_2$ 
be~$M^*(b_2)$. So there is a length-5 alternating path $p = b_0$-$a_0$-$b_1$-$a_1$-$b_2$-$a_2$ 
where both the non-matching edges $(a_0,b_1)$ and $(a_1,b_2)$ are labeled~$+1$. This contradicts 
the presence of $b_1$ in $L_3$ as such a post would have to be in~$L_2$.
Thus $b_0$ is $a_0$'s most preferred post in~$L_3$. \qed
\end{proof}

The following lemma shows the relationship between the partition $\langle L_1,L_2,L_3\rangle$ and the partition $\langle X,Y,Z\rangle$ constructed by our algorithm that builds the graph~$H$.
\begin{lemma}
\label{lem:L1andX}
The set $X \supseteq L_1$ and the set $Z \subseteq L_3$.
\end{lemma}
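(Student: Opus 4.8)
The plan is to prove the two inclusions $X \supseteq L_1$ and $Z \subseteq L_3$ by induction on the iterations of the While-loop, tracking the sets $X_i, Y_i, Z_i$ against the fixed partition $\langle L_1, L_2, L_3 \rangle$ coming from the popular matching $M^*$. I will maintain the joint invariant ``$X_i \supseteq L_1$ and $Z_i \subseteq L_3$'' (equivalently $A \setminus \Nbr(Z_i) \supseteq A \setminus \Nbr(L_3)$, i.e.\ $\Nbr(Z_i) \subseteq \Nbr(L_3)$) and show it is preserved by each demotion rule, namely step~2 (moving isolated $f$-posts from $X$ to $Y$) and step~4 (moving even posts from $Y$ to $Z$). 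The base case is immediate: $X_0 = F \supseteq L_1$ by Lemma~\ref{lemma-F}.1 (since $F \subseteq L_1 \cup L_2$, hmm — actually I need $F \supseteq L_1$, which holds because $L_1$ consists only of posts $b$ with $b = f(M^*(b))$, so $L_1 \subseteq F$), and $Z_0 = \emptyset \subseteq L_3$.

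The heart of the argument is the inductive step, which I expect to be the main obstacle, and it splits into two parts. First, for step~2: I must show no post of $L_1$ is ever demoted from $X$. Suppose $b \in L_1 \subseteq F$ becomes isolated in $H$ during some iteration $i$. Being isolated means no applicant $a \in A \setminus \Nbr(Z_i)$ has $f(a) = b$. But $b = f(M^*(b))$, so the applicant $a_0 := M^*(b)$ has $f(a_0) = b$; hence $a_0$ must lie in $\Nbr(Z_i)$. By the induction hypothesis $Z_i \subseteq L_3$, so $a_0 = M^*(b) \in \Nbr(L_3)$, i.e.\ $M^*(L_1) \cap \Nbr(L_3) \ne \emptyset$, contradicting Lemma~\ref{lemma-F}.2. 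So step~2 never touches $L_1$, and $X_i \supseteq L_1$ is preserved.

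Second, for step~4: I must show that every post demoted from $Y$ to $Z$ already lies in $L_3$. Let $b \in Y_i$ be identified as an even post in $H_i$ (the graph in step~4 of iteration $i$). The idea is that being even in $H_i$ means there is an even-length alternating path (w.r.t.\ a maximum matching of $H_i$) from an unmatched vertex of $H_i$ to $b$; I will use the structure of $H_i$ — where each applicant in $A \setminus \Nbr(Z_i)$ keeps only its $f$-edge into $X_i$ plus possibly its best edge of rank $\le r_a$ into $Y_i$, and each applicant in $\Nbr(Z_i)$ has no $X_i$-edge — to transfer this alternating path into $G$ and derive, via the forbidden-path properties $\mathsf{(i)}$ and $\mathsf{(ii)}$ of $M^*$, that $b$ cannot sit in $L_1 \cup L_2$. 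Concretely: if $b \in L_1 \cup L_2$ then $b$ is matched in $M^*$ and (using the already-established $X_i \supseteq L_1$, so that $L_1$-posts are in $X_i$ not $Y_i$, hence $b \in L_2$), $M^*(b)$ either has no $+1$ edge (and then $b = f(M^*(b))$, forcing $b \in X_i \cup \{$something$\}$) or $M^*(b)$ has a $+1$ edge into $L_1$; in either sub-case I produce an augmenting/alternating structure in $H_i$ contradicting $b$ being even, or I produce a more-popular matching in $G$ contradicting popularity of $M^*$. The delicate point is matching up ``even in $H_i$'' — a statement about a helper graph built from the algorithm's current guess — with membership in $L_2$ — a statement about $M^*$; I will need Lemma~\ref{lem:main} (that $M^*$ lives inside the analogous graph $G'$) together with the observation that, under the induction hypothesis, $G'$ and $H_i$ share enough edges on the relevant vertices that an $A$-covering behavior of $M^*$ in $G'$ forces $b$ to be matched (hence odd/unreachable, not even) in $H_i$.

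I expect the step~4 part to be where essentially all the work lies; once it is in place, the induction closes and both $X \supseteq L_1$ and $Z \subseteq L_3$ follow at termination. A clean way to organize it is: (a) prove that every post of $L_1$ stays in $X$ forever (the easy step~2 argument above), (b) prove that $M^*$ restricted to the posts currently in $X_i \cup Y_i$ together with the edges $H_i$ would contain forms a matching saturating all real posts of $X_i \cup Y_i$ that belong to $L_1 \cup L_2$, and (c) conclude that such posts are never even, so only $L_3$-posts get demoted into $Z$. The comparison between the $H$-side Dulmage--Mendelsohn structure and the $M^*$-side $L$-partition is the conceptual crux.
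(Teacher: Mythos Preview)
Your inductive framework and the step~2 argument are correct and match the paper exactly: the paper also runs an induction over iterations, packaging the step~2 direction as ``$\bigcup_i T_i \subseteq L_3 \Rightarrow F_{k+1} \subseteq L_2$'' (its Claim~\ref{lem:F1}), which is your contrapositive observation that $M^*(b) \in A\setminus\Nbr(L_3) \subseteq A\setminus\Nbr(Z_i)$ witnesses non-isolation of any $b\in L_1$.

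For step~4, however, your sketched route has two genuine gaps. First, the inference (b)$\Rightarrow$(c) is invalid as stated: exhibiting \emph{some} matching in $H_i$ that saturates the posts of $(X_i\cup Y_i)\cap(L_1\cup L_2)$ does not make those posts odd/unreachable; an even post can be matched in one maximum matching and exposed in another. You need an argument about \emph{every} maximum matching of $H_i$, which is why the paper works component-wise. Second, claim (b) itself need not hold: for $b\in X_i\cap L_2$ with $f(M^*(b))\neq b$ (which Lemma~\ref{lem:pop-mat} allows), the edge $(M^*(b),b)$ is in $G'$ but is \emph{not} in $H_i$, since the only $X_i$-edge an applicant keeps in $H_i$ is to its own $f$-post. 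So ``$M^*$ restricted to $H_i$'' does not saturate all of $(X_i\cup Y_i)\cap(L_1\cup L_2)$.

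The paper's fix is to compare $H_k$ not with $M^*$ directly but with the intermediate graph $G'_k$ obtained by intersecting the edge sets of $H_k$ and of $G'_0$ (the part of $G'$ on $L_1\cup L_2$). It first shows that $G'_k$ retains enough edges --- all of $G'_0$'s edges on $s$-posts in $L_2$, on $L_1$, and the top-ranked edges on $f$-posts in $Y_k\cap L_2$ --- so that these posts stay odd/unreachable in $G'_k$. It then shows that every edge of $H_k$ not already in $G'_k$ is either incident on a post already odd/unreachable, or lands in a component that contains no $s$-post of $L_2$ at all (because such an edge goes through $X_k\cap L_2$, and no applicant of $G'_k$ can have neighbors in both $X_k\cap L_2$ and $Y_k\cap L_2$, nor in both $X_k\cap L_2$ and $L_1$). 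This component separation is the missing idea in your plan; without it, the extra $H_k$-edges could in principle create new alternating paths reaching posts of $L_2$. The paper also treats the base iteration $T_1$ separately (Claim~\ref{lemma-T}) via a direct tree-counting argument in $H_1$; your uniform treatment would need to absorb that case as well.
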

\begin{proof}
In our algorithm that constructs the graph $H$ and the partition $\langle X, Y, Z\rangle$, the 
set $X$ is initialized to $F$ and the set $Y$ is initialized to~$B \setminus F$. As our algorithm 
progresses, in each iteration of the While-loop, some $f$-posts get demoted from $X$ to $Y$ and similarly, 
some non-$f$-posts get demoted from $Y$ to $Z$ till there is an iteration (say, iteration $h+1$) where all 
posts in $Y$ are odd/unreachable in $H$ -- this is the last iteration of the While-loop. 

For any $1 \le k \le h+1$,  let $T_k$ (similarly, $F_k$) be the set of posts that got 
demoted from $Y$ to $Z$ (resp., $X$ to $Y$) in the $k$-th iteration of the While-loop in our algorithm. We have $T_{h+1} = \emptyset$.

Note that $F_1 = \emptyset$ since $Z$ is initialized to $\emptyset$, so
in the first iteration of our algorithm, {\em every} $f$-post $b$ has a neighbor $a \in A\setminus\Nbr(Z)$ such that $f(a) = b$. 
Thus no post is demoted from $X$ to $Y$ in the first iteration. 

The graph $H_1$ is the subgraph of $G$ where each $a \in A$ has at most two neighbors: 
its top post and when $r_a < \infty$, its neighbor of rank~$r_a$. 
Let $S$ be the set of posts in $B \setminus F$ that are odd/unreachable in the graph~$H_1$. 
The set $T_1$ is the set of even non-$f$-posts in~$H_1$,
i.e.\ $T_1 = B \setminus (F \cup S)$. 

We will use the following claims and finish the proof of this lemma
(the proofs of Claims~\ref{lemma-T}-\ref{lem:Z1} are given after the proof of Lemma~\ref{lem:L1andX}).

\begin{new-claim}
\label{lemma-T}
The set $T_1 \subseteq L_3$.
\end{new-claim}

\begin{new-claim}
\label{lem:F1}
For any $1 \le k \le h$, if $\bigcup_{i=1}^k T_i \subseteq L_3$ then $F_{k+1} \subseteq L_2$.
\end{new-claim}

\begin{new-claim}
\label{lem:Z1}
For any $2 \le k \le h$, if $\bigcup_{i=2}^k F_i \subseteq L_2$ then $T_k \subseteq L_3$. 
\end{new-claim}

Claim~\ref{lemma-T} tells us that $T_1 \subseteq L_3$. We now use Claims~\ref{lem:F1} and \ref{lem:Z1} 
alternately to conclude that for
every $1 \le k \le h$, we have $\cup_{i=2}^{k+1} F_i \subseteq L_2$ and $\cup_{i=1}^k T_i \subseteq L_3$.

\smallskip

Thus the set $Z = \cup_{i=1}^h T_i$ is a subset of $L_3$ and the set $F \setminus X = \cup_{i=2}^{h+1} F_i$ 
is a subset of~$L_2$. Since $F \setminus X \subseteq L_2$, it follows that $X \supseteq  F \setminus L_2 \, = \, L_1$. \qed
\end{proof}

\paragraph{Proof of Claim~\ref{lemma-T}.}
Any post in $T_1$ that is left unmatched in $M^*$ has to belong to~$L_3$. Similarly, any $b_0 \in T_1$ that is
matched to an applicant $a_0$ that ranks $b_0$ worse than $r_{a_0}$ has to belong to~$L_3$: this is 
because there is a length-5 alternating path $p = b_0$-$a_0$-$b_1$-$a_1$-$b_2$-$a_2$ where $b_1$ is a post of rank $r_{a_0}$ 
in $a_0$'s preference list, $a_1 = M^*(b_1)$, and $b_2 = f(a_1)$. The path $p$ has
two consecutive non-matching edges that are labeled~$+1$, so $b_0 \in L_3$.

Now consider any $b_0 \in T_1$ that is matched in $M^*$ to an applicant $a_0$ such that the rank of 
$(a_0,b_0)$ is~$r_{a_0}$. So $a_0$ is a neighbor of $b_0$ in~$H_1$.
Since $b_0$ is even in $H_1$, all the neighbors of $b_0$ in $H_1$ are odd and thus they have to be of
degree {\em exactly} 2 in $H_1$ (recall that  all applicants have degree at most 2 in $H_1$).
Thus the neighbors of these applicants are again even. Let  $C$ be the connected component containing $b_0$ 
in~$H_1$. It is easy to see that in $C$, all posts are even, all applicants are odd, and the
number of posts is more than the number of applicants. (In fact, $C$ is a tree with $b_0$ as the root and  
the number of posts in $C$ is {\em one} plus the number of applicants in~$C$.)

If $b_0 \in L_2$, then $a_0$'s other neighbor in $C$, which is $f(a_0)$, has to be in $L_1$ since
there is a $+1$ edge from $a_0$ to~$f(a_0)$. This means $f(a_0)$ is matched to an applicant $a'_0$ that ranks
it as a top post, so the applicant $a'_0$ is a neighbor of $f(a_0)$ in~$C$. There has to be another
neighbor of $a'_0$ in $C$, call this~$b_1$. The important observation is that $b_1$ cannot be in $L_3$ as that would
violate Lemma~\ref{lemma-F}.2 since $a'_0 \in M^*(L_1)$. So $b_1 \in L_2$ and this means $b_1$ is matched to an
applicant $a_1$ that ranks it $r_{a_1}$, in other words, $a_1$ is a neighbor of $b_1$ in~$C$. So $f(a_1)$ has to be in $L_1$ 
and we continue in this manner marking all $f$-posts in $C$ as elements of $L_1$ and all non-$f$-posts in $C$ as elements 
of~$L_2$. 

This means all posts in $C$ are matched to their neighbors in $C$, however this is not possible as there are more posts 
than applicants in~$C$.  This contradicts our assumption that $b_0 \in L_2$, in other words, $b_0$ has to be in~$L_3$. Thus 
$T_1 \subseteq L_3$. \qed

\paragraph{Proof of Claim~\ref{lem:F1}.}
The set $F_{k+1}$ is the set of posts that got 
demoted from $X$ to $Y$ in the $(k+1)$-th iteration of the While-loop: this means each post $b$ 
in $F_{k+1}$ had no applicant outside $\Nbr(\cup_{i=1}^k T_i)$ that regarded $b$ as an $f$-post. 
In other words, every applicant $a$ such that $f(a) = b$ belongs to $\Nbr(\cup_{i=1}^k T_i)$. 
Since $\cup_{i=1}^k T_i \subseteq L_3$, each such applicant $a$ is present in~$\Nbr(L_3)$. 

\smallskip

Let $F_{k+1} = \{b_1,\ldots,b_h\}$. For $1 \le i \le h$, let $(a_i,b_i) \in M^*$: 
if $f(a_i) = b_i$, then $b_i \in L_2$ (because $a_i \in \Nbr(L_3)$); 
else there is an edge $(a_i,f(a_i))$ that is labeled~$+1$ incident on $a_i$ and hence $b_i$ 
cannot be in~$L_1$. Thus $F_{k+1} \cap L_1 = \emptyset$, i.e.\ $F_{k+1} \subseteq L_2$ (by 
Lemma~\ref{lemma-F}.1). \qed

\paragraph{Proof of Claim~\ref{lem:Z1}.}
Let us assume that we have proved Claim~\ref{lem:Z1} for all smaller values of~$k$. That is, for $j \le k-1$, we have
shown that if $\cup_{i=2}^j F_i \subseteq L_2$ then the set $T_j \subseteq L_3$. This is indeed the case for $k = 2$
since we know $T_1 \subseteq L_3$ (by Claim~\ref{lemma-T}).
Using Claim~\ref{lem:F1} and Claim~\ref{lem:Z1} (for $j \le k-1$) alternately now, it follows that 
$T_j \subseteq L_3$ for $j \le k-1$. Thus $\cup_{i=1}^{k-1}T_i \subseteq L_3$. We will now show that $T_k$ is a subset of~$L_3$.

\smallskip

Let $H_k$ denote the graph $H$ in step~4 in the $k$-th iteration of the While-loop in our 
algorithm. This is the graph where we determine the even posts that will get demoted from 
$Y$ to~$Z$. In step~4 of the $k$-th iteration of the While-loop, the set $X = F \setminus \cup_{i=2}^k F_i$ 
(call this set $X_k$), $Z = \cup_{i=1}^{k-1}T_i$ (call this set $Z_k$), and let $Y_k$ be the set of posts 
outside~$X_k \cup Z_k$. The edge set of $H_k$ is as follows: 
\begin{itemize}
\item for each $a \in A$: if the rank of $a$'s most preferred post $b$ in $Y_k$ is $\le r_a$, then the 
edge $(a,b)$ belongs to $H_k$
\item for $a \in A \setminus\Nbr(Z_k)$: the edge~$(a,f(a))$ is also present in~$H_k$.
\end{itemize}

Let us refer to posts in $S$ 
as {\em $s$-posts}: recall that these are odd/unreachable non-$f$-posts in the graph~$H_1$.
We will now show that all $s$-posts in $L_2$ are odd/unreachable in $H_k$;
so every $s$-post that is even in $H_k$ has to be in $L_3$, in other words, $T_k \subseteq L_3$. 
Let $G'_0$ be the subgraph of $G'$ with the set of posts restricted to $L_1 \cup L_2$ 
(see Fig.~\ref{fig:last-claim}).
Consider the subgraph $G'_k$ of $G'_0$ obtained by deleting edges missing in $H_k$ from~$G'_0$.

We now show that $G'_k$ contains all edges in $G'_0$ incident on $s$-posts in~$L_2$. 
This is because any edge $(a,b)$ incident on an $s$-post $b \in L_2$ in $G'_0$ is present 
in $H_k$ also. Since the edge $(a,b)$ belongs to $G'_0$, the post $b$ has to be ranked $r_a$ 
in $a$'s preference list and there is no $f$-post in $L_2$ of rank better than $r_a$ in $a$'s 
list. If the edge $(a,b)$ does not exist in $H_k$, then it means there is some $f$-post in 
$Y_k$ that $a$ prefers to~$b$. All $f$-posts in $Y_k$ are in $\cup_{i=2}^k F_i$ and we are given 
that $\cup_{i=2}^k F_i \subseteq L_2$. Since we know there is no $f$-post in $L_2$ that $a$ 
prefers to $b$, it follows that $b$ has to be $a$'s most preferred post in $Y_k$ and so the 
edge $(a,b)$ belongs to~$H_k$. Thus $G'_k$, whose edge set is the intersection of the edge sets 
of $G'_0$ and $H_k$, contains all edges in $G'_0$ incident on $s$-posts in~$L_2$. 

\begin{figure}[h]
\centerline{\resizebox{0.27\textwidth}{!}{\input{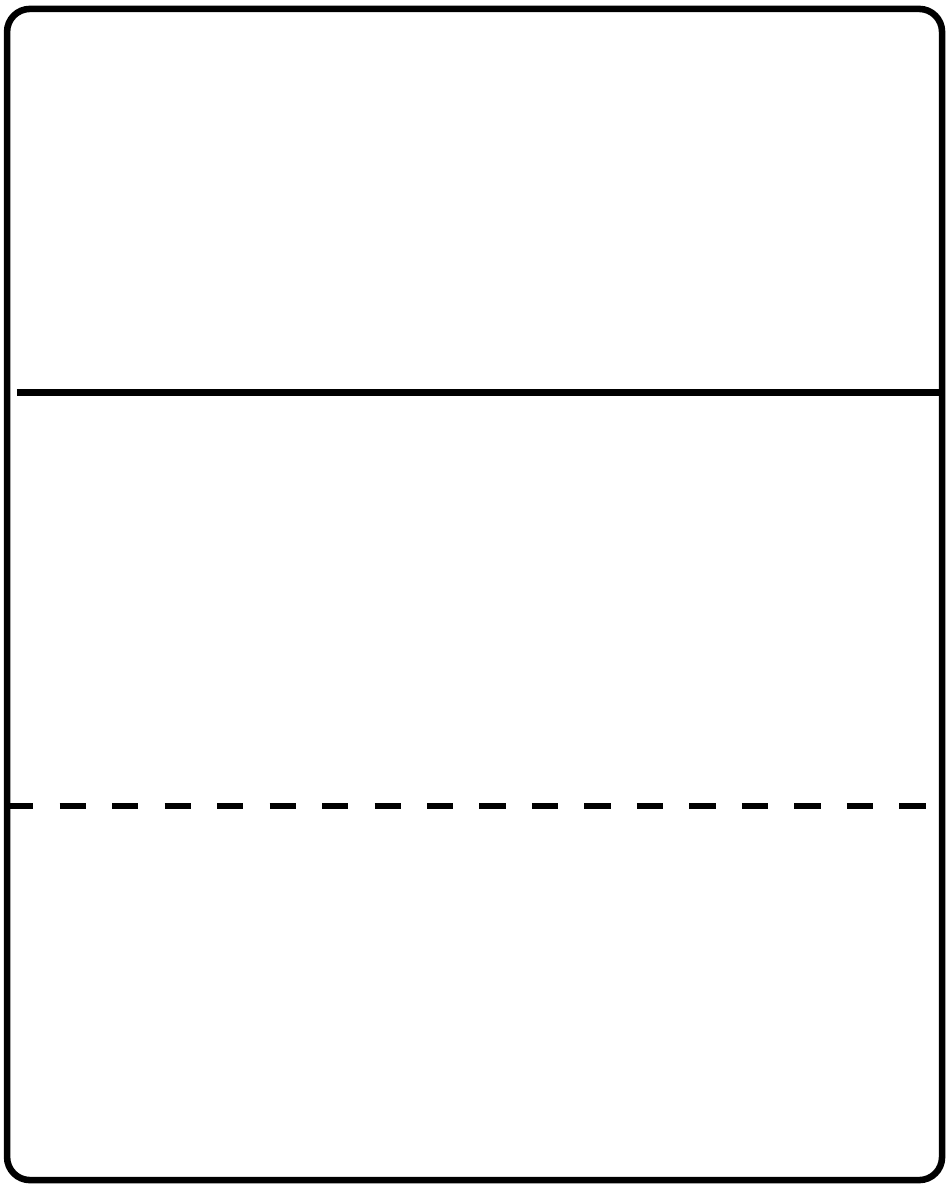_t}}}
\caption{The set of posts in $G'_0$ can be viewed as $L_1 \cup (X_k\cap L_2) \cup (Y_k\cap L_2)$. 
All $s$-posts in $L_2$ are in $Y_k\cap L_2$.}
\label{fig:last-claim}
\end{figure}

Every post in $L_1 \cup L_2$ is odd/unreachable in $G'_0$ since the matching $M^*$ restricted 
to the edge set of $G'_0$ is $(L_1 \cup L_2)$-complete. We have shown that $G'_k$ contains all 
edges in $G'_0$ incident on $s$-posts in $L_2$: thus all $s$-posts in $L_2$ 
are odd/unreachable in~$G'_k$. It is easy to see that all {\em top-ranked} edges in 
$G'_0$ incident on $f$-posts in $Y_k \cap L_2$ are also present in $G'_k$: each such post has 
a degree~1 neighbor in $G'_k$, thus all $f$-posts in $Y_k \cap L_2$ are also odd/unreachable 
in~$G'_k$.

We now claim that all posts in $L_1$ are also odd/unreachable in~$G'_k$. We first show that 
all edges incident on $L_1$ in $G'_0$ are present in~$H_k$. This is because each edge $(a,b)$ 
in $G'_0$ such that $b \in L_1$ is incident on an applicant $a \in A\setminus\Nbr(L_3)$ such 
that $b = f(a)$ and we know the graph $H_k$ has $(a,f(a))$ edges for all 
$a \in A\setminus\Nbr(Z_k) \supseteq A\setminus\Nbr(L_3)$ since 
$Z_k = \cup_{i=1}^{k-1}T_i \subseteq L_3$.

In $G'_k$, each vertex $b \in L_1$ either has a degree~1 neighbor (in which case our claim is 
true) or it has a degree~2 neighbor $a$ whose other neighbor is in $Y_k \cap L_2$, i.e.
it is not in $X_k \cap L_2$. This is because $a$ cannot have 2 neighbors in $X_k$ in the graph 
$H_k$ and we know $L_1 \subseteq X_k$ since all $f$-posts missing in $X_k$ (these are posts in 
$\cup_{i=2}^{k-1}F_i$) are absent from $L_1$ also. Since all posts in $Y_k \cap L_2$ 
are odd/unreachable in~$G'_k$, it follows that all posts in $L_1$ are also odd/unreachable 
in~$G'_k$.

\smallskip

Let us now compare the graph $H_k$ with the graph~$G'_k$. The graph $H_k$ has additional 
vertices: these are the ones in $Y_k \cap L_3$ and the new edges in $H_k$ (new relative to $G'_k$)
 belong to the following two classes: (i)~$\Nbr(L_3)\times(Y_k \cap L_3)$
and (ii)~$A \times (L_1 \cup (Y_k \cap L_2))$. This is because every edge incident on 
$X_k \cap L_2$ in $H_k$ (these are all top-ranked edges) is present in $G'_0$ as well. 

Consider any new edge $(a,b)$ in $H_k$ of type~(i), 
i.e.\ $(a,b) \in \Nbr(L_3)\times(Y_k \cap L_3)$. Since $(a,b)$ belongs to $H_k$, it must be 
the case that $a$'s most preferred neighbor in $Y_k$ is~$b$. So the post $b$ is
ranked $r_a$ in $a$'s list and $a$ has no neighbor of rank better than $r_a$ in~$Y_k$. 
Recall that $G'_0$ has no edge in  $\Nbr(L_3)\times L_1$. So the only edge that can be incident 
on $a$ in the graph $G'_k$ is an edge to $f(a)$ in $X_k \cap L_2$. 

Consider any connected component $C$ in $G'_k$ that contains an $s$-post in $L_2$: every post 
here belongs to either $L_1$ or $Y_k \cap L_2$, in other words, there is no post in 
$X_k \cap L_2$ here. This is because there is no applicant $a$ in $G'_k$ with neighbors in 
$Y_k \cap L_2$ and  $X_k \cap L_2$ as this means $a$ has two neighbors in $L_2$, which  is 
forbidden in~$G'_0$. 
Similarly, there is no applicant $a'$ in $G'_k$ with neighbors in $L_1$ and 
$X_k \cap L_2$ as this means $a$ has two neighbors in $X_k$, which  is forbidden in~$H_k$. 
Thus $C$ has no post from $X_k \cap L_2$.

So the new edges in $H_k$ of type~(i) do not touch components in $G'_k$ that contain $s$-posts 
in~$L_2$. All the new edges incident upon these components have their endpoints in 
$L_1 \cup (Y_k \cap L_2)$. These posts  are already odd/unreachable in~$G'_k$. So these posts 
remain  odd/unreachable in~$H_k$. Hence every $s$-post in $L_2$ is odd/unreachable in~$H_k$. \qed

\subsubsection*{The augmented graph $G'$.}
The matching $M^*$ need not be $A$-complete. However it would help us to assume that $M^*$ is 
$A$-complete, so we augment $M^*$ by adding $(a,\ell(a))$ edges for every $a \in A$ that
is unmatched in~$M^*$. Recall that $\ell(a)$ is the dummy last resort post of~$a$.
However the augmented matching $M^*$ need not belong to the graph $G'$ any longer -- hence
we augment $G'$ also by adding some dummy vertices and some edges as described below. 

The augmentation of $G'$ is analogous to phase~(III) of our algorithm -- we augment $G'$ as
follows: let $L_2 = L_2 \cup D$, where $D = \{\ell(a): a \in A\ \mathrm{and}\ r_a=\infty\}$;
if $\Nbr(\{a\}) \subseteq L_1$, then add $(a,\ell(a))$ to~$G'$. Thus when compared to $G'$, 
the augmented $G'$ has some new vertices (all these are dummy last resort posts) and 
some new edges -- each new edge is of the form $(a,\ell(a))$ where $\ell(a)$ is $a$'s only
neighbor in $L_2\cup L_3$.
These new edges are enough to show the following lemma.

\begin{lemma}
\label{lem:augmented}
The augmented matching $M^*$ belongs to the augmented graph~$G'$.
\end{lemma}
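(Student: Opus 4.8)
The plan is to reduce the lemma to a single statement about unmatched applicants and then prove that statement by the usual symmetric‑difference argument. First I would observe that by Lemma~\ref{lem:main} every edge of the original matching $M^*$ already lies in $G'$, hence in the augmented $G'$; so the only edges of the augmented $M^*$ that still need to be verified are the freshly added ones $(a,\ell(a))$ with $a$ unmatched in $M^*$. By the way $G'$ is augmented, such an edge is present exactly when $\Nbr(\{a\})\subseteq L_1$ --- and this containment automatically forces $r_a=\infty$, because every post of $L_1$ is the first choice of its $M^*$-partner (this is how posts enter $L_1$ in steps~1 and~3 of the construction of $\langle L_1,L_2,L_3\rangle$, using property~$\mathsf{(i)}$), so $L_1\subseteq F$ and therefore $\ell(a)\in D$ is a genuine vertex of the augmented $G'$. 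Hence everything comes down to proving: \emph{if $a$ is unmatched in $M^*$, then every neighbour of $a$ lies in $L_1$}.

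To prove this, fix an unmatched $a$ and a neighbour $b$. Since $a$ is unmatched, the edge $(a,b)$ is labelled $+1$, and $b$ must be matched in $M^*$ (otherwise $M^*\cup\{(a,b)\}$ is already strictly more popular). The remaining task is to exclude $b\in L_3$ and $b\in L_2$, which I would do by inspecting how $b$ entered its set. The two short cases are quick: if $b\in L_3$ because it is the terminal post of a length-$5$ alternating path with both non-matching edges labelled $+1$ (step~1), then prepending the edge $(a,b)$ to the reverse of that path produces an alternating path whose first three non-matching edges are all $+1$, contradicting property~$\mathsf{(i)}$; and if $b\in L_2$ while $M^*(b)$ carries a $+1$ edge, then by the construction that $+1$ edge must point to some $b_1\in L_1$ (had it pointed into $L_2$ the rules would have placed $b$ in $L_3$, or $b$ would be the middle post of a length-$5$ path whose first post lies in $L_1$), and the four-edge alternating path $a$-$b$-$M^*(b)$-$b_1$-$M^*(b_1)$ then leaves $a$ and $M^*(b)$ strictly better off while only $M^*(b_1)$ is worse off, so $M^*$ is not popular.

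The real work is in the two remaining sub-cases, $b\in L_3$ entered in step~2 and $b\in L_2$ with $b=f(M^*(b))$; here Claim~\ref{lem:alt-path} is indispensable. In both, one locates a post $\beta\in L_2$ with $\beta=f(M^*(\beta))$ and $M^*(\beta)\in\Nbr(L_3)$ --- for the step~2 case $\beta$ is the $+1$-target of $M^*(b)$, and one first checks that this target cannot be the middle post of a length-$5$ path, since that would put $b$ itself in such a path --- so Claim~\ref{lem:alt-path} hands us an alternating path $\rho$ based at $M^*(\beta)$ whose non-matching labels read $\langle-1,+1,-1,\dots\rangle$ and which terminates either in $\langle+1,+1\rangle$ (even case, last vertex a matched applicant) or in a single $-1$ on an unmatched post (odd case). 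I would then splice $\rho$ onto the short alternating path running from $a$ through $b$, $M^*(b)$, $\beta$, $M^*(\beta)$ (in the $b\in L_2$ sub-case $\beta=b$ and this path is just $a$-$b$-$M^*(b)$); exactly as in the proof of Lemma~\ref{lemma-F}.2, the $L_2\cup L_3$-vertices of $\rho$ all joined those sets before $b$ and $\beta$ did, so the splice is a simple alternating path, and otherwise it closes into an alternating cycle. The hard part --- and the step I expect to be fiddliest --- is the final vote count over the symmetric difference of $M^*$ with this path or cycle: the two (previously unmatched applicant / previously unmatched post) endpoints together contribute $+2$, while the internal votes follow the alternating $\pm1$ pattern and telescope, and one checks in each of the even/odd shapes that the total is strictly positive, contradicting the popularity of $M^*$. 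With both $b\in L_2$ and $b\in L_3$ excluded, $b\in L_1$, and since $a$ and $b$ were arbitrary, the claim --- hence the lemma --- follows.
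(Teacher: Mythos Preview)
Your proposal is correct, but it takes a somewhat more laborious route than the paper.  The paper observes first, by a one-line argument, that any applicant $a$ unmatched in $M^*$ must have $r_a=\infty$: if $r_a<\infty$, pick the rank-$r_a$ post $b\notin F$ and note that the path $a$-$b$-$M^*(b)$-$f(M^*(b))$-$M^*(f(M^*(b)))$ already contradicts popularity.  Once $r_a=\infty$, every neighbour of $a$ lies in $F\subseteq L_1\cup L_2$ (Lemma~\ref{lemma-F}.1), so the $L_3$ case evaporates for free and only the $L_2$ case remains; that case then splits exactly as you do (either $M^*(b)$ carries a $+1$ edge, or $b=f(M^*(b))$ with $M^*(b)\in\Nbr(L_3)$ and Claim~\ref{lem:alt-path} applies).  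Your approach instead attacks $b\in L_3$ head-on by a case analysis on which step put $b$ there, and only afterwards recovers $r_a=\infty$ from $\Nbr(\{a\})\subseteq L_1\subseteq F$.  This works, but it doubles the number of sub-cases and requires you to re-prove, in passing, pieces of the structural information (e.g.\ that the $+1$-target of $M^*(b)$ for a step-2 $L_3$-post was not a step-1 middle post) that the paper's ordering sidesteps.  The paper's shortcut buys a cleaner proof; yours is more self-contained in that it doesn't lean on the inclusion $L_3\cap F=\emptyset$ until the very end.
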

\begin{proof}
Before the augmentations of $G'$ and $M^*$, the matching $M^*$ belonged to the graph~$G'$ (by Lemma~\ref{lem:main}). We now need to 
show that if $a$ is left unmatched in $M^*$ (before augmentation), then $r_a = \infty$ and all of $a$'s neighbors belong to~$L_1$.

Suppose $a$ is left unmatched in $M^*$ and $r_a < \infty$. 
Since $r_a < \infty$, there is a post $b \notin F$ such that the post $b$ has rank $r_a$ in 
$a$'s preference list. Consider the alternating path $p = a$-$b$-$a'$-$f(a')$-$a''$, where 
$a' = M^*(b)$ and $a'' = M^*(f(a'))$. The matching $M^* \oplus p$ matches $a$ to $b$ and promotes
$a'$ to its top post $f(a')$ and leaves $a''$ unmatched.
Thus $M^* \oplus p$ is more popular than $M^*$, a contradiction.

So let us assume $r_a = \infty$ and $a$ was left unmatched in~$M^*$. Suppose $a$ has some neighbor $b_0$ outside~$L_1$. 
The post $b_0$ has to be in $F$ because $r_a = \infty$, i.e.\ $a$ has no neighbors outside~$F$. 
Since $F \subseteq L_1 \cup L_2$ (by Lemma~\ref{lemma-F}.1), it follows that $b_0 \in L_2$. Let $a_0 = M^*(b_0)$; if 
$b_0 \ne f(a_0)$, then we again have an alternating path $p = a$-$b_0$-$a_0$-$f(a_0)$-$a_1$, 
where $a_1 = M^*(f(a_0))$ such that $M^* \oplus p$ is more popular than~$M^*$.
This contradicts the popularity of~$M^*$.

So suppose  $b_0 = f(a_0)$ and $b_0 \in L_2$ because $a_0 \in \Nbr(L_3)$. We know from 
Claim~\ref{lem:alt-path} that there is a desired alternating path $\rho_{a_0}$, where
either the last two non-matching edges are labeled~$+1$ or the last post in $\rho_{a_0}$ 
is unmatched. 
Consider the alternating path $\rho$ which is the path $a$-$b_0$-$a_0$ followed by the path~$\rho_{a_0}$. 
It is easy to see that $M^* \oplus \rho$ is more popular than $M^*$, 
a contradiction to the popularity of~$M^*$. \qed
\end{proof}

Since the augmented $M^*$ is an $A$-complete matching, it follows from Lemma~\ref{lem:augmented} 
that the augmented graph $G'$ admits an $A$-complete matching. 
Theorem~\ref{last-lemma} uses Lemma~\ref{lem:L1andX} to show that if the augmented graph $G'$ 
admits an $A$-complete matching, then so does the graph $H$ constructed by our algorithm. 

\begin{theorem}
\label{last-lemma}
If $H$ does not admit an $A$-complete matching, then the augmented graph $G'$ cannot admit an 
$A$-complete matching.
\end{theorem}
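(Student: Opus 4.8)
The plan is to show the contrapositive directly: assuming the augmented graph $G'$ admits an $A$-complete matching, I will prove that $H$ also admits one. The key structural fact connecting the two graphs is Lemma~\ref{lem:L1andX}, which tells us $L_1 \subseteq X$ and $Z \subseteq L_3$, hence $Y \supseteq B\setminus(X\cup L_3)$ and in particular $Y \supseteq L_2 \setminus X$ (after accounting for dummy posts, which are added to both $L_2$ and $Y$ in the respective augmentation/phase~(III) steps). I would first set up a careful comparison of the edge sets of $G'$ (augmented) and $H$ on a vertex-by-vertex basis: for each applicant $a$, $G'$ gives $a$ either the edge $(a,f(a))$ (when $a\notin\Nbr(L_3)$) together possibly with an edge to $a$'s best neighbor in $L_2$ of rank $\le r_a$, or an edge to $a$'s best neighbor in $L_3$ (when $a\in\Nbr(L_3)$); whereas $H$ gives $a$ either the edge $(a,f(a))$ when $f(a)\in X$ plus an edge to $a$'s best post in $Y$ of rank $\le r_a$, or (when $a\in\Nbr(Z)$) an edge to $a$'s best post in $Z$ plus again the $Y$-edge. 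The goal is to take an $A$-complete matching $N$ in the augmented $G'$ and ``massage'' it, edge by edge, into an $A$-complete matching living in $H$.

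\medskip

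Concretely, I would argue that for every applicant $a$, at least one of $a$'s $G'$-edges, or a suitable substitute, survives in $H$. The main case analysis runs over where $N(a)$ lies:

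\begin{itemize}
\item If $N(a) = f(a)$ and $f(a)\in X$: the edge $(a,f(a))$ is present in $H$ (step~1 of the While-loop, or phase~(II)/(III) don't remove it), so we keep it.
\item If $N(a) = f(a)$ but $f(a)\notin X$, i.e.\ $f(a)\in Y$ (since $Z\subseteq L_3$ and $f$-posts are never in $L_3$ by Lemma~\ref{lemma-F}.1, and $f(a)\in L_1\cup L_2$ forces $f(a)\in L_2$): then $a\notin\Nbr(Z)$ must be re-examined; the edge $(a,f(a))$ is a top-ranked edge, rank $1\le r_a$, so if $f(a)$ is $a$'s most preferred post in $Y$ then step~2 of the While-loop adds $(a,f(a))$ to $H$. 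The subtle point here is ensuring $f(a)$ really is $a$'s best post in $Y$ — this should follow because any post $a$ prefers to $f(a)$ would have to be $f(a)$ itself (it's the top choice), so the $Y$-edge from $a$ in $H$ is exactly $(a,f(a))$.
\item If $N(a)$ is $a$'s best neighbor in $L_2$ of rank $\le r_a$: since $L_2\setminus X\subseteq Y$, either $N(a)\in X$ or $N(a)\in Y$. If $N(a)\in Y$, I must check it is $a$'s best post in $Y$ of rank $\le r_a$ in the final graph $H$ — any post $a$ prefers must be an $f$-post still in $X$ (hence not competing in the $Y$-slot) or must lie in $Z$; but if $a\in\Nbr(Z)$, then $a$'s $G'$-edge would have gone to $L_3$, not $L_2$, contradiction, so $a\notin\Nbr(Z)$ and the relevant substitute edge in $H$ is exactly $(a,\text{best }Y\text{-post})$, which is ranked no worse than $N(a)$.
\item If $N(a)$ is $a$'s best neighbor in $L_3$ (so $a\in\Nbr(L_3)$): here I would use that $a$'s best post in $Z$ is what $H$ offers in phase~(II). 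Since $Z\subseteq L_3$, $a$'s best post in $Z$ is ranked no better than $a$'s best post in $L_3 = N(a)$; but $a\in\Nbr(Z)$ need not hold just because $a\in\Nbr(L_3)$ — this is exactly the gap. If $a\notin\Nbr(Z)$, then in $H$ applicant $a$ has only its $X$-edge $(a,f(a))$ and its $Y$-edge; I'd need to reroute. This is where an augmenting-path / alternating-path argument over the combined structure, rather than a purely local edge-swap, becomes necessary.
\end{itemize}

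\medskip

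So the cleanest route is probably not a naive local substitution but a \emph{global} argument: restrict the $A$-complete matching $N$ in augmented $G'$ to the components, and show that the bipartite graph $H$ contains, for the relevant sets, enough edges to reconstruct an $A$-complete matching via Hall's condition or via the Dulmage–Mendelsohn structure already developed. Specifically, I would show that $H$ restricted to $A\setminus\Nbr(Z)$ and the posts $X\cup Y$, together with $H$ restricted to $\Nbr(Z)$ and $Z\cup Y$, dominates the corresponding pieces of augmented $G'$, using $X\supseteq L_1$, $Z\subseteq L_3$, and the fact that every $G'$-edge of an applicant lands in $X\cup Y \cup Z$ with the $H$-edge being at least as preferred in the slot where it matters. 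The main obstacle, as flagged in the case analysis, is handling applicants in $\Nbr(L_3)\setminus\Nbr(Z)$: these are matched in $N$ to posts in $L_3 \setminus Z$, posts which live in $Y$ in $H$'s partition; I expect one shows such a post, being a non-$f$-post that got classified into $L_3$ but never demoted into $Z$ by the algorithm, is odd/unreachable in the appropriate subgraph of $H$, so the $A$-complete matching guaranteed by $M^*$'s structure can be rerouted to avoid needing these applicants matched through $Z$. Pinning down that rerouting — likely via an alternating path in $N\oplus(\text{a maximum matching of }H)$ and a parity/counting argument analogous to Lemma~\ref{lem:1-3} — will be the technical heart of the proof.
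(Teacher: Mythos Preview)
Your plan correctly isolates the contrapositive strategy and the structural containments $L_1\subseteq X$, $Z\subseteq L_3$, and you are right to abandon the local edge-by-edge substitution. But there is a concrete gap: you locate only \emph{one} of the two problematic regions. Besides applicants in $\Nbr(L_3)\setminus\Nbr(Z)$ (whose $G'$-partners land in $Y\cap L_3$), there is the case where $N(a)$ is $a$'s best $L_2$-neighbor and $N(a)\in X\cap L_2$. Such an $N(a)$ is an $f$-post (since $X\subseteq F$) but need not equal $f(a)$ --- for instance when $f(a)\in L_1$ and $a$'s second choice lies in $X\cap L_2$. The edge $(a,N(a))$ is then absent from $H$, because $H$'s only edges into $X$ are top-ranked edges $(a,f(a))$. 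So the full set of posts on which $G'$ may have edges missing from $H$ is $(X\cap L_2)\cup(Y\cap L_3)$, and both halves have to be handled. (A smaller slip: your claim that ``if $a\in\Nbr(Z)$ then $a$'s $G'$-edge would have gone to $L_3$, not $L_2$'' is false --- rule~(ii) in the construction of $G'$ fires independently of rule~(iii), so $a$ can carry both an $L_2$-edge and an $L_3$-edge.)

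Your closing idea --- rerouting via alternating paths in $N\oplus(\text{max matching of }H)$ with a parity count ``analogous to Lemma~\ref{lem:1-3}'' --- is not the right mechanism and is too vague to be a proof. The paper avoids transforming $N$ altogether. It defines $G_1$ to have edge set equal to the \emph{intersection} of the edge sets of $G'$ and $H$, proves that every $G'$-edge missing from $H$ has its post-endpoint in $(X\cap L_2)\cup(Y\cap L_3)$, and then shows that all posts in $(X\cap L_2)\cup(Y\cap L_3)$ are odd/unreachable in $G_1$. Since augmenting a bipartite graph by edges incident only on odd/unreachable vertices cannot increase the maximum matching size (Dulmage--Mendelsohn), $G'$ and $G_1$ have equal maximum matching size, and $G_1\subseteq H$ gives the comparison. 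The odd/unreachable claim itself is proved without reference to $N$ or $M^*$: one compares the relevant components of $G_1$ with the subgraph of $H$ on real posts in $X\cup Y$, where the algorithm's exit condition already guarantees every post is odd/unreachable, and checks that no edges on $(X\cap L_2)\cup(Y\cap L_3)$ are lost in passing from that subgraph to $G_1$.
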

\begin{proof}
We will use $G'$ to refer to the {\em augmented} graph $G'$ in this proof.
The rules for adding edges in $H$ and in $G'$ are exactly the same -- the only difference is 
in the partition $\langle X, Y, Z\rangle$ on which $H$ is based vs the partition 
$\langle L_1, L_2, L_3\rangle$ on which $G'$ is based. If $\langle X, Y, Z\rangle = 
\langle L_1, L_2, L_3\rangle$, then the graphs $H$ and $G'$ are exactly the same. 

\begin{figure}[h]
\centerline{\resizebox{0.65\textwidth}{!}{\input{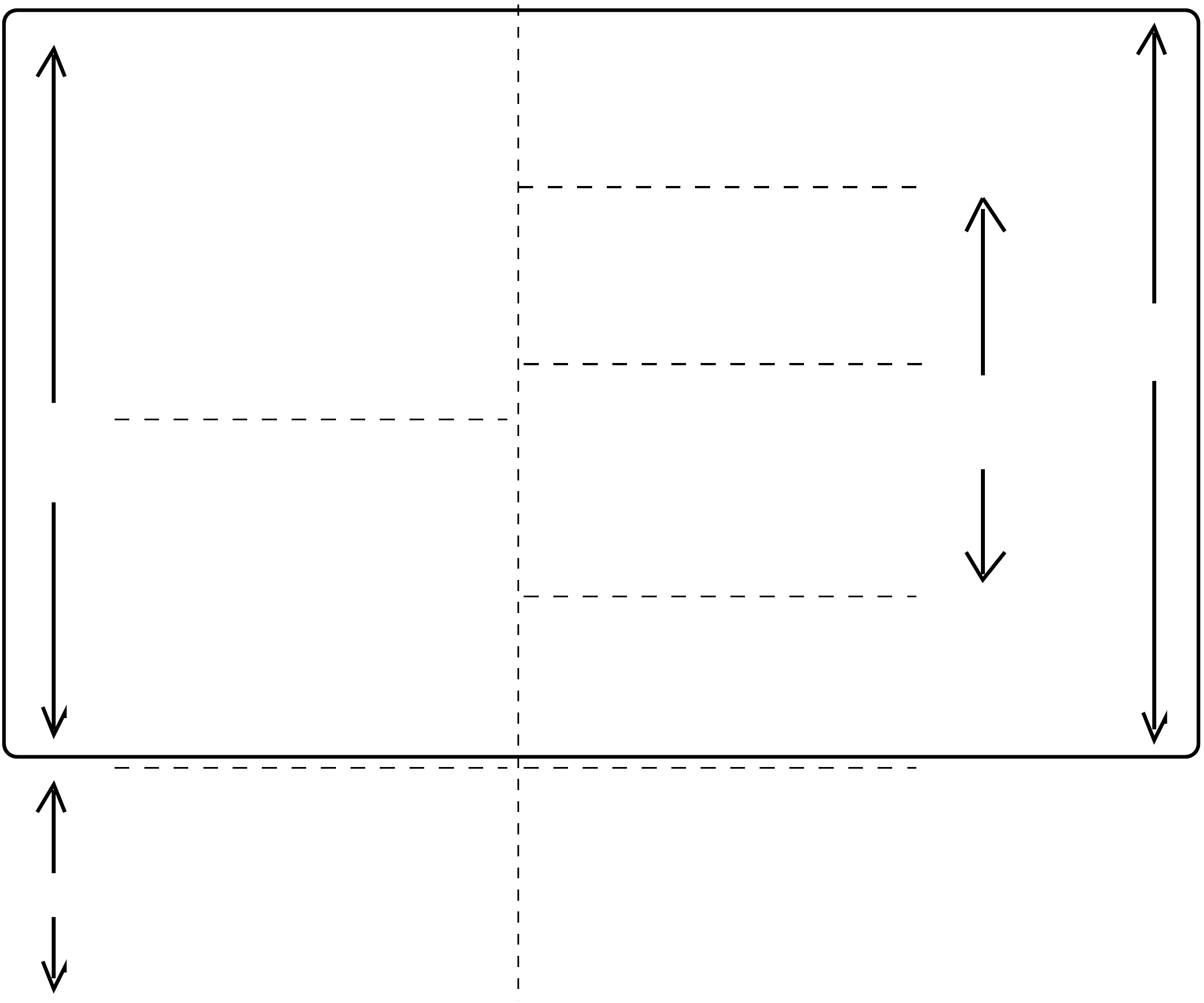_t}}}
\caption{The part of $G'$ inside the box will be called~$G''$. 
The graph $G'$ has no edge between any applicant in $A'$ and any post in~$Z$.}
\label{fig:last-thm}
\end{figure}

Fig.~\ref{fig:last-thm} denotes how the partition $\langle X,Y,Z\rangle$ can be
modified to the partition $\langle L_1,L_2,L_3\rangle$. We know from  Lemma~\ref{lem:L1andX} 
that $X \supseteq L_1$ and $Z \subseteq L_3$.
Consider the subgraph $G''$ of $G'$ induced on the vertex set 
$A' = (A \setminus \Nbr(Z)) \cup (\Nbr(Z)\cap\Nbr_H(Y \cap L_3))$ and $B' = X \cup Y$. 
This is the part bounded by the box in Fig.~\ref{fig:last-thm}. 
In our analysis, we can essentially separate $G'$ into $G''$ and the part outside $G''$ 
due to the following claim that says $G'$ has no edges between $A'$ and~$Z$.

\begin{new-claim}
\label{claim1}
$G'$ has no edge $(a,b)$ where $a \in A'$ and $b \in Z$. 
\end{new-claim}
\begin{proof}
Any applicant $a \in A'$ has to belong to either $A \setminus \Nbr(Z)$ or to 
$\Nbr(Z)\cap\Nbr_H(Y\cap L_3)$ (see Fig.~\ref{fig:last-thm}). 
There is obviously no edge in $G$ between a vertex in $A \setminus \Nbr(Z)$ and any vertex in~$Z$. So suppose $a \in \Nbr(Z)\cap\Nbr_H(Y \cap L_3)$. For $b \in L_3$, if the edge $(a,b)$
is in $G'$, then $b$ has to be $a$'s most preferred post in~$L_3$. We will now show that 
$b \in Y \cap L_3$, equivalently $b \notin Z$. 
Thus $G'$ has no edge $(a,b)$ where $a \in A'$ and $b \in Z$.

Since $a\in \Nbr_H(Y \cap L_3)$, the graph $H$ contains an edge between $a$ and some 
$b' \in  Y \cap L_3$. Recall that an element of $Y \cap L_3$ is a real post in~$Y$. 
By the rules of including edges in $H$, it follows that the rank of $b'$ in $a$'s preference 
list is $\le r_a$. The entire set $L_3$ cannot contain any post of rank better than $r_a$ for any 
$a \in A$ since any post of rank better than $r_a$ in $a$'s list belongs to $F$ while 
$L_3 \cap F = \emptyset$ (by Lemma~\ref{lemma-F}.1). So $b'$ has rank $r_a$ in $a$'s list.
Thus $a$'s most preferred neighbor in $L_3$ belongs to $Y \cap L_3$. \qed
\end{proof}

Let $G_0$ be the subgraph of $G''$ obtained by deleting from $G''$ the edges that are absent 
in~$H$. Thus $G_0$ is a subgraph of both $G'$ and~$H$. The following claim (whose proof is given 
after the proof of Theorem~\ref{last-lemma}) will be useful to us.

\begin{new-claim}
\label{claim2}
All posts in $(X \cap L_2) \cup (Y \cap L_3)$ are odd/unreachable in~$G_0$.
Moreover, every edge $(a,b)$ in $G'$ that is missing in $H$ satisfies 
$b \in (X \cap L_2) \cup (Y \cap L_3)$.
\end{new-claim}

Consider the graph $G_1$ whose edge set is the intersection of the edge sets of $G'$
and~$H$. Equivalently, $G_1$ can be constructed by adding to the edge set of $G_0$,
the edges incident on $A'' = \Nbr(Z)\setminus\Nbr_H(Y \cap L_3)$ that are present 
in both $G'$ and $H$ (see Fig.~\ref{fig:last-thm}).
This is due to the fact that $G'$ has no edge in $A' \times Z$.  

We claim that all posts in $(X \cap L_2) \cup (Y \cap L_3)$ are odd/unreachable in~$G_1$. This is because Claim~\ref{claim2} tells us that each post in this set is 
odd/unreachable in $G_0$ and due to the absence of $A' \times Z$ edges in $G'$, the graph 
$G_1$ has no {\em new} edge (new when compared to $G_0$) incident on the set $A'$ of 
applicants in~$G_0$. Hence all posts in $(X \cap L_2) \cup (Y \cap L_3)$ remain 
odd/unreachable in~$G_1$.

Claim~\ref{claim2} also tells us that all edges in $G'$ that are missing in $H$ are incident 
on posts in $(X \cap L_2) \cup (Y \cap L_3)$. We know that all these posts are 
odd/unreachable in $G_1$, hence $G'$ has no {\em new} edge (new when compared to $G_1$) on 
posts that are {\em even} in~$G_1$. Thus the size of a maximum matching in $G'$ equals the 
size of a maximum matching in~$G_1$. This is at most the size of a maximum matching in $H$, 
since $G_1$ is a subgraph of~$H$. 
Hence if $H$ has no $A$-complete matching, then neither does~$G'$. \qed
\end{proof}

\paragraph{Proof of Claim~\ref{claim2}.}
We will now show that all posts in $(X \cap L_2) \cup (Y \cap L_3)$ are odd/unreachable in~$G_0$.
Let $a$ be an applicant with degree~2 in the graph $G_0$, let $b_1$ and $b_2$ be the two 
neighbors of $a$, where $b_1$ is the more preferred neighbor of~$a$. We claim either 
(i)~$b_1 \in X \cap L_2$ and $b_2 \in Y \cap L_3$ or
(ii)~$b_1 \in L_1$ and $b_2 \in Y \cap L_2$.
This is because of the following.

\begin{itemize}
\item There is no applicant in $G_0$ with edges to both a post in $L_1$ and post in 
$X \cap L_2$. If there was such an applicant $a$, then $a$ would have two
neighbors in the set~$X$, which is forbidden in~$H$. Recall that
any edge in $G_0$ is an edge in $H$ as well. 
\item There is no applicant in $G_0$ with edges to both a post in $X \cap L_2$ and a post in 
$Y \cap L_2$. If there was such an applicant $a$, then $a$ would have two
neighbors in the set~$L_2$, which is forbidden in~$G'$. Recall that
any edge in $G_0$ is an edge in $G'$ as well. 
\item There is no applicant in $G_0$ with edges to both a post in $Y \cap L_2$ and a post in 
$Y \cap L_3$. If there was such an applicant $a$, then $a$ would have two
neighbors in the set~$Y$, which is forbidden in~$H$.
\item There is no applicant $a$ in $G_0$ with edges to both a post in $L_1$ and a post in 
$Y \cap L_3$. This is because $G'$ cannot contain such a pair of edges as it is only applicants 
in $A \setminus \Nbr(L_3)$ that are adjacent to posts in~$L_1$.
\end{itemize}

Thus in the graph $G_0$, vertices in  $(X \cap L_2) \cup (Y \cap L_3)$ and those in 
$L_1 \cup (Y \cap L_2)$ belong to different connected components. 
Note that all dummy posts belong to $Y \cap L_2$. So none of these posts belongs to any
connected component in $G_0$ that contains vertices in  $(X \cap L_2) \cup (Y \cap L_3)$.
Consider the subgraph $H'$ of $H$,  obtained by restricting the set of posts in $H$ to 
{\em real} posts in $X \cup Y$. All real posts in $X \cup Y$ are odd/unreachable in~$H'$.  
Since $(X \cap L_2) \cup (Y \cap L_3)$ consists of real posts, all these posts are 
odd/unreachable in $H'$.

\smallskip

We now claim that all posts in $(X \cap L_2) \cup (Y \cap L_3)$ remain odd/unreachable in~$G_0$. 
In the first place, every edge $(a_0,b_0)$ in $H'$ incident on any vertex $b_0 \in Y \cap L_3$ 
is present in $G''$ as well. This is because  $a_0 \in A'$ and if $b_0 \in Y \cap L_3$ is the 
most preferred post in $Y$ for applicant $a_0$, then the rank of $b_0$ in $a_0$'s preference 
list is $r_{a_0}$ and thus $b_0$ is also $a_0$'s most preferred post in $L_3$, so the edge 
$(a_0,b_0)$ belongs to the graph~$G''$. Similarly every edge $(a_1,b_1)$ in $H'$ incident on 
any post $b_1 \in X \cap L_2$ is present in $G''$ as well -- this is  because $a_1 \in A'$ 
and $b_1$ has to be $f(a_1)$ for the edge $(a_1,b_1)$ to exist in~$H'$. Thus $b_1$ is also 
$a_1$'s most preferred post in~$L_2$. Hence all edges in $H'$ incident on posts in 
$(X \cap L_2) \cup (Y \cap L_3)$ are present in~$G_0$.

Let $b$ be any post in $(X \cap L_2) \cup (Y \cap L_3)$. The connected component in $G_0$ that 
contains $b$ can be obtained by taking the connected component containing $b$ in $H'$ and 
deleting all vertices in $L_1 \cup (Y \cap L_2)$ from this component. Since no edge incident on 
$b$ has been deleted here and because $b$ is odd/unreachable in the starting component, it 
follows that $b$ is odd/unreachable in~$G_0$.

\smallskip

We will now show the second part of Claim~\ref{claim2}: every edge $(a,b)$ in $G'$ that is 
missing in $H$ satisfies $b \in (X \cap L_2) \cup (Y \cap L_3)$.
We partitioned the set $B$ of posts into five sets (refer to Fig.~\ref{fig:last-thm}). These are 
$L_1$, $X \cap L_2$, $Y \cap L_2$, $L_3 \cap Z$, and~$Z$. We will now show that every 
edge in $G'$ that is incident on $L_1 \cup (Y \cap L_2) \cup Z$ is present in $H$ also.
\begin{itemize}
\item Any edge $(a,b)$ in $G'$ where $b \in L_1$ is such that $f(a) = b$ and $a \in A \setminus \Nbr(L_3)$.
Since $L_3 \supseteq Z$ (by Lemma~\ref{lem:L1andX}), this means $a \in  A \setminus \Nbr(Z)$. Thus $H$ also contains the edge~$(a,b)$.
\item Any edge $(a,b)$ in $G'$ where $b \in Y \cap L_2$ is such that $b$ is $a$'s most preferred post in $L_2$ and
the rank of $b$ in $a$'s preference list is $\le r_a$. Note that $Y \setminus L_2 = (Y \cap L_3) \subseteq B\setminus F$ 
(by Lemma~\ref{lemma-F}.1). Thus the rank of $a$'s most preferred post in $Y \setminus L_2$ is $\ge r_a$ and hence no post in 
$Y \setminus L_2$ can be preferred to~$b$. So the post $b$ is, in fact, $a$'s most preferred post in~$Y$. Thus the edge $(a,b)$ 
belongs to $H$ as well. 
\item Any edge $(a,b)$ in $G'$ where $b \in Z$ is such that $b$ is $a$'s most preferred post in~$L_3$.
Since $L_3 \supseteq Z$, this means $b$ is $a$'s most preferred post in~$Z$. Thus $H$ also contains the edge~$(a,b)$. 
\end{itemize}
Thus every edge $(a,b)$ in $G'$ that is missing in $H$ satisfies 
$b \in (X \cap L_2) \cup (Y \cap L_3)$. \qed

\medskip

Theorem~\ref{last-lemma}, along with Lemma~\ref{lem:augmented}, finishes the proof of the 
necessary part of Theorem~\ref{thm:correctness} and this completes 
the proof of correctness of our algorithm. We now analyze its running time.

Observe that we can maintain the most preferred posts in $X, Y$, and $Z$ for all 
applicants over all iterations in $O(m)$ time, where $m$ is the number of edges in $G$. 
To begin with, the most preferred non-$f$-post for all applicants can be determined in $O(m)$ time. Thereafter, whenever a post $b$ moves from $X$ to $Y$ (similarly, from $Y$ to $Z$), we charge $b$ a cost of the degree of $b$ to pay for checking if any of its neighbors now has $b$ as its most preferred post in $Y$ (resp.,~$Z$). 

Let $n$ be the number of vertices in~$G$. The number of iterations is $O(n)$ and 
the most expensive step in each iteration is finding a maximum matching in a subgraph where 
each vertex in $A$ has degree at most~2. It is easy to see that this step can
be performed in $O(n)$ time. Thus the running time of our algorithm is~$O(n^2)$.
Hence we can conclude Theorem~\ref{thm:popular} stated in Section~\ref{sec:intro}.

\smallskip

There are instances on $O(n)$ vertices and $O(n)$ edges where our algorithm takes $\Theta(n^2)$ time. 
Consider the example given in Fig.~\ref{fig:tight}: here there are $2n+1$ applicants and $2n+2$ posts and the
number of edges is~$5n+2$.
For each $1 \le i \le n$, we have $f(a_i) = f(a'_i) = f_i$ and $s_i$ is the most preferred non-$f$-post for both $a_i$ and~$a'_i$.
For $a_0$, we have $f(a_0) = f_0$ and $a_0$'s most preferred non-$f$-post is~$s_0$.

\begin{figure}[h]
\centerline{\resizebox{0.6\textwidth}{!}{\input{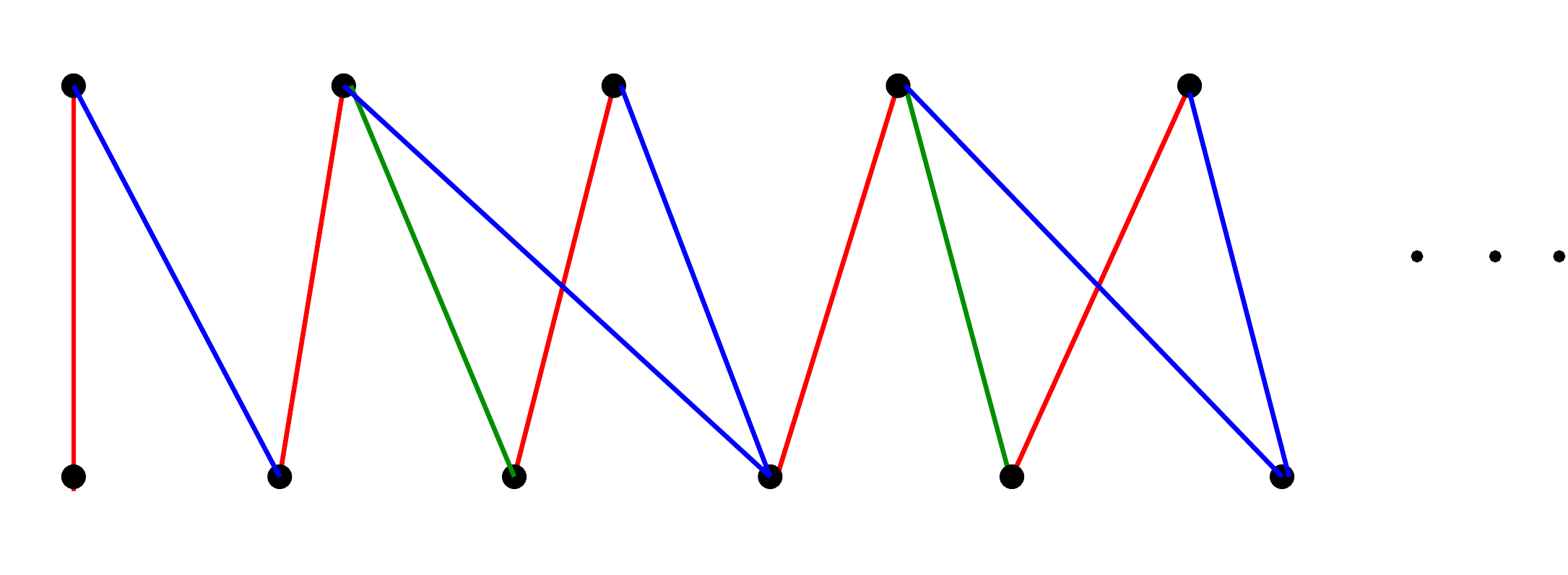_t}}}
\caption{The preferences of applicants are indicated on the edges. Our algorithm runs for $n+1$ iterations here.}
\label{fig:tight}
\end{figure}

In the starting graph $H_1$, there is exactly one post that is even in $Y_1$: this is $s_0$ and 
so $s_0$ moves from $Y_1$ to~$Z_1$. In the second iteration, $f_0$ has no applicant in $A \setminus \Nbr(Z)$ that regards it
as a top post  and this causes the demotion of $f_0$ from $X_2$ to~$Y_2$. 
Now the post $f_0$ is the most preferred post in $Y_2$ for $a_1$ and this makes
$s_1$ even in $Y_2$ and causes $s_1$ to move from $Y_2$ to~$Z_2$. 

This makes both $a_1$ and $a'_1$ belong to $\Nbr(Z)$ and hence $f_1$ gets isolated in $H_3$ and so $f_1$ moves from $X_3$ to~$Y_3$. Now $f_1$ becomes the most preferred post in $Y_3$ for $a_2$ and this causes $s_2$ to move from $Y_3$ to $Z_3$ and so on. 
Thus our algorithm runs for $n+1$ iterations. This instance admits popular matchings, for instance, 
$\{(a_0,f_0),(a_i,f_i),(a'_i,s_i):\, 1\le i \le n\}$ is a popular matching here.

\section{Our $\mathsf{NP}$-hardness result}
\label{sec:hardness}
Given a matching $M$ in $G = (A \cup B,E)$, it was shown in~\cite{BIM09} that $M$ can be tested for popularity in 
$O(\sqrt{|V|}\cdot|E|)$ time (even in the presence of ties), where $|V| = |A \cup B|$. We now show the $\mathsf{NP}$-hardness 
of the 2-sided popular matching problem in $G$ with 1-sided ties using the \textsc{(2,2)-e3-sat} problem.  

Recall that the \textsc{(2,2)-e3-sat} problem takes as its input a Boolean formula $\mathcal{I}$ in CNF, where each clause contains 
three literals and every variable appears exactly twice in unnegated form and exactly twice in negated form in the clauses. The problem 
is to determine if $\mathcal{I}$ is satisfiable or not and this problem is $\mathsf{NP}$-complete~\cite{BKS03}.

\paragraph{Constructing a popular matching instance $G = (A \cup B,E)$ from~$\mathcal{I}$.}
Let $\mathcal{I}$ have $m$ clauses and $n$ variables. The instance $G$ constructed consists of $n$ variable gadgets, $m$ clause gadgets, and some interconnecting edges between these, see Fig.~\ref{fi:pop_np}. A \emph{variable gadget} representing variable $v_j$, for $1\leq j\leq n$, is a 4-cycle on vertices 
$a_{j_1},  b_{j_1}, a_{j_2}$, and~$b_{j_2}$, where $a_{j_1},a_{j_2}\in A$ and $b_{j_1},b_{j_2}\in B$. 
A \emph{clause gadget} representing clause $C_i$, for $1 \le i \leq m$, is a subdivision graph of a claw. Its 
edges are divided into three classes: $c_i \in B$ is at the center, the neighbors of $c_i$ are $x_{i_1}, x_{i_2}, x_{i_3} \in A$, and 
finally, each of $x_{i_1}, x_{i_2}, x_{i_3}$ is adjacent to its respective copy in~$\mathcal{Y}_i = \{y_{i_1}, y_{i_2},y_{i_3}\}$, where 
$\mathcal{Y}_i \subseteq B$.

A vertex in $\mathcal{Y}_i$ represents an appearance of a variable. For instance, $y_{3_1}$ is the first literal of the third clause. Each of 
the vertices in $\mathcal{Y}_i$ is connected to a vertex in the variable gadget via an \emph{interconnecting edge}. Vertex $y_{i_k}$ is 
connected to the gadget standing for variable $j$ if the $k$-th literal of the $i$th clause is either $v_j$ or~$\neg v_j$. If it is $v_j$,
then the interconnecting edge ends at $a_{j_1}$, else at~$a_{j_2}$. The preferences of this instance can be seen in 
Fig.~\ref{fi:pop_np}. The constructed graph trivially satisfies both conditions claimed in Section~\ref{sec:intro}, i.e.\ 
every vertex in $A$ has a strict preference list of length 2 or 4 and every vertex in $B$ has either a strict preference list of length~2 
or a single tie of length~2 or 3 as a preference list.
\tikzstyle{vertex} = [circle, draw=black, fill=black, inner sep=0pt,  minimum size=5pt]
\tikzstyle{edgelabel} = [circle, fill=white, inner sep=0pt,  minimum size=15pt]
\begin{center}
\begin{figure}[h!]
\centering
	\pgfmathsetmacro{\d}{3}
	\pgfmathsetmacro{\b}{5}
	\pgfmathsetmacro{\k}{10}
	\pgfmathsetmacro{\c}{2}
\begin{minipage}{0.45\textwidth}
\centering
\begin{tikzpicture}[scale=0.85, transform shape]

	\node[vertex, label=below:$c_{i_1}$] (c1) at ($(0,0) + (-\k, 1)$) {};
	\node[vertex, label=below:$x_{i_1}$] (x11) at ($(c1) + (\d, \c)$) {};
	\node[vertex, label=below:$x_{i_2}$] (x12) at ($(c1) + (\d, 0)$) {};
	\node[vertex, label=below:$x_{i_3}$] (x13) at ($(c1) + (\d, -\c)$) {};
	\node[vertex, label=below:$y_{i_1}$] (y11) at ($(x11) + (\d, 0)$) {};
	\node[vertex, label=below:$y_{i_2}$] (y12) at ($(x12) + (\d, 0)$) {};
	\node[vertex, label=below:$y_{i_3}$] (y13) at ($(x13) + (\d, 0)$) {};
	
	\draw [] (c1) -- node[edgelabel, near start] {1} node[edgelabel, near end] {1} (x11);
	\draw [] (c1) -- node[edgelabel, near start] {1} node[edgelabel, near end] {1} (x12);
	\draw [] (c1) -- node[edgelabel, near start] {1} node[edgelabel, near end] {1} (x13);
	
	\draw [] (x11) -- node[edgelabel, near start] {2} node[edgelabel, near end] {1} (y11);
	\draw [] (x12) -- node[edgelabel, near start] {2} node[edgelabel, near end] {1} (y12);
	\draw [] (x13) -- node[edgelabel, near start] {2} node[edgelabel, near end] {1} (y13);
	
	\draw [dotted] (y11) -- node[edgelabel] {2} ($(y11)+(1,1)$);
	\draw [dotted] (y12) -- node[edgelabel] {2} ($(y12)+(1,1)$);
	\draw [dotted] (y13) -- node[edgelabel] {2} ($(y13)+(1,1)$);

\end{tikzpicture}

\end{minipage}\hspace{1.22cm}\begin{minipage}{0.45\textwidth}
\centering
	\begin{tikzpicture}[scale=0.9, transform shape]
	\node[vertex, label=below:$a_{j_1}$] (A1) at (0, 0) {};
	\node[vertex, label=below:$b_{j_1}$] (B1) at (\d+1, 0) {};
	\node[vertex, label=below:$a_{j_2}$] (A2) at (0, -\d-1) {};
	\node[vertex, label=below:$b_{j_2}$] (B2) at (\d+1, -\d-1) {};
	
	\draw [] (A1) -- node[edgelabel, near start] {1} node[edgelabel, near end] {1} (B1);
	\draw [] (A1) -- node[edgelabel, near start] {4} node[edgelabel, near end] {1} (B2);
	\draw [] (A2) -- node[edgelabel, near start] {1} node[edgelabel, near end] {1} (B1);
	\draw [] (A2) -- node[edgelabel, near start] {4} node[edgelabel, near end] {1} (B2);
	
	\draw [dotted] (A1) -- node[edgelabel] {2} ($(A1)+(-1,1)$);
	\draw [dotted] (A1) -- node[edgelabel] {3} ($(A1)+(-1,-1)$);
	\draw [dotted] (A2) -- node[edgelabel] {2} ($(A2)+(-1,1)$);
	\draw [dotted] (A2) -- node[edgelabel] {3} ($(A2)+(-1,-1)$);
	
	\end{tikzpicture}
\end{minipage}

\vspace{1.5cm}

	\begin{tikzpicture}[scale=0.9, transform shape]
	
	\node[vertex] (A1) at (0, 0) {};
	\node[vertex] (B1) at (\d, 0) {};
	\node[vertex] (A2) at (0, -\d) {};
	\node[vertex] (B2) at (\d, -\d) {};
	
	\draw [ultra thick, red] (A1) -- (B1);
	\draw [] (A1) -- node [near start, fill=white]{(-1,0)} (B2) ;
	\draw [] (A2) -- node [near start, fill=white]{(+1,0)} (B1);
	\draw [ultra thick, red] (A2) -- (B2);
	\draw [dotted] (A1) -- ($(A1)+(-1,-0.5)$);
	\draw [dotted] (A2) -- ($(A2)+(-1,0.5)$);

	\node[vertex] (c2) at ($(A1) + (-\k, 0)$) {};
	\node[vertex] (x21) at ($(c2) + (\d/1.5, \c/1.5)$) {};
	\node[vertex] (x22) at ($(c2) + (\d/1.5, 0)$) {};
	\node[vertex] (x23) at ($(c2) + (\d/1.5, -\c/1.5)$) {};
	\node[vertex] (y21) at ($(x21) + (\d/1.5, 0)$) {};
	\node[vertex] (y22) at ($(x22) + (\d/1.5, 0)$) {};
	\node[vertex] (y23) at ($(x23) + (\d/1.5, 0)$) {};
	
	\draw [] (c2) -- node [above, fill=white]{(+1,0)} (x21);
	\draw [ultra thick, red] (c2) -- (x22);
	\draw [] (c2) -- node [below, fill=white]{(+1,0)} (x23);
	
	\draw [ultra thick, red] (x21) -- (y21);
	\draw []  (x22) -- node [above, fill=white]{(-1,+1)} (y22);
	\draw [ultra thick, red] (x23) -- (y23);
	
	\draw [dotted] (y21) -- ($(y21)+(1,-0.5)$);
	\draw [dotted] (y22) -- node [above, fill=white]{(+1,-1)}(A1);
	\draw [dotted] (y23) --($(y23)+(1,0.5)$);
	
	\node[vertex] (c1) at ($(A1) + (-\k, -4)$) {};
	\node[vertex] (x11) at ($(c1) + (\d/1.5, \c/1.5)$) {};
	\node[vertex] (x12) at ($(c1) + (\d/1.5, 0)$) {};
	\node[vertex] (x13) at ($(c1) + (\d/1.5, -\c/1.5)$) {};
	\node[vertex] (y11) at ($(x11) + (\d/1.5, 0)$) {};
	\node[vertex] (y12) at ($(x12) + (\d/1.5, 0)$) {};
	\node[vertex] (y13) at ($(x13) + (\d/1.5, 0)$) {};
	
	\draw [ultra thick, red] (c1) -- (x11);
	\draw [] (c1) -- node [below=3pt, near end, fill=white]{(+1,0)} (x12);
	\draw [] (c1) -- node [below, fill=white]{(+1,0)} (x13);
	
	\draw [] (x11) --  node [above, fill=white]{(-1,+1)} (y11);
	\draw [ultra thick, red] (x12) -- (y12);
	\draw [ultra thick, red] (x13) -- (y13);
	
	\draw [dotted] (y11) -- ($(y11)+(1,0.5)$);
	\draw [dotted] (y12) -- ($(y12)+(1,0.5)$);
	\draw [dotted] (y13) --  node [above, fill=white]{(-1,+1)} (A2);
\end{tikzpicture}

\caption{A clause gadget, a variable gadget, and the structure of the entire construction with a variable that appears at the second place in the first clause in unnegated form and at the third place in the second clause in negated form. The thick red matching corresponds to a true variable.}
\label{fi:pop_np}
\end{figure}
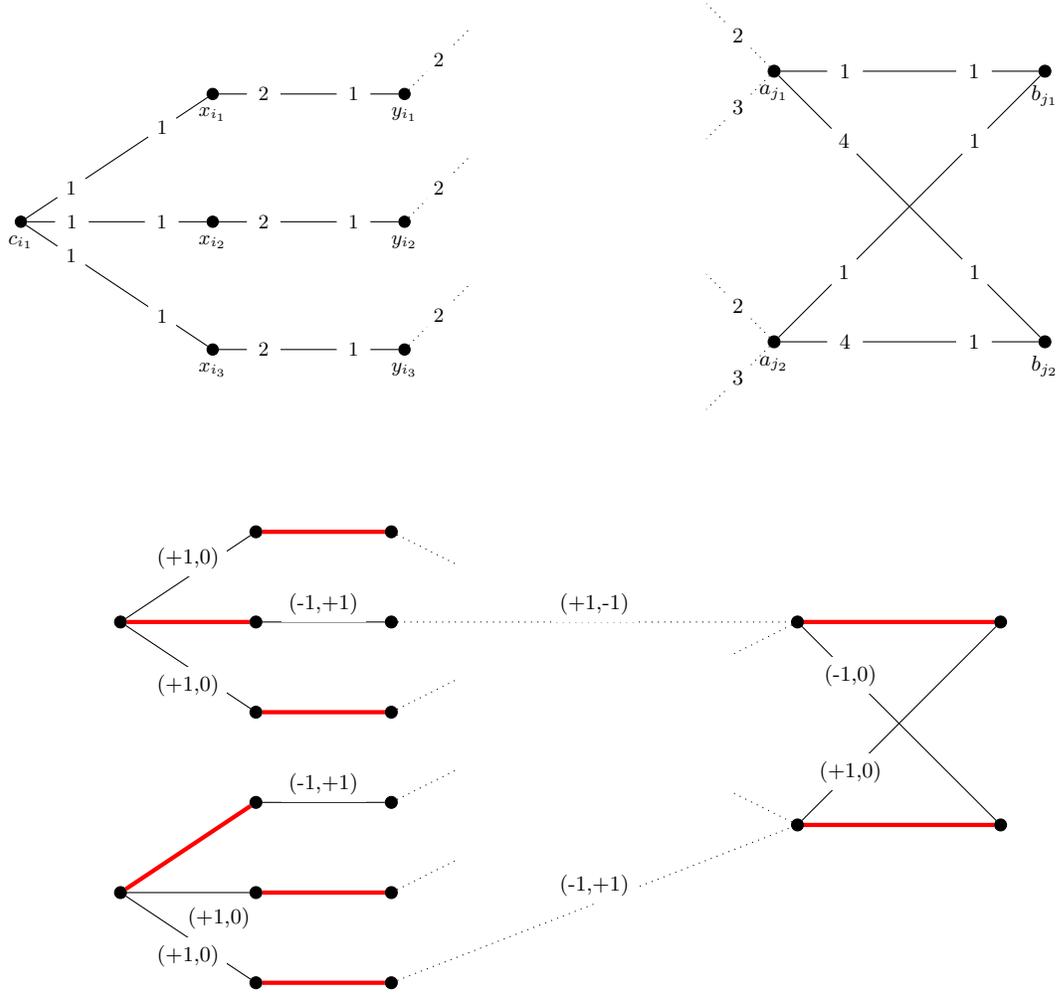
\end{center}

\vspace*{-1cm}
	
\paragraph{Constructing a truth assignment in $\mathcal{I}$, given a popular matching $M$ in~$G$.}
The graph $G$ is as described above. Claim~\ref{clm1:appb} states that any popular matching $M$ in $G$ has a certain structure.
\begin{new-claim}
\label{clm1:appb}
Any popular matching $M$ in $G$ has to obey the following properties.
\begin{itemize}
	\item $M$ avoids all interconnecting edges.
	\item $M$ is one of the two perfect matchings on any variable gadget; i.e.\ for each $j$, the edges of $M$, restricted to the gadget 
corresponding to variable $v_j$, are either (i)~$(a_{j_1},b_{j_1})$ and $(a_{j_2},b_{j_2})$, or (ii)~$(a_{j_1},b_{j_2})$ and $(a_{j_2},b_{j_1})$.
	\item $M$ leaves exactly one vertex per clause $i$ unmatched and this unmatched vertex $y_{i_k}$ is adjacent to an $a_{j_t}$ that is 
matched to~$b_{j_1}$.
\end{itemize}
\end{new-claim}

\begin{proof}
Label each edge $(a,b)$ in $G\setminus M$ by the pair $(\alpha,\beta)$, where $\alpha \in \{\pm 1\}$ is $a$'s vote for $b$ versus $M(a)$ and 
$\beta \in \{\pm 1, 0\}$ is $b$'s vote for $a$ versus~$M(b)$. Our first observation is that every $c_i$, for $1 \leq i  \leq m$, and every 
$b_{j_1}$, for $1 \leq j \leq n$, must be matched in~$M$. That is because these vertices are the top choices for each of their neighbors, 
hence if one of them is left unmatched, then there would be an edge labeled $(+1,+1)$ incident to an unmatched vertex. This contradicts the 
popularity of~$M$.
 
Assume without loss of generality that $\{(c_i, x_{i_3}), (a_{j_1}, b_{j_1})\} \subseteq M$.  Also, the edges $(x_{i_1}, y_{i_1})$ and
$(x_{i_2}, y_{i_2})$ must be in $M$, because they are the top-ranked edges of $y_{i_1}$ and $y_{i_2}$, respectively. We now claim
that $(a_{j_2}, b_{j_2}) \in M$ as well. 

Suppose $(a_{j_2}, b_{j_2}) \notin M$. Since $M$ is a maximal matching, $(a_{j_2},y_{i_k}) \in M$ 
for some~$i_k$. Based on the above described structure of the clause gadgets, the edges $(x_{j_k},c_j), (x_{j_{k+1}}, y_{j_{k+1}})$, and 
$(x_{j_{k+2}}, y_{j_{k+2}})$ are in $M$, where the subscripts are taken modulo~3. 
Consider the following augmenting path $p$ wrt~$M$: 
\[ \rho = b_{j_2} - a_{j_1} - b_{j_1} - a_{j_2} - y_{j_k} - x_{j_k} - c_j - x_{j_{k+1}} - y_{j_{k+1}}.\]
We have $M \oplus p \succ M$, which contradicts the
popularity of~$M$. Thus $(a_{j_2}, b_{j_2}) \in M$. 

An analogous argument proves that if $(a_{j_2}, b_{j_1}) \in M$ for some $j$, then $(a_{j_1}, b_{j_2})$ has to be in~$M$. 
The last observation we make is that if $y_{i_k}$ is unmatched in $M$, then its interconnecting edge leads to an $a_{j_t}$  
that is matched to~$b_{j_1}$. Otherwise $(y_{i_k}, a_{j_t})$ would be labeled $(+1,+1)$ with one vertex unmatched, a contradiction 
again to the popularity of~$M$. This finishes the proof of Claim~\ref{clm1:appb}. \qed
\end{proof}

We assign $\mathsf{true}$ to all variables $v_j$ such that $M \supseteq \{(a_{j_1},b_{j_1}),(a_{j_2},b_{j_2})\}$ and 
$\mathsf{false}$ to all variables $v_j$ such that $M \supseteq \{(a_{j_1},b_{j_2}),(a_{j_2},b_{j_1})\}$.

So the truth value of every variable is uniquely defined and all we need to show is that every clause has a true literal. Assume that in clause $C_i$, all three 
literals are false. The clause gadget has an unmatched vertex $y_{i_k}$ that is adjacent to an~$a_{j_t}$. If the literal is false, 
then $a_{j_t}$ prefers $y_{i_k}$ to $M(a_{j_t}) = b_{j_2}$ and this becomes an edge labeled $(+1,+1)$ with an unmatched end vertex -- this 
contradicts the popularity of~$M$. Hence in every clause, there is at least one true literal and so this is a satisfying assignment.

\paragraph{Constructing a popular matching in $G$, given a truth assignment in~$\mathcal{I}$.}
Here we first construct a matching $M$ in the graph $G$ as described below and then show that it is popular. Initially $M = \emptyset$.
For each $j$, where $1 \le j \le n$, if $v_j = \mathsf{true}$ in the assignment, then add $(a_{j_1}, b_{j_1})$ and $(a_{j_2}, b_{j_2})$ to $M$, 
else add $(a_{j_1}, b_{j_2})$ and $(a_{j_2}, b_{j_1})$ to~$M$. 
For each $i$, where $1 \le i \le m$, in the gadget corresponding to clause $C_i$, any true literal is chosen (say, the $k$-th literal) and 
$y_{i_k}$, representing its appearance, is left unmatched. Moreover, $(x_{i_k}, c_i), (x_{i_{k+1}}, y_{i_{k+1}})$ and $(x_{i_{k+2}}, y_{i_{k+2}})$ (where
the subscripts are taken modulo 3) are added to~$M$. No interconnecting edge appears in~$M$. This finishes the description of~$M$.

\begin{new-claim}
\label{clm2:appb}
The matching $M$ is popular in~$G$.
\end{new-claim}
\begin{proof}
Suppose $M$ is not popular. Then there is another matching $M'$ that is more popular than~$M$. This can only happen if $M \oplus M'$ contains 
a component $\rho$ such that the number of vertices in $\rho$ that prefer $M'$ to $M$ is more than those that prefer $M$ to~$M'$.
To achieve this, the matching $M'$ should contain at least one edge labeled either $(+1,+1)$ or $(+1,0)$, where we use edge labels $(\alpha,\beta)$
as described in the proof of Claim~\ref{clm1:appb}. We now analyze the cases based on the occurrences of such ``positive'' edges.

Since we started with a truth assignment, no interconnecting edge can be labeled $(+1,+1)$. In fact, it is easy to check that no edge here can 
be labeled $(+1,+1)$. We now check for the occurrences of edges labeled $(+1,0)$. These can occur at two places: the edge $(a_{j_t}, b_{j_1})$ 
for any $1 \le j \le n$ and the edge $(x_{i_k},c_i)$ for any $1 \le i \le m$. 

\noindent {\em Case~1.} Suppose $(a_{j_2}, b_{j_1})$ is labeled~$(+1,0)$. This happens if $v_j$ is $\mathsf{true}$ in the truth assignment.
We start the augmenting path $\rho$ at~$(a_{j_2}, b_{j_1})$. Augmenting along the 4-cycle is not sufficient to break popularity, therefore, $a_{j_1}$ must 
be matched along one of its interconnecting edges, say $(a_{j_1},y_{i_k})$. 
\begin{itemize}
\item If $y_{i_k}$ is unmatched, consider the path $\rho = b_{j_2}$-$a_{j_2}$-$b_{j_1}$-$a_{j_1}$-$y_{i_k}$. There are two vertices ($a_{j_1}$ and $b_{j_2}$) 
that prefer $M$ to $M \oplus \rho$ and two vertices ($a_{j_2}$ and $y_{i_k}$) that prefer $M \oplus \rho$ to~$M$.
\item If $y_{i_k}$ is matched, then extend the path $\rho$ till the unmatched vertex of the $i$th variable gadget (call this $y_{i_t}$). The path
$\rho$ is described below:
\[ \rho = b_{j_2} - a_{j_2} - b_{j_1} - a_{j_1} - y_{i_k} - x_{i_k} - c_i - x_{i_t} - y_{i_t}.\]
So 4 vertices, i.e.\ $b_{j_2}$, $a_{j_1}$, $y_{i_k}$, and $x_{i_t}$, prefer
$M$ to $M \oplus \rho$ while 3 vertices, i.e.\ $a_{j_2}$, $x_{i_k}$, and $y_{i_t}$, prefer $M \oplus \rho$ to~$M$.
\end{itemize}


\noindent {\em Case~2.}
Now suppose $(x_{i_k},c_i)$ is labeled $(+1,0)$. Let us assume that this edge is $(x_{i_3},c_i)$ and suppose~$(x_{i_1},c_i) \in M$. 
Consider the alternating path $\rho = y_{i_1}$-$x_{i_1}$-$c_i$-$x_{i_3}$-$y_{i_3}$. In the matching $M \oplus \rho$, the vertices $x_{i_3}$ and
$y_{i_1}$ are better-off while $x_{i_1}$ and $y_{i_3}$ are worse-off, i.e.\ they prefer $M$ to  $M \oplus \rho$. In order to collect one
more vertex that prefers $M \oplus \rho$, let us extend this alternating path $\rho$ to include  $(a_{j_k},y_{i_3})$, the interconnecting edge 
of~$y_{i_3}$. The vertex $y_{i_3}$ still prefers $M$ to $M\oplus\rho$ since $y_{i_3}$ was paired in $M$ to its top-ranked neighbor.

Without loss of generality, let us assume that this interconnecting edge is $(a_{j_2},y_{i_3})$. 
We have two cases here: either $\{(a_{j_1}, b_{j_1}),(a_{j_2}, b_{j_2})\} \subseteq M$ or $\{(a_{j_1}, b_{j_2})$, $(a_{j_2}, b_{j_1})\} \subseteq M$. 
\begin{itemize}
\item In the first case, the path $\rho$ gets extended to $\cdots$-$a_{j_2}$-$b_{j_2}$. So $a_{j_2}$ prefers $M \oplus \rho$ to $M$, however
$b_{j_2}$ is left unmatched in $M\oplus\rho$, so $b_{j_2}$ prefers $M$ to $M\oplus\rho$.
\item In the second case, the path $\rho$ gets extended to $\cdots$-$a_{j_2}$-$b_{j_1}$-$a_{j_1}$-$b_{j_2}$. So $a_{j_1}$ prefers $M \oplus \rho$ 
to $M$, however both $a_{j_2}$ and $b_{j_2}$ prefer $M$ to $M\oplus\rho$.
\end{itemize}
We have analyzed all the cases where edges can labeled $(+1,0)$ and we showed that there is no alternating cycle or path $\rho$ containing an edge
labeled $(+1,0)$ such that $M \oplus \rho \succ M$. Thus $M$ is popular. \qed
\end{proof}
This finishes the proof of Theorem~\ref{thm:nphard} stated in Section~\ref{sec:intro}.

\subsubsection{Conclusions and open problems.} We gave an $O(n^2)$ algorithm for the popular 
matching problem in $G = (A \cup B,E)$ where vertices in $A$ have strict preference lists while 
each vertex in $B$ puts all its neighbors into a single tie and $n = |A \cup B|$.
Our algorithm needs the preference lists of vertices in $A$ to be strict and 
the complexity of the popular matching problem when ties are allowed in the preference lists of 
vertices in $A$ is currently unknown.  

When each $b \in B$ either has a single tie of length at most~3 or a strict preference list (and each $a \in A$ has a strict preference list), we showed that the popular matching problem becomes $\mathsf{NP}$-hard. The complexity of the same problem with ties of length at most~2 instead of~3 is open. 
Another open problem is to extend our algorithm to solve the popular matching problem in the many-to-one 
setting where each post $b$ has a capacity $\capac(b)$
and post $b$ prefers $M_1$ to $M_2$ if $\capac(b) \ge |M_1(b)| > |M_2(b)|$. 

\bibliographystyle{plain}

\end{document}